\newcommand{\titlename}{Improving quantum channel discrimination with resourceful states}
\begin{document}

\preprint{APS/123-QED}

\title{\titlename}
%\thanks{A footnote to the article title}%

\affiliation{%
 Quantum Information Science Research Center, Quantum ICT Research Institute, Tamagawa University,
 Machida, Tokyo 194-8610, Japan
}%

\author{Kenji Nakahira}
\affiliation{%
 Quantum Information Science Research Center, Quantum ICT Research Institute, Tamagawa University,
 Machida, Tokyo 194-8610, Japan
}%

\date{\today}% It is always \today, today,
             %  but any date may be explicitly specified

\begin{abstract}
 One of the key issues in quantum discrimination problems is understanding
 the extent of the advantages in discrimination performance
 when using resource states compared to resourceless states.
 We show that in any resource theory of states, which may not be convex,
 the extent to which the maximum average success probability can be improved
 in quantum channel discrimination problems without using auxiliary systems
 can be precisely quantified by the robustness measure.
 This result offers an intuitive operational meaning of the robustness measure
 in any convex resource theory.
 Furthermore, we demonstrate that the robustness measure can also quantify
 the improvement in channel discrimination problems that use auxiliary systems.
 Using these findings, resources can be fully characterized to achieve
 higher success probabilities than any state without the given resource
 in channel discrimination problems.
\end{abstract}

% PACS 03.67.Hk: Quantum communication
\pacs{03.67.Hk}% PACS, the Physics and Astronomy
                             % Classification Scheme.
%\keywords{Suggested keywords}%Use showkeys class option if keyword
                              %display desired
\maketitle
%\bookmarksetup{startatroot}

%\onecolumngrid

%\setcounter{tocdepth}{1}
%\tableofcontents

\section{Introduction}

Discrimination problems are fundamental in considering the operational properties
of quantum theory.
Among these, quantum channel discrimination stands out as a critical task,
focused on the precise identification of quantum channels by transmitting an input state
through them and measuring the resulting output states.
This capability is essential for a wide range of applications,
including quantum communication \cite{Gis-The-2007}, quantum sensing \cite{Deg-Rei-Cap-2017},
and quantum cryptography \cite{Gis-Rib-Tit-Zbi-2002}.
The ongoing research in this field
\cite{Aci-2001,Sac-2005,Sac-2005-EB,Li-Qiu-2008,Mat-Pia-Wat-2010,
Chi-2012,Sed-Zim-2014,Pir-Lup-2017,Puc-Paw-Kra-Kuk-2018,Pir-Lau-Lup-Per-2019,Nak-Kat-2021}
is pivotal in enhancing the accuracy of channel discrimination,
thereby significantly boosting the efficiency and security of quantum protocols.

In channel discrimination problems, appropriately selecting the state to input
into the channels directly impacts the effectiveness of the discrimination protocol.
For example, it is known that using any entangled state as the input state can improve
channel discrimination performance compared to using any separable state \cite{Pia-Wat-2009}.
While entanglement is a typical resource in quantum information processing,
various other types of quantum resources are also explored, such as
coherence\cite{Bau-Cra-Ple-2014,Win-Yan-2016,Str-Ade-Ple-2017},
superposition\cite{The-Kil-Egl-Ple-2017},
asymmetry\cite{Gou-Spe-2008,Mar-Spe-2016},
magic\cite{Vei-Mou-Got-Eme-2014,How-Cam-2017},
non-Gaussianity\cite{Gen-Par-Ban-2008,Tak-Zhu-2018,Alb-Gen-Par-Fer-2018},
non-Markovianity\cite{Wak-2017,Bha-Bha-Maj-2020},
quantum thermodynamics\cite{Bra-Hor-Opp-Ren-2013,Gou-Mul-Nar-Spe-2015},
and imaginarity\cite{Wu-Kon-Ran-Sca-2021,Xue-Guo-Li-Ye-2021}.
These resources can sometimes be treated in a unified manner as quantum resource theories.
Note that similar research has been conducted on the relationship between resourceful
measurements \cite{Car-Hei-Toi-2018,Car-Hei-Toi-2019,Skr-Sup-Cav-2019,Uol-Kra-Sha-Yu-2019,Skr-Lin-2019}
and channels \cite{Li-Bu-Liu-2020,Tak-Wan-Hay-2020,Hsi-Str-Wu-Ku-2025} and
their performance in discrimination problems.
Recently, Takagi \textit{et al.} \cite{Tak-Reg-Bu-Liu-2019} have shown that in any convex resource theory,
every resource state is useful for some channel discrimination task
compared to any resourceless state, called free state.
Continuous analysis of resource states leads to the refinement of strategies to achieve
optimal performance in various quantum systems and contributes to extensive discussions
on the strategic use of quantum resources.

A natural question arises as to how much the performance can be improved in channel discrimination tasks
using a given resource state compared to free states.
For this problem, the ratio of the maximum average success probability
in channel discrimination problems using a given state to that using the best free state
has been a focal point of research.
Studies have been conducted to quantify this ratio, and in certain resource theories
(e.g., entanglement \cite{Bae-Chr-Pia-2019},
steerability \cite{Pia-Wat-2015}, coherence \cite{Nap-Bro-Cia-Pia-2016,Bu-Sin-Fei-Pat-2017},
and asymmetry \cite{Pia-Cia-Bro-Nap-2016}), it has been shown that the supremum of this ratio,
which we call the discrimination power, can be precisely quantified by a measure
known as the (generalized) robustness (see also \cite{Tak-Reg-Bu-Liu-2019,Tak-Reg-2019}).
However, these results are limited to specific types resource theories,
and it was unclear whether similar results would hold for other resource theories.
It is also well known that in channel discrimination problems, the maximum average success probability
can generally be improved by using auxiliary systems (e.g., \cite{Wat-2018}).
Therefore, the discrimination power of channels with auxiliary systems
has also been focused.
Nevertheless, it is generally difficult to determine the discrimination power compared to the case
without auxiliary systems, and it has only been quantified in some special cases
\cite{Bae-Chr-Pia-2019,Pia-Wat-2015}.

In this paper, we address the question of whether the discrimination power
can be quantified by the robustness measure in any resource theory of states
in finite-dimensional systems, which has been mentioned as an important outstanding open question
in Ref.~\cite{Tak-Reg-Bu-Liu-2019}.
Our results provide an intuitive operational meaning of the robustness measure
in convex resource theories.
Additionally, since robustness can be formulated as the optimal value of a convex programming problem,
our results are useful for analytically or numerically determining the discrimination power.
First, we show that in channel discrimination problems without auxiliary systems
for any convex resource theory, the discrimination power of any resource state can be quantified by
the robustness.
This implies that the robustness measure can be interpreted as the discrimination power,
which is an intuitively clear operational value.
We also show that this result can be easily extended to nonconvex resource theories.
Furthermore, we demonstrate that in channel discrimination problems using auxiliary systems,
the discrimination power of any resource state can similarly be quantified by the robustness.
This result provides a necessary and sufficient condition for the maximum average success probability
of a given resource state to be higher than that of any free state.

\section{Discrimination power}

Let us consider the problem of distinguishing a collection of channels $\SetL_{n=1}^N$
from system $A$ to system $B$.
The prior probabilities are given by $\SetP_{n=1}^N$.
When a state $\rho$ is input into system $A$ and a measurement $\Pi \coloneqq \SetPi_{n=1}^N$
is performed on the output state in system $B$, the average success probability is given by
\begin{alignat}{1}
 \PS(\rho,\Pi;\SetPL) &\coloneqq \sum_{n=1}^N p_n \Tr[\Pi_n \cdot \Lambda_n(\rho)].
 \label{eq:PS_rho_Pi}
\end{alignat}
When distinguishing $\SetL_{n=1}^N$ with just one use of the channels,
any discrimination strategy can be represented using such a pair of $\rho$ and $\Pi$.
Thus, the maximum average success probability when $\rho$ is fixed is given by
\begin{alignat}{1}
 \PS(\rho;\SetPL) &\coloneqq \max_\Pi \PS(\rho,\Pi;\SetPL).
 \label{eq:PS_rho}
\end{alignat}

Let $\Den_A$ denote the set of all density matrices of system $A$.
We consider a subset $\Free$ of $\Den_A$,
the elements of which are regarded as free states.
For any state $\rho$ of $A$, the supremum ratio of the maximum average success probability
using $\rho$ to that using free states is defined as
\begin{alignat}{1}
 G(\rho,\Free) &\coloneqq \sup_{\SetPL}
 \frac{\PS(\rho;\SetPL)}{\sup_{\w \in \Free} \PS(\w;\SetPL)}
 \label{eq:G}
\end{alignat}
and is referred to as the discrimination power of $\rho$ with respect to $\Free$.
For some $\Free$, it has been shown that $G(\rho,\Free) = 1 + R_\Free(\rho)$ holds
\cite{Bae-Chr-Pia-2019,Nap-Bro-Cia-Pia-2016,Bu-Sin-Fei-Pat-2017,Pia-Cia-Bro-Nap-2016},
where $R_\Free(\rho)$ is the robustness measure, defined as
\begin{alignat}{1}
 R_\Free(\rho) &\coloneqq \inf \left\{ \lambda \ge 0 \relmid \frac{\rho + \lambda \tau}{1 + \lambda}
 \mathrel{\eqqcolon} \w \in \Free, ~ \tau \in \Den_A \right\}.
 \label{eq:R}
\end{alignat}
$R_\Free(\rho)$ can be interpreted as the least coefficient needed for the mixture of $\rho$
and a noise state $\tau$ to result in a free state.
$R_\Free(\rho)$ is well defined for any $\rho$ as long as $\Free$ contains at least
one positive definite matrix, which we assume in what follows when referring to $R_\Free(\rho)$.
We discuss the case in which this assumption is not satisfied in Appendix~\ref{sec:EmptyInterior}.

% The following theorem shows that this equation holds for more general $\Free$.
%
\begin{thm} \label{thm:MainConvex}
 We have, for any convex subset $\Free$ of $\Den_A$,
 \begin{alignat}{1}
  G(\rho,\Free) &= 1 + R_\Free(\rho).
  \label{eq:main_convex}
 \end{alignat}
\end{thm}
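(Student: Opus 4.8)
The plan is to establish the two inequalities $G(\rho,\Free)\le 1+R_\Free(\rho)$ and $G(\rho,\Free)\ge 1+R_\Free(\rho)$ separately. The upper bound is the easy direction and comes straight from the definition of the robustness. For any $\epsilon>0$ I would choose $\lambda\le R_\Free(\rho)+\epsilon$, a noise state $\tau\in\Den_A$, and a free state $\w\in\Free$ with $\rho+\lambda\tau=(1+\lambda)\w$. Linearity of each $\Lambda_n$ then gives, for every channel family $\SetPL$ and every measurement $\Pi$, the identity $\PS(\rho,\Pi;\SetPL)=(1+\lambda)\PS(\w,\Pi;\SetPL)-\lambda\sum_{n=1}^N p_n\Tr[\Pi_n\Lambda_n(\tau)]\le(1+\lambda)\PS(\w,\Pi;\SetPL)$, since $\Pi_n\ge 0$ and $\Lambda_n(\tau)\ge 0$ make the subtracted term nonnegative. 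Maximizing over $\Pi$ and bounding $\PS(\w;\SetPL)\le\sup_{\w\in\Free}\PS(\w;\SetPL)$ shows the ratio is at most $1+\lambda$; taking the supremum over $\SetPL$ and letting $\epsilon\to 0$ yields $G(\rho,\Free)\le 1+R_\Free(\rho)$.

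For the lower bound I would first pass to the dual description of the robustness. Because $\Free$ is convex and (by the standing assumption) contains a positive definite state, Slater's condition holds and strong duality gives
\begin{alignat}{1}
 1+R_\Free(\rho) &= \max\left\{ \Tr[W\rho] \relmid W\ge 0,~ \Tr[W\w]\le 1~~\forall \w\in\Free \right\}, \nonumber
\end{alignat}
with weak duality already implied by the decomposition above. I then fix a witness $W\ge 0$ with $\Tr[W\w]\le 1$ for all $\w\in\Free$ and $\Tr[W\rho]$ arbitrarily close to $1+R_\Free(\rho)$, let $c$ denote its largest eigenvalue, and set $G\coloneqq W/c$ so that $0\le G\le I$ and $\{G,I-G\}$ is a valid two-outcome measurement on $A$.

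The core of the argument is an explicit achievability construction built from $W$. For each integer $N$ I would take $N$ equiprobable channels $\Lambda_1,\dots,\Lambda_N$ mapping $A$ into an $(N+1)$-dimensional output with orthonormal basis $\{|0\rangle,\dots,|N\rangle\}$, each the measure-and-prepare channel
\begin{alignat}{1}
 \Lambda_n(\sigma) &= \Tr[G\sigma]\,|n\rangle\langle n| + \Tr[(I-G)\sigma]\,|0\rangle\langle 0|. \nonumber
\end{alignat}
Since all output states are diagonal in the fixed basis they commute, so the discrimination reduces to a classical hypothesis test whose optimum is the MAP rule on the basis measurement: a nonzero symbol $n$ identifies $\Lambda_n$ unambiguously, while the shared symbol $0$ gives only chance-level $1/N$. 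This yields the exact value $\PS(\sigma;\SetPL)=\tfrac{1}{N}+\tfrac{N-1}{N}\Tr[G\sigma]$, which is what lets me control the denominator. Using $\Tr[G\rho]=\Tr[W\rho]/c$ and $\sup_{\w\in\Free}\Tr[G\w]\le 1/c$,
\begin{alignat}{1}
 \frac{\PS(\rho;\SetPL)}{\sup_{\w\in\Free}\PS(\w;\SetPL)} &\ge \frac{c+(N-1)\Tr[W\rho]}{c+(N-1)} \xrightarrow[N\to\infty]{} \Tr[W\rho]. \nonumber
\end{alignat}
Taking $\Tr[W\rho]$ toward $1+R_\Free(\rho)$ gives $G(\rho,\Free)\ge 1+R_\Free(\rho)$, and combining with the upper bound proves \eqref{eq:main_convex}.

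The step I expect to be the main obstacle is this lower bound, and within it the need to defeat the baseline imposed by trace preservation. Because each $\Lambda_n$ preserves trace, for every measurement the success probability is at least $\max_n p_n$, so a single two-outcome channel built from $W$ only reaches $1+R_\Free(\rho)/c$, short of the target whenever $c>1$. The construction circumvents this by spreading the \emph{same} witness test across $N$ equiprobable channels whose only distinguishing feature is an index flag revealed precisely when $G$ accepts; sending $N\to\infty$ drives the baseline $1/N$ to zero so that the witness ratio $\Tr[W\rho]/\sup_{\w\in\Free}\Tr[W\w]$ survives intact. Verifying that the basis measurement is \emph{exactly} optimal, rather than merely a lower bound, is the complementary delicate point, as it is what makes the denominator estimate tight.
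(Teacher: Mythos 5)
Your proof is correct, and while your upper bound and your appeal to the dual witness problem coincide with the paper's argument, your achievability construction for the lower bound takes a genuinely different route. The paper builds $N$ equiprobable channels into an $N$-level output, $\Lambda_n(\sigma) = \Tr(e \sigma) \ket{n}\bra{n} + \frac{1-\Tr(e\sigma)}{N-1}(I_N - \ket{n}\bra{n})$ with $e \coloneqq x^\opt/\|x^\opt\|_\infty$, and pins down the structure of the optimal measurement via a cyclic-group covariance (symmetrization) lemma proved in its appendix, choosing $N \ge 1 + \Tr(e\w^\opt)^{-1}$ so that the free-state denominator is bounded by $\Tr(e\w^\opt)$; this achieves the bound exactly at finite $N$, which is precisely what allows the paper's follow-up remark that the supremum over channel collections in the definition of $G(\rho,\Free)$ can be replaced by a maximum. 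You instead enlarge the output by a flag vector $\ket{0}$, which makes all outputs commute and lets you compute the optimum in closed form, $\PS(\sigma;\SetPL) = \tfrac{1}{N} + \tfrac{N-1}{N}\Tr[G\sigma]$ for \emph{every} input $\sigma$ (your optimality claim is indeed exact: $\sum_n \braket{0|\Pi_n|0} = 1$ and $\braket{n|\Pi_n|n} \le 1$ give the upper bound, attained by $\Pi_n = \ket{n}\bra{n} + \tfrac{1}{N}\ket{0}\bra{0}$), and then send $N \to \infty$. This buys elementarity and robustness: no covariance lemma, no choice of $N$ depending on $\w^\opt$, and—since you work with $\epsilon$-near-optimal $\lambda$ and witnesses rather than minimizers—no closure or attainment considerations (the paper reduces to closed $\Free$ for this). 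Your diagnosis of the trace-preservation baseline $\max_n p_n$ is exactly the obstacle the paper's condition on $N$ addresses, and both constructions defeat it the same way in spirit, by diluting the baseline across many channels. What you give up is twofold and minor: your limiting argument does not yield the paper's observation that the supremum in Eq.~\eqref{eq:G} is a maximum, and the paper's covariance machinery, though heavier here, is amortized—it is reused essentially verbatim in the proof of Theorem~\ref{thm:ExLocal} for the auxiliary-system case, where your commuting-output shortcut would not directly apply.
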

\begin{proof}
 Let $\cl \Free$ denote the closure of $\Free$.
 Since $G(\rho,\Free) = G(\rho,\cl \Free)$ and $R_\Free(\rho) = R_{\cl \Free}(\rho)$ hold,
 we assume, without loss of generality, that $\Free$ is closed.

 First, we show $G(\rho,\Free) \le 1 + R_\Free(\rho)$.
 Since $\Free$ is closed, the lower bound in Eq.~\eqref{eq:R} can be replaced by a minimum.
 Therefore, letting $\lambda^\opt \coloneqq R_\Free(\rho)$, there exists
 $\tau^\opt \in \Den_A$ such that
 $(\rho + \lambda^\opt \tau^\opt) / (1 + \lambda^\opt) \eqqcolon \w^\opt \in \Free$.
 For any collection $\SetPL$ of channels $\Lambda_n$ with prior probabilities $p_n$
 and for any measurement $\Pi$, we have
 \begin{alignat}{1}
  \PS(\rho,\Pi;\SetPL) &=
  \sum_{n=1}^N p_n \Tr[\Pi_n \cdot \Lambda_n(\rho)] \nonumber \\
  &\le (1 + \lambda^\opt) \sum_{n=1}^N p_n \Tr[\Pi_n \cdot \Lambda_n(\w^\opt)] \nonumber \\
  &\le (1 + \lambda^\opt) \cdot \max_{\w \in \Free} \PS(\w;\SetPL),
  \label{eq:main_convex_le}
 \end{alignat}
 where the second line follows since the map
 $\sum_{n=1}^N p_n \Tr[\Pi_n \cdot \Lambda_n(\Endash)]$ is completely positive
 and $\rho \le (1 + \lambda^\opt) \w^\opt$ holds.
 Dividing the first and last equations by $\max_{\w \in \Free} \PS(\w;\SetPL)$,
 we obtain $G(\rho,\Free) \le 1 + \lambda^\opt = 1 + R_\Free(\rho)$.

 Next, we show $G(\rho,\Free) \ge 1 + R_\Free(\rho)$.
 It is known that $R_\Free(\rho)$ is the optimal value of the following optimization problem:
 \begin{alignat}{1}
  \begin{array}{ll}
   \mbox{maximize} & \Tr(x \rho) - 1 \\
   \mbox{subject~to} & x \ge \zero, \quad \Tr(x \w) \le 1 ~(\forall \w \in \Free),
  \end{array}
  \label{prob:R_dual}
 \end{alignat}
 which is obtained as the Lagrange dual problem of Eq.~\eqref{eq:R} (e.g., \cite{Reg-2017}).
 Let $x^\opt$ be the optimal solution of this problem, and let
 $e \coloneqq x^\opt / \|x^\opt\|_\infty$.
 Also, let $\w^\opt$ be $\w \in \Free$ that maximizes $\Tr(e \w)$.
 Choose an $N$-level system $B$,
 where $N$ satisfies the condition $N \ge 1 + 1/\Tr(e \w^\opt)$; then,
 we have, for each $\sigma \in \Den_A$,
 \begin{alignat}{1}
  \frac{1-\Tr(e \sigma)}{N-1} &\le \frac{1}{N-1} \le \Tr(e \w^\opt).
  \label{eq:N_we}
 \end{alignat}
 For each $n$, let $p_n \coloneqq 1/N$ and define the channel $\Lambda_n$ as
 \begin{alignat}{1}
  \Lambda_n(\sigma) &= \Tr(e \sigma) \ket{n}\bra{n}
  + \frac{1 - \Tr(e \sigma)}{N-1} (I_N - \ket{n}\bra{n})
  \label{eq:Lambda_n}
 \end{alignat}
 for any $\sigma \in \Den_A$,
 where $I_N$ is the identity matrix of order $N$
 and $\{ \ket{i} \}_{i=1}^N$ is the computational basis.
 Since $\Lambda_n(\sigma) = X^{n-1} \cdot \Lambda_1(\sigma) \cdot X^{1-n}$ holds
 for the generalized Pauli-$X$ matrix $X$,
 there exists a measurement $\Pi^\sym$ that satisfies
 $\Pi^\sym_n = X^{n-1} \Pi^\sym_1 X^{1-n}$ and
 $\PS(\sigma,\Pi;\SetPL) = \Tr[\Pi^\sym_1 \cdot \Lambda_1(\sigma)]$
 for any state $\sigma$ of $A$,
 proved in Appendix~\ref{sec:sym}.
 For such a measurement $\Pi^\sym$, let $c \coloneqq \braket{1|\Pi^\sym_1|1}$; then, we have
 \begin{alignat}{1}
  \PS(\sigma,\Pi;\SetPL) &= \Tr \left[ \Pi^\sym_1 \cdot \Lambda_1(\sigma) \right] \nonumber \\
  &= c \Tr(e \sigma) + (1 - c) \frac{1 - \Tr(e \sigma)}{N-1},
  \label{eq:PS_qc}
 \end{alignat}
 where we use $\Tr \Pi^\sym_1 = 1$ from $\Tr \Pi^\sym_n = \Tr \Pi^\sym_1$ and
 $\sum_{n=1}^N \Pi^\sym_n = I_N$.
 Therefore, we have, for each $\w \in \Free$,
 \begin{alignat}{1}
  \PS(\w,\Pi;\SetPL) &= c \Tr(e \w) + (1 - c) \frac{1 - \Tr(e \w)}{N-1} \le \Tr(e \w^\opt),
 \end{alignat}
 where the inequality follows from $\Tr(e \w) \le \Tr(e \w^\opt)$ and
 substituted $\sigma = \w$ into Eq.~\eqref{eq:N_we}.
 Therefore, we obtain $\PS(\w;\SetPL) \le \Tr(e \w^\opt)$.
 
 Also, for the measurement $\Pi' \coloneqq \{ \ket{n} \bra{n} \}_{n=1}^N$,
 we obtain $\PS(\rho,\Pi';\SetPL) = \Tr(e \rho)$ from Eq.~\eqref{eq:Lambda_n}.
 Thus, we have
 \begin{alignat}{1}
  \frac{\PS(\rho;\SetPL)}{\sup_{\w \in \Free} \PS(\w;\SetPL)}
  &\ge \frac{\Tr(e \rho)}{\Tr(e \w^\opt)}
  = \frac{\Tr(x^\opt \rho)}{\Tr(x^\opt \w^\opt)}
  \ge \Tr(x^\opt \rho),
 \end{alignat}
 where the last inequality follows from $\Tr(x^\opt \w^\opt) \le 1$.
 From this equation and $\Tr(x^\opt \rho) = 1 + R_\Free(\rho)$,
 we obtain $G(\rho,\Free) \ge 1 + R_\Free(\rho)$.
\end{proof}

It should be noted that when $\Free$ is convex, the robustness can be characterized by
a similar equation to Eq.~\eqref{eq:main_convex} \cite{Tak-Reg-Bu-Liu-2019}:
\begin{alignat}{1}
 \sup_{\SetPL} \sup_\Pi
 \frac{\PS(\rho,\Pi;\SetPL)}{\sup_{\w \in \Free} \PS(\w,\Pi;\SetPL)}
 &= 1 + R_\Free(\rho).
 \label{eq:H0}
\end{alignat}
This result has also been extended to infinite-dimensional systems,
nonconvex resource theories, and general probabilistic theories
\cite{Uol-Kra-Sha-Yu-2019,Reg-Lam-Fer-Tak-2021,Kur-Tak-Ade-Yam-2024,Tur-Gut-Ade-2024, Lam-Reg-Tak-Fer-2021}.
However, compared to the discrimination power, the measure represented by the left-hand side of
Eq.~\eqref{eq:H0} does not seem to be straightforwardly associated with an intuitive operational meaning
in the context of channel discrimination problems.
The differences between this measure and the discrimination power are discussed in detail in
Appendix~\ref{sec:GvsH}.

\section{The case of the set of free states being nonconvex}

Let $\co S$ denote the convex hull of a set $S$.
Then, the following corollary is immediately obtained.
\begin{cor} \label{cor:Main}
 We have, for any subset $\Free$ of $\Den_A$,
 \begin{alignat}{1}
  G(\rho,\Free) &= G(\rho,\co \Free) = 1 + R_{\co \Free}(\rho).
 \end{alignat}
\end{cor}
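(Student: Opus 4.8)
The plan is to deduce the corollary directly from Theorem~\ref{thm:MainConvex} by means of a single convexity observation, exploiting that $\co \Free$ is convex by construction. Under this view the second equality $G(\rho,\co \Free) = 1 + R_{\co \Free}(\rho)$ is simply Theorem~\ref{thm:MainConvex} applied to the convex set $\co \Free$, so the entire substance of the corollary reduces to establishing the first equality $G(\rho,\Free) = G(\rho,\co \Free)$.

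To prove that, I would inspect the definition in Eq.~\eqref{eq:G}. The numerator $\PS(\rho;\SetPL)$ does not depend on the free set at all, so it suffices to show that the denominators coincide for every collection $\SetPL$, i.e. that $\sup_{\w \in \Free} \PS(\w;\SetPL) = \sup_{\w \in \co \Free} \PS(\w;\SetPL)$. The key structural fact is that $\PS(\cdot;\SetPL)$ is convex in its state argument: for a fixed measurement $\Pi$, the quantity $\PS(\w,\Pi;\SetPL) = \sum_{n=1}^N p_n \Tr[\Pi_n \cdot \Lambda_n(\w)]$ is affine in $\w$ because each $\Lambda_n$ is linear and the trace is linear, and by Eq.~\eqref{eq:PS_rho} the function $\PS(\w;\SetPL) = \max_\Pi \PS(\w,\Pi;\SetPL)$ is a pointwise maximum of such affine functions, hence convex.

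Finally I would invoke the elementary fact that the supremum of a convex function over a set equals its supremum over the convex hull of that set. Writing any $\w \in \co \Free$ as a finite convex combination $\w = \sum_i \alpha_i \w_i$ with $\w_i \in \Free$ and applying convexity yields $\PS(\w;\SetPL) \le \max_i \PS(\w_i;\SetPL) \le \sup_{\w' \in \Free} \PS(\w';\SetPL)$, while the reverse inequality is immediate from $\Free \subseteq \co \Free$. Thus the denominators agree for every $\SetPL$, giving $G(\rho,\Free) = G(\rho,\co \Free)$, and combining with Theorem~\ref{thm:MainConvex} completes the argument. I do not expect a genuine obstacle: the only point that requires care is recognizing the convexity of $\PS(\cdot;\SetPL)$, which is exactly what makes the passage to the convex hull cost-free, together with noting that $R_{\co \Free}(\rho)$ remains well defined since $\co \Free \supseteq \Free$ inherits the assumed positive-definite free state.
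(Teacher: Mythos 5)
Your proposal is correct and follows essentially the same route as the paper: the second equality is Theorem~\ref{thm:MainConvex} applied to the convex set $\co \Free$, and the first follows from showing $\sup_{\w \in \Free} \PS(\w;\SetPL) = \sup_{v \in \co \Free} \PS(v;\SetPL)$ via the inclusion $\Free \subseteq \co \Free$ together with the bound $\PS(v;\SetPL) \le \max_i \PS(\w_i;\SetPL)$ for a convex decomposition $v = \sum_i \alpha_i \w_i$. The only difference is cosmetic: you make explicit the convexity of $\PS(\cdot\,;\SetPL)$ as a pointwise maximum of affine functions, a justification the paper leaves implicit.
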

\begin{proof}
 The right-hand equality is clear from Theorem~\ref{thm:MainConvex}.
 To prove the left-hand equality, it is sufficient to show
 \begin{alignat}{1}
  \sup_{\w \in \Free} \PS(\w;\SetPL) &= \sup_{v \in \co \Free} \PS(v;\SetPL).
  \label{eq:main_vw}
 \end{alignat}
 From $\Free \subseteq \co \Free$, the left-hand side is clearly less than or equal to
 the right-hand side.
 The left-hand side is also greater than or equal to the right-hand side
 since any $v \in \co \Free$ can be expressed as a convex combination of
 some $\w_1, \w_2, \dots \in \Free$, and $\PS(v;\SetPL) \le \max_i \PS(\w_i;\SetPL)$ holds.
\end{proof}

In this paper, we denote a resource state $\rho \not\in \Free$
that satisfies $G(\rho,\Free) = 1$ as a bound resource state.
By this definition, a bound resource state can be seen as a resource state that is not useful
for channel discrimination.
From Corollary~\ref{cor:Main}, it is immediately clear that $\rho \not\in \Free$ being a bound resource
state is equivalent to $\rho \in \co \Free$.
In particular, there are no bound resource states if and only if $\Free$ is convex and closed.

\section{Extension to discrimination problems using auxiliary systems}

In the problem of distinguishing a collection of channels $\SetL_{n=1}^N$ from system $A$ to system $B$,
using an appropriate auxiliary system $C$ can improve the average success probability.
When the prior probabilities are $\SetP_{n=1}^N$, the average success probability
when a state $\rho$ is input into the composite system $A \ot C$ and
a measurement $\Pi \coloneqq \SetPi_{n=1}^N$ is performed on the composite system $B \ot C$
is given by
\begin{alignat}{1}
 \PS(\rho,\Pi;\SetPL) &\coloneqq \sum_{n=1}^N p_n \Tr[\Pi_n \cdot (\Lambda_n \ot \id_C)(\rho)]
 \label{eq:PS_rho_Pi_ex}
\end{alignat}
instead of Eq.~\eqref{eq:PS_rho_Pi}.

Given a state $\rho \in \Den_{A \ot C}$ and the set of free states $\Free$
of the composite system $A \ot C$,
let $G_C(\rho,\Free)$ denote the discrimination power $G(\rho,\Free)$
defined by Eq.~\eqref{eq:G} using the average success probability in Eq.~\eqref{eq:PS_rho_Pi_ex}.
Let us consider determining this discrimination power.
Note that when $C$ is a quantum $1$-level system (i.e., no auxiliary system),
$G_C(\rho,\Free)$ and $G(\rho,\Free)$ are clearly equal.
For a set $S$ of some states of the composite system $A \ot C$,
if for any $\w \in S$ and any channel $\cE$ on $C$,
the state $(\id_A \ot \cE)(\w)$ is in $S$,
then we say that $S$ preserves operations on $C$.
If $\Free$ preserves operations on $C$, then it means that applying any local channel
to system $C$ does not turn a free state into a resource state.
\begin{thm} \label{thm:ExLocal}
 If a convex subset $\Free$ of $\Den_{A \ot C}$ preserves operations on $C$,
 then we have
 \begin{alignat}{1}
  G_C(\rho,\Free) &= 1 + R_\Free(\rho).
  \label{eq:main_ext_local}
 \end{alignat}
\end{thm}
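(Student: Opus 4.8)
The plan is to prove the two inequalities separately, treating $A\ot C$ as a single system so that Theorem~\ref{thm:MainConvex} and the dual problem~\eqref{prob:R_dual} are directly available, and to invoke the hypothesis that $\Free$ preserves operations on $C$ only at the one place where it is indispensable. For $G_C(\rho,\Free)\le 1+R_\Free(\rho)$ I would simply repeat the positivity argument of Theorem~\ref{thm:MainConvex}: with $\lambda^\opt,\w^\opt$ the minimizers of Eq.~\eqref{eq:R} on $\Den_{A\ot C}$, the inequality $\rho\le(1+\lambda^\opt)\w^\opt$ is preserved by the completely positive maps $\Lambda_n\ot\id_C$, so $\PS(\rho,\Pi;\SetPL)\le(1+\lambda^\opt)\sup_{\w\in\Free}\PS(\w;\SetPL)$ for every product task, and dividing gives the bound. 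Equivalently, product channels are a subset of all channels on $A\ot C$, so this is nothing but $G_C\le G(\rho,\Free)=1+R_\Free(\rho)$; note that this half needs no assumption on $C$.

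For the reverse inequality I would build an explicit product-channel task from the dual witness. Let $x^\opt$ solve~\eqref{prob:R_dual} on $A\ot C$ and set $e\coloneqq x^\opt/\|x^\opt\|_\infty$, so that $0\le e\le I$, with $\Tr(e\rho)=(1+R_\Free(\rho))/\|x^\opt\|_\infty$ and $\Tr(e\w)\le 1/\|x^\opt\|_\infty$ on $\Free$. The aim is a covariant family $\{\Lambda_n\ot\id_C\}$ with a covariant measurement on $B\ot C$ whose optimal success probability is $\PS(\sigma;\SetPL)=\max\{\Tr(e\sigma),\text{baseline}\}$, exactly as in the proof of Theorem~\ref{thm:MainConvex}, the baseline driven down to $\max_{\w\in\Free}\Tr(e\w)$ once the output dimension is taken large. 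The new feature is that $e$ typically acts jointly on $A\ot C$ and so cannot be read inside a channel that touches only $A$; I would therefore transfer $A$ coherently into $B$ and let the joint measurement on $B\ot C$ carry out the $e$-measurement, encoding the label $n$ coherently so that the channels stay only partially distinguishable. Writing $e=\sum_j f^{(j)}_A\ot g^{(j)}_C$, the label decoding can be interleaved with measurements of the $g^{(j)}_C$ on the untouched $C$ and of the $f^{(j)}_A$ on the transferred copy of $A$, so that even an entangled witness is detected; this yields the numerator bound $\PS(\rho;\SetPL)\ge\Tr(e\rho)$.

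The preservation hypothesis enters precisely to control the denominator. For every $\w\in\Free$ and every channel $\cE$ on $C$ the state $(\id_A\ot\cE)(\w)$ is again free, hence $\Tr(e\,(\id_A\ot\cE)(\w))\le 1/\|x^\opt\|_\infty$; equivalently, every $C$-processed witness obtained by moving $\cE$ onto $e$ remains below threshold on all of $\Free$. Since the optimal covariant strategy for a free input effectively tests $\w$ against such a $C$-processed witness, this should pin $\sup_{\w\in\Free}\PS(\w;\SetPL)\le\max_{\w\in\Free}\Tr(e\w)$. Combining with the numerator gives the advantage ratio $\Tr(e\rho)/\max_{\w\in\Free}\Tr(e\w)=\Tr(x^\opt\rho)/\max_{\w\in\Free}\Tr(x^\opt\w)\ge\Tr(x^\opt\rho)=1+R_\Free(\rho)$, so $G_C(\rho,\Free)\ge 1+R_\Free(\rho)$ and the claimed equality follows.

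I expect the main obstacle to be the lower-bound construction itself: producing a covariant product family that at once accesses the possibly entangled $e$—through coherent transfer plus a joint measurement on $B\ot C$—and keeps the channels weakly enough distinguishable that the witness-measurement is genuinely optimal for $\rho$, rather than being bypassed by perfect discrimination of the outputs. Orthogonal, perfectly distinguishable encodings would collapse the ratio to $1$, so the delicate balance is to retain just enough indistinguishability (via covariance and a large output) while still reading $e$. The hypothesis that $\Free$ preserves operations on $C$ is exactly what secures the accompanying free-state estimate $\sup_{\w\in\Free}\PS(\w;\SetPL)\le\max_{\w\in\Free}\Tr(e\w)$; without it a local channel on $C$ could carry a free state to one with strictly larger success probability, breaking the denominator bound.
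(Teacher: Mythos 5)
Your overall skeleton matches the paper's: the upper bound via product channels being a subset of all channels on $A \ot C$ is exactly the paper's first step, and you correctly locate where the preservation hypothesis must act (free states processed on $C$ stay below the witness threshold). But the core of the lower bound is missing, and what you sketch in its place would not go through as stated. First, the decomposition $e = \sum_j f^{(j)}_A \ot g^{(j)}_C$ of an entangled witness generally contains terms that are not positive, so ``interleaving measurements of the $g^{(j)}_C$ and $f^{(j)}_A$'' is not a well-defined protocol; and coherently transferring $A$ into $B$ while encoding the label gives no mechanism preventing the label from being read out perfectly --- you name this tension yourself but leave it unresolved, and resolving it is precisely the hard content of the theorem. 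The paper's resolution is an explicit construction you do not have: the witness is realized by a single trace-nonincreasing CP map acting on $A$ alone, $\te = [\Tr(e \cdot \Endash) \ot \id_C] \circ (\id_A \ot \Phi_C)$, which appends half of a maximally entangled pair and evaluates $e$ jointly on $A$ and one half, so that by Eq.~\eqref{eq:Phi_zigzag} the final projection of the output copy of $C$ onto $\Phi_C$ against the untouched auxiliary recovers $\Tr(e\rho)$ exactly (Lemma~\ref{lemma:ExLocalGain}; note the normalization there must be $e = x^\opt/\|\Tr_C x^\opt\|_\infty$, not your $x^\opt/\|x^\opt\|_\infty$, to make $\te$ trace-nonincreasing). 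The requisite indistinguishability is then engineered by padding $\te$ with garbage subchannels into a channel and spreading the collection over the covariant family $\Lambda_{i,q,r}$ of Eq.~\eqref{eq:ExLocalSym_Lambda}, with Pauli twirls on the output copy of $C$.

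Second, your denominator claim $\sup_{\w \in \Free}\PS(\w;\SetPL) \le \max_{\w\in\Free}\Tr(e\w)$ is asserted (``this should pin''), not proven, and in this exact form it is too strong. Establishing it requires the symmetrization machinery of Lemma~\ref{lemma:ExLocalSym}: one must show that any measurement can be replaced by a $\cG$-covariant one, and that a covariant measurement induces subchannels $\{\tGamma_n\}$ on the auxiliary $C$ whose sum is a channel (this uses $N_C \Tr_{B\ot C}\Pi^\sym_{1,1,1} = I_{N_C}$, a consequence of covariance), so that the free-state statistics take the form $\sum_n \Tr[\Phi_C\cdot(\tLambda_n \ot \tGamma_n)(\w)]$ and the hypothesis $[\id_A \ot (\tGamma_1+\tGamma_0)](\w) \in \Free$ can be applied. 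Even then, for $\Free$ merely preserving operations on $C$, one only obtains the bound $\Tr(e\w^\opt)+\varepsilon'$ at finite block size $N$ (Eq.~\eqref{eq:ExLocalN}), and equality is recovered via the limit $\varepsilon \to 0$; your slack-free bound holds only under the stronger hypothesis that $\Free$ preserves probabilistic transformations on $C$. In sum, the proposal identifies the right endpoints but omits the two constructions --- the zig-zag witness map and the covariance-to-subchannel reduction --- that constitute the actual proof.
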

\begin{proof}
 $G_C(\rho,\Free)$ can be considered the discrimination power when a collection of channels
 from $A \ot C$ to $B \ot C$ is limited to those of the form $\{ \Lambda_n \ot \id_C \}_{n=1}^N$.
 Therefore, $G_C(\rho,\Free)$ cannot be greater than $G(\rho,\Free)$,
 and thus $G_C(\rho,\Free) \le G(\rho,\Free) = 1 + R_\Free(\rho)$ holds
 from Theorem \ref{thm:MainConvex}.
 It remains to show that $G_C(\rho,\Free) \ge 1 + R_\Free(\rho)$.
 Here, we present an outline; the full proof is given in Appendix~\ref{sec:ExLocal}.

 Let $x^\opt$ be the optimal solution of Problem~\eqref{prob:R_dual}.
 For a non-normalized maximally entangled state
 $\ket{\Phi} \coloneqq \sum_{n=1}^{N_C} \ket{n} \ot \ket{n}$,
 where $N_C$ is the level of $C$, we can choose a positive real number $c$
 and subchannels $\{ \tLambda_n \}_{n=1}^N$
 (i.e., completely positive maps such that $\sum_n \tLambda_n$ is a channel)
 that satisfy
 \begin{alignat}{1}
  \braket{\Phi|(\tLambda_1 \ot \id_C)(\rho)|\Phi} = c \Tr(x^\opt \rho),
  \quad \forall \rho \in \Den_{A \ot C}.
  \label{eq:ExLocalPhiTr}
 \end{alignat}
 We consider the collection of channels $\{ \Lambda_{i,q,r} \}_{(i,q,r)=(1,1,1)}^{(N,N_C,N_C)}$
 defined by
 \begin{alignat}{1}
  \Lambda_{i,q,r}(\Endash) &\coloneqq \sum_{n=0}^{N-1} \ket{n+i}\bra{n+i} \ot X^{q-1}Z^{r-1}
  \tLambda_{n+1}(\Endash) Z^{1-r}X^{1-q},
 \end{alignat}
 where $\ket{N+1} \coloneqq \ket{1}, \dots, ~\ket{2N-1} \coloneqq \ket{N-1}$,
 and $X$ (resp. $Z$) is the generalized Pauli-$X$ (resp. Pauli-$Z$) matrix.
 Then, it can be seen that
 \begin{alignat}{1}
  N_C \PS(\rho;\SetPLIQR) &\ge \braket{\Phi|(\tLambda_1 \ot \id_C)(\rho)|\Phi}
 \end{alignat}
 holds for equal prior probabilities
 $\{ p_{i,q,r} \coloneqq 1 / (N N_C^2) \}_{(i,q,r)=(1,1,1)}^{(N,N_C,N_C)}$.
 It can also be shown that for any positive real number $\varepsilon$ and any free state $\w$,
 there exists a channel $\Gamma$ satisfying
 \begin{alignat}{1}
  N_C \PS(\w;\SetPLIQR) &\le \braket{\Phi|(\tLambda_1 \ot \Gamma)(\w)|\Phi}
  + c \varepsilon \nonumber \\
  &\le \sup_{\w' \in \Free} \braket{\Phi|(\tLambda_1 \ot \id_C)(\w')|\Phi}
  + c \varepsilon,
 \end{alignat}
 where the second line follows from $(\id_A \ot \Gamma)(\w)$ being a free state.
 Therefore, we have, from Eq.~\eqref{eq:ExLocalPhiTr},
 \begin{alignat}{1}
  G_C(\rho,\Free)
  &\ge \frac{c \Tr(x^\opt \rho)}{c \sup_{\w' \in \Free }\Tr(x^\opt \w') + c \varepsilon}
  \ge \frac{1 + R_\Free(\rho)}{1 + \varepsilon}.
 \end{alignat}
 Considering the limit as $\varepsilon \to 0$, we obtain $G_C(\rho,\Free) \ge 1 + R_\Free(\rho)$.
\end{proof}

We provide some examples of sets that preserve operations on $C$.
An example is a set that is invariant under local operations and classical communication (LOCC).
One such example is the set of all $\w \in \Den_{A \ot C}$
satisfying $\mu(\w) \le c$, where $c$ is a constant and $\mu$ is an entanglement measure
that does not increase under LOCC, such as the Schmidt number \cite{Ter-Hor-2000},
the entanglement of formation \cite{Ben-Div-Smo-Woo-1996}, the logarithmic negativity
\cite{Vid-Wer-2002,Ple-2005},
and the concurrence \cite{Hil-Woo-1997}.
The set of all separable states can be considered a special case of this.
Other examples include the set of unsteerable states
(from $A$ to $C$ or $C$ to $A$) \cite{Wis-Jon-Doh-2007}
and the set of Bell local states \cite{Bell-1964}.
% If $\Free$ is one of these sets, Theorem~\ref{thm:ExLocal} can be applied.
In Ref.~\cite{Bae-Chr-Pia-2019}, it is shown that for the set, $S_k$, of states with Schmidt number
at most $k$, $[1 + R_{S_k}(\rho)] / k$ is a lower bound for $G_C(\rho,S_k)$.
In contrast, Theorem~\ref{thm:ExLocal} shows that $G_C(\rho,S_k)$ is exactly $1 + R_{S_k}(\rho)$.
Additionally, Ref.~\cite{Pia-Wat-2015} defines a different robustness measure for steerability
(where $\Free$ is the set of unsteerable states) and shows that it can quantify
the discrimination power in subchannel discrimination problems.
Theorem~\ref{thm:ExLocal} implies a more straightforward quantification.

\begin{figure}[t]
 \centering
 \includegraphics[scale=1.0]{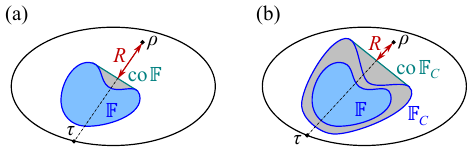}
 \caption{The robustness of $\rho \in \Den_{A \ot C}$ with respect to
 (a) $\co \Free$ and (b) $\co \Free_C$.
 $R_{\co \Free}(\rho) \ge R_{\co \Free_C}(\rho)$ holds from $\Free \subseteq \Free_C$.
 From Corollaries~\ref{cor:Main} and \ref{cor:ExLocalGeneral}, these values respectively quantify
 the discrimination power $G(\rho,\Free)$ and $G_C(\rho,\Free)$.
 The gray areas, $\co \Free \setminus \Free$ and $\co \Free_C \setminus \Free$,
 represent the sets of bound resource states in the sense of $G$ and $G_C$,
 respectively.
 Note that $R_{\co \Free}(\rho) = R_{\co \Free_C}(\rho)$, i.e.,
 $G(\rho,\Free) = G_C(\rho,\Free)$, holds for each $\rho \in \Den_{A \ot C}$
 if and only if $\co \Free = \co \Free_C$ holds.}
 \label{fig:robustness}
\end{figure}

Given a subset $\Free$ of $\Den_{A \ot C}$, let $\Free_C$ denote
the set of all states of the form $(\id_A \ot \cE)(\w)$, where $\w$ is in $\Free$ and $\cE$
is a channel on $C$.
Then, the following corollary is obtained from Theorem~\ref{thm:ExLocal}.
\begin{cor} \label{cor:ExLocalGeneral}
 We have, for any subset $\Free$ of $\Den_{A \ot C}$,
 \begin{alignat}{1}
  G_C(\rho,\Free) &= G_C(\rho,\Free_C) = 1 + R_{\co \Free_C}(\rho).
  \label{eq:main_ext_local_general}
 \end{alignat}
\end{cor}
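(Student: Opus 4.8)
The plan is to reduce the corollary to Theorem~\ref{thm:ExLocal} by establishing the chain of equalities
\[
G_C(\rho,\Free) = G_C(\rho,\Free_C) = G_C(\rho,\co \Free_C) = 1 + R_{\co \Free_C}(\rho).
\]
The conceptual heart is that enlarging the free set by all local channels on $C$ leaves the optimal benchmark success probability $\sup_{\w} \PS(\w;\SetPL)$ unchanged, because any local processing of the auxiliary system can be pushed onto the measurement.

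First I would prove $G_C(\rho,\Free) = G_C(\rho,\Free_C)$ by showing the two denominators agree, i.e. $\sup_{v \in \Free_C} \PS(v;\SetPL) = \sup_{\w \in \Free} \PS(\w;\SetPL)$ for every $\SetPL$. Since $\Free \subseteq \Free_C$, only ``$\le$'' needs work. Take $v = (\id_A \ot \cE)(\w)$ with $\w \in \Free$ and $\cE$ a channel on $C$. Then $(\Lambda_n \ot \id_C)(v) = (\id_B \ot \cE)\bigl((\Lambda_n \ot \id_C)(\w)\bigr)$, so for any measurement $\Pi$ on $B \ot C$ we have $\Tr[\Pi_n (\Lambda_n \ot \id_C)(v)] = \Tr[(\id_B \ot \cE^\dagger)(\Pi_n) (\Lambda_n \ot \id_C)(\w)]$, where $\cE^\dagger$ is the adjoint of $\cE$. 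Because $\cE$ is a channel, $\cE^\dagger$ is completely positive and unital, so $\{ (\id_B \ot \cE^\dagger)(\Pi_n) \}_n$ is again a valid measurement; hence $\PS(v;\SetPL) \le \PS(\w;\SetPL)$, and taking suprema gives the equality of denominators.

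Next, $G_C(\rho,\Free_C) = G_C(\rho,\co \Free_C)$ follows exactly as in Corollary~\ref{cor:Main}: $\PS(\cdot;\SetPL)$ is convex in the input state (a maximum over $\Pi$ of linear functionals), so any $v \in \co \Free_C$ written as a convex combination of $\w_i \in \Free_C$ obeys $\PS(v;\SetPL) \le \max_i \PS(\w_i;\SetPL)$, which together with $\Free_C \subseteq \co \Free_C$ forces the benchmark suprema to coincide. To close the chain with Theorem~\ref{thm:ExLocal} applied to $\co \Free_C$, I would check its hypotheses: $\co \Free_C$ is convex by construction, and it preserves operations on $C$ because $\Free_C$ does --- for a channel $\cE'$ on $C$ and $v = (\id_A \ot \cE)(\w) \in \Free_C$ one has $(\id_A \ot \cE')(v) = (\id_A \ot (\cE' \circ \cE))(\w) \in \Free_C$, and this property is inherited by convex combinations. (The positive-definiteness needed for $R_{\co \Free_C}$ to be well defined is inherited from $\Free \subseteq \co \Free_C$.) Theorem~\ref{thm:ExLocal} then yields $G_C(\rho,\co \Free_C) = 1 + R_{\co \Free_C}(\rho)$, completing the chain.

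I expect the main obstacle to be the first equality, specifically the observation that local processing of the auxiliary system cannot increase the benchmark probability. Everything rests on the fact that the adjoint of a channel is unital and completely positive, so that pre-applying $\cE$ to $C$ can be transferred onto the POVM without leaving the set of measurements; once this is secured, the convexity reduction and the verification of the hypotheses of Theorem~\ref{thm:ExLocal} are routine.
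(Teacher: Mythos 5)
Your proposal is correct and follows essentially the same route as the paper's proof: you establish $\sup_{v \in \Free_C} \PS(v;\SetPL) = \sup_{\w \in \Free} \PS(\w;\SetPL)$ by pushing the local channel $\cE$ onto the measurement via its unital, completely positive adjoint, reduce to $\co \Free_C$ exactly as in Corollary~\ref{cor:Main}, and verify that $\co \Free_C$ preserves operations on $C$ before invoking Theorem~\ref{thm:ExLocal}. The paper performs these same three steps (in a slightly different order), so there is nothing substantive to add.
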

\begin{proof}
 Using the same method as in Corollary~\ref{cor:Main},
 we find $G_C(\rho,\Free_C) = G_C(\rho,\co \Free_C)$.
 Each $v \in \co \Free_C$ can be expressed in the form $\sum_i q_i (\id_A \ot \cE_i)(\w_i)$,
 where $\w_1, \w_2, \dots$ are in $\Free$, $\cE_1, \cE_2, \dots$ are channels on $C$,
 and $\{ q_i \}$ is a probability distribution.
 Therefore, for any channel $\cE'$ on $C$,
 $(\id_A \ot \cE')(v) = \sum_i q_i (\id_A \ot \cE' \circ \cE_i)(\w_i) \in \co \Free_C$,
 and thus $\co \Free_C$ preserves operations on $C$.
 From Theorem~\ref{thm:ExLocal}, the right-hand equality
 in Eq.~\eqref{eq:main_ext_local_general} holds.
 To show the left-hand equality in Eq.~\eqref{eq:main_ext_local_general},
 it is sufficient to show that
 \begin{alignat}{1}
  \sup_{\w \in \Free} \PS(\w;\SetPL) &= \sup_{v \in \Free_C} \PS(v;\SetPL).
  \label{eq:main_vw2}
 \end{alignat}
 From $\Free \subseteq \Free_C$,
 the left-hand side is clearly less than or equal to the right-hand side.
 The left-hand side is also greater than or equal to the right-hand side
 since for each $v \in \Free_C$ expressed as $(\id_A \ot \cE)(\w)$
 (where $\w$ is in $\Free$ and $\cE$ is a channel on $C$),
 we have, for any measurement $\Pi$,
 \begin{alignat}{1}
  \PS(v,\Pi;\SetPL)
  &= \sum_{n=1}^N p_n \Tr[\Pi_n \cdot (\Lambda_n \ot \id_C)[(\id_A \ot \cE)(\w)]] \nonumber \\
  &= \sum_{n=1}^N p_n \Tr[(\id_B \ot \cE)^\dagger(\Pi_n) \cdot (\Lambda_n \ot \id_C)(\w)] \nonumber \\
  &\le \PS(\w;\SetPL),
 \end{alignat}
 where $\dagger$ denotes the adjoint.
 The inequality follows from $\{ (\id_B \ot \cE)^\dagger(\Pi_n) \}_{n=1}^N$ being a measurement.
\end{proof}

Equation~\eqref{eq:main_ext_local_general} allows the discrimination power
using auxiliary systems to be formulated in terms of the robustness.
This corollary also shows that $\rho \not\in \Free$ is a bound resource state
in the sense of $G_C$, i.e., $G_C(\rho,\Free) = 1$,
if and only if $\rho$ is in $\co \Free_C$.
Furthermore, Corollaries~\ref{cor:Main} and \ref{cor:ExLocalGeneral} give
$G_C(\rho,\Free) = 1 + R_{\co \Free_C}(\rho) = G(\rho,\Free_C)$.
Figure~\ref{fig:robustness} schematically represents $R_{\co \Free}(\rho)$ and $R_{\co \Free_C}(\rho)$.

\section{Conclusion}

In this study, we have quantified the discrimination power in quantum channel discrimination problems
for any resource theory of states.
First, we have quantified the discrimination power of any resource state using the robustness measure
in channel discrimination problems without auxiliary systems.
This means that the robustness can be fully characterized as a value that can be interpreted
operationally as discrimination power of channels.
Next, we have shown that in channel discrimination problems using an auxiliary system $C$,
the discrimination power can similarly be expressed using the robustness measure
with respect to the set $\co \Free_C$.
This result is expected to be useful in analyzing the discrimination power
in channel discrimination problems involving auxiliary systems.

In our study, we have considered a single-shot channel discrimination strategy
that maximizes the average success probability.
Future developments of this research could involve considering other optimization criteria,
such as an unambiguous criterion or a minimax criterion, or multi-shot strategies
to clarify the measures representing discrimination power.
Clarifying such measures would deepen our understanding of the relationship between
quantum channel discrimination problems and quantum resource theories,
promoting further theoretical advancements.

\section*{Acknowledgment}

We thank for O.~Hirota, M.~Sohma, T.~S.~Usuda, and K.~Kato for insightful discussions.
This work was supported by the Air Force Office of Scientific Research under
award number FA2386-22-1-4056.

%%%%%%%%%%%%%%%%%%%%%%%%%%%%%%%%%%%%%%%%%%%%%%%%%%%%%%%%%%%%%%%%%%%%%%%%%%%%%%%%
% Appendix

\appendix

\section{Preliminaries}

For each quantum system $A$, let $N_A$ be the level of $A$.
The set of all positive semidefinite matrices of system $A$ is denoted by $\Pos_A$,
where square matrices of order $N_A$ are referred to as square matrices of $A$.
When two Hermitian matrices $X$ and $Y$ of $A$ satisfy $X - Y \in \Pos_A$,
we write $X \ge Y$ or $Y \le X$.
In particular, $X \ge \zero$ (where $\zero$ is the zero matrix) is equivalent to $X \in \Pos_A$.
The sets of all completely positive (CP) maps, all channels, i.e., trace-preserving CP maps,
and all trace-nonicreasing CP maps (called TNI-CP maps) from system $A$ to system $B$
are denoted by $\CP_{A \to B}$, $\Chn_{A \to B}$, and $\TNICP_{A \to B}$, respectively.
A map from $A$ to $A$ is called a map on $A$,
and the identity channel on $A$ is denoted by $\id_A$.
CP maps $\{ \Lambda_n \in \CP_{A \to B} \}_{n=1}^N$
(where $N$ is a natural number) that satisfy $\sum_{n=1}^N \Lambda_n \in \Chn_{A \to B}$
are called subchannels.

A map $f$ is often written as $f(\Endash)$.
For example, $\Tr(e \cdot \Endash)$ with $e \in \Pos_A$ refers to
the map $\Pos_A \ni x \mapsto \Tr(e x) \in \Realp$,
where $\Realp$ is the set of all non-negative real numbers,
and in particular, $\Tr(I_{N_A} \cdot \Endash)$ can be written as $\Tr \Endash$.
A CP map $\Lambda$ is a channel if and only if it satisfies
$\Tr[\Lambda(\Endash)] = \Tr \Endash$.

The computational basis of each system $A$ is denoted by $\{ \ket{n} \}_{n=1}^{N_A}$.
The unitary matrix $X$ of order $N_A$ defined by
$X \ket{n} = \ket{n + 1}$ $(\forall n \in \{1,\dots,N_A-1\})$ and $X \ket{N_A} = \ket{1}$
is called the \termdef{generalized Pauli-$X$ matrix}.
The unitary matrix $Z$ of order $N_A$ defined by
$Z \ket{n} = \exp[2 \pi n \sqrt{-1}/N_A] \ket{n}$ $~(\forall n \in \{1,\dots,N_A\})$
is called the \termdef{generalized Pauli-$Z$ matrix}.

Let $\Phi_A$ be the non-normalized maximally entangled state of the composite system $A \ot A$
denoted by
\begin{alignat}{2}
 \Phi_A &\coloneqq \ket{\Phi}\bra{\Phi} \in \Pos_{A \ot A},
 &\quad \ket{\Phi} &\coloneqq \sum_{n=1}^{N_A} \ket{n} \ot \ket{n}.
\end{alignat}
We have, for each $\rho, \sigma \in \Pos_A$,
\begin{alignat}{1}
 \Tr \left[ \Phi_A \cdot (\rho \ot \sigma) \right] &= \Tr(\sigma^\T \rho),
 \label{eq:Phi_T}
\end{alignat}
where $^\T$ is the transpose.
Also, we have, for each $\rho, \sigma \in \Pos_{A \ot C}$,
\begin{alignat}{1}
 \Tr \left[ \Phi_C \cdot \left[ [\Tr(\sigma \cdot \Endash) \ot \id_C] \c (\id_A \ot \Phi_C)
 \ot \id_C \right] (\rho) \right] &= \Tr(\sigma \rho)
 \label{eq:Phi_zigzag}
\end{alignat}
since, for each $\rho_1 \in \Pos_A$ and $\rho_2 \in \Pos_C$,
\begin{alignat}{1}
 &\hspace{-1em} \Tr \left[ \Phi_C \cdot [[\Tr(\sigma \cdot \Endash) \ot \id_C] \c (\id_A \ot \Phi_C) \ot \id_C]
 (\rho_1 \ot \rho_2) \right] \nonumber \\
 &= \Tr \left[ \rho_2^\T \cdot [\Tr[\sigma \cdot (\rho_1 \ot \Endash)] \ot \id_C](\Phi_C) \right] \nonumber \\
 &= \Tr \left[ (\sigma' \ot \rho_2^\T) \cdot \Phi_C \right] \nonumber \\
 &= \Tr(\sigma' \rho_2) \nonumber \\
 &= \Tr[\sigma (\rho_1 \ot \rho_2)]
\end{alignat}
holds, where $\sigma' \coloneqq \Tr_A[\sigma (\rho_1 \ot I_{N_C})]$
and the first and third equalities follow from Eq.~\eqref{eq:Phi_T}.

\section{Group covariant measurements} \label{sec:sym}

For any system $B$ and any finite group $G$, let us consider a collection of
unitary matrices $\{ U_g \}_{g \in G}$ of order $N_B$ satisfying $U_e = I_{N_B}$
for the identity element $e$ of $G$, and $\cU_g \circ \cU_h = \cU_{gh}$ for all $g,h \in G$,
where $\cU_g$ is $U_g (\Endash) U_g^{-1} \in \Chn_{B \to B}$, i.e., the map
$\Den_B \ni \rho \mapsto U_g \rho U_g^{-1} \in \Den_B$.
Let $\cG \coloneqq \{ \cU_g \}_{g \in G}$;
then, $\cG$ can be seen as an action of $G$ on $\Den_B$.
$\cG$ itself is also a group with composition $\circ$ as the operation
and $\cU_e = \id_B$ as its identity element.

Let $|G|$ denote the number of elements in $G$, and consider a collection of channels
$\{ \Lambda_g \}_{g \in G}$ from $A$ to $B$
that satisfies $\Lambda_{gh} = \cU_g \circ \Lambda_h$ for all $g,h \in G$.
Also, let us consider the equal prior probabilities $\{ p_g \coloneqq |G|^{-1} \}_{g \in G}$;
then, the collection $\{ p_g, \Lambda_g \}_{g \in G}$
of channels $\Lambda_g$ with prior probabilities $p_g$ (or simply the collection $\{ \Lambda_g \}$)
is called \termdef{$\cG$-covariant}.
Additionally, a measurement $\Pi \coloneqq \{ \Pi_g \}_{g \in G}$ on system $B$
is called \termdef{$\cG$-covariant} if it satisfies $\Pi_{gh} = \cU_g(\Pi_h)$ for all $g,h \in G$.

\begin{ex} \label{ex:cG_cycle}
 In the proof of Theorem~\ref{thm:MainConvex}, we consider a collection of channels
 $\{ \Lambda_n \}_{n=1}^N$ that satisfies
 $\Lambda_n(\sigma) = X^{n-1} \cdot \Lambda_1(\sigma) \cdot X^{1-n}$.
 For an integer $n$, let
 \begin{alignat}{1}
  \mod{n}{N} &\coloneqq
  \begin{dcases}
   n, & 1 \le n \le N, \\
   \mod{n - N}{N}, & n > N, \\
   \mod{n + N}{N}, & n < 1, \\
  \end{dcases}
  \label{eq:mod}
 \end{alignat}
 i.e., $\mod{n}{N}$ is $n' \in \{1,2,\dots,N\}$ such that $n - n'$ is divisible by $N$.
 Consider the cyclic group $G \coloneqq \{ 1,2,\dots,N \}$,
 where the product of $m,n \in G$ is $\mod{m + n}{N}$, and the identity element of $G$ is $N$.
 Let $\cG \coloneqq \{ \cU_k \coloneqq X^k (\Endash) X^{-k} \}_{k \in G}$; then,
 $\{ \Lambda_n \}_{n=1}^N$ is $\cG$-covariant
 since $\Lambda_{\mod{m+n}{N}} = \cU_m \circ \Lambda_n$ holds for all $m,n \in G$.
 Note that $\cU_N = \id_B$ holds from $X^N = I_N$.
\end{ex}

The following lemma holds.
\begin{lemma} \label{lemma:Sym}
 For any two systems $A$ and $B$, consider a $\cG$-covariant collection
 $\{ p_g, \Lambda_g \}_{g \in G}$ of channels $\Lambda_g$ from $A$ to $B$
 with prior probabilities $p_g$.
 For any measurement $\Pi \coloneqq \{ \Pi_g \}_{g \in G}$ on system $B$,
 let us define the measurement $\Pi^\sym \coloneqq \{ \Pi^\sym_g \}_{g \in G}$ as
 \begin{alignat}{1}
  \Pi^\sym_g &\coloneqq \frac{1}{|G|} \sum_{h \in G} \cU_h(\Pi_{h^{-1}g}),
  \label{eq:Pisym}
 \end{alignat}
 where $h^{-1}$ is the inverse of $h \in G$.
 Then, $\Pi^\sym$ is $\cG$-covariant, and we have, for any state $\sigma$ of $A$,
 \begin{alignat}{1}
  \PS(\sigma,\Pi;\{ p_g,\Lambda_g\}) &= \PS(\sigma,\Pi^\sym;\{ p_g,\Lambda_g\})
  \nonumber \\
  &= \Tr \left[ \Pi^\sym_{g'} \cdot \Lambda_{g'}(\sigma) \right], \quad \forall g' \in G.
  \label{eq:PS_Pi_Pisym}
 \end{alignat}
\end{lemma}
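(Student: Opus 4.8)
The plan is to verify that $\Pi^\sym$ is a legitimate POVM, establish its $\cG$-covariance by a change of summation variable, and then prove both equalities in Eq.~\eqref{eq:PS_Pi_Pisym} by exploiting the fact that conjugation by a unitary cancels under the trace whenever the effect and the channel transform in the same way. First I would confirm that $\Pi^\sym$ is a measurement: each $\Pi^\sym_g$ is a convex combination of the operators $\cU_h(\Pi_{h^{-1}g})$, which are positive semidefinite because every $\cU_h$ is a channel and every $\Pi_{h^{-1}g}$ is positive. For normalization I would interchange the two sums and use that as $g$ runs over $G$ so does $h^{-1}g$, obtaining $\sum_g \Pi^\sym_g = |G|^{-1}\sum_h \cU_h(I_{N_B}) = I_{N_B}$, since $\cU_h(I_{N_B}) = U_h I_{N_B} U_h^{-1} = I_{N_B}$.

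Next, to prove $\cG$-covariance, namely $\Pi^\sym_{gk} = \cU_g(\Pi^\sym_k)$, I would expand the right-hand side using $\cU_g \circ \cU_h = \cU_{gh}$ to get $|G|^{-1}\sum_h \cU_{gh}(\Pi_{h^{-1}k})$, then substitute $h' = gh$, so that $h^{-1}k = h'^{-1}gk$, recovering exactly the defining sum for $\Pi^\sym_{gk}$. This step is a routine reindexing.

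The core of the argument is Eq.~\eqref{eq:PS_Pi_Pisym}. For the first equality I would insert the definition of $\Pi^\sym_g$ into $\PS(\sigma,\Pi^\sym;\{p_g,\Lambda_g\}) = |G|^{-1}\sum_g \Tr[\Pi^\sym_g \Lambda_g(\sigma)]$, producing a double sum over $g$ and $h$. The key observation is that channel covariance $\Lambda_g = \cU_h \circ \Lambda_{h^{-1}g}$ gives $\Lambda_g(\sigma) = U_h \Lambda_{h^{-1}g}(\sigma) U_h^{-1}$, while $\cU_h(\Pi_{h^{-1}g}) = U_h \Pi_{h^{-1}g} U_h^{-1}$; inside the trace the factors $U_h^{-1}U_h$ cancel and cyclicity of the trace removes the outer $U_h$, leaving $\Tr[\Pi_{h^{-1}g}\Lambda_{h^{-1}g}(\sigma)]$. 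Reindexing $g' = h^{-1}g$ then collapses the inner sum to $\PS(\sigma,\Pi;\{p_g,\Lambda_g\})$ independently of $h$, and the remaining average over $h$ contributes a factor of one. For the second equality I would invoke that both $\Pi^\sym$ and $\{\Lambda_g\}$ are $\cG$-covariant: taking the case $k = e$ gives $\Pi^\sym_g = \cU_g(\Pi^\sym_e)$ and $\Lambda_g(\sigma) = \cU_g(\Lambda_e(\sigma))$, and the same cancellation of $U_g$-conjugation under the trace shows that $\Tr[\Pi^\sym_g \Lambda_g(\sigma)]$ is independent of $g$. Hence it equals $\Tr[\Pi^\sym_{g'}\Lambda_{g'}(\sigma)]$ for every $g'$, and averaging over $g$ yields the stated identity.

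The only point demanding genuine care is the bookkeeping of the group reindexings; the mechanism itself is uniform across all three steps, so once the cancellation $U_h(\cdots)U_h^{-1}$ under the trace is set up cleanly, the remaining manipulations are mechanical. I therefore expect no substantive obstacle beyond keeping the index substitutions consistent.
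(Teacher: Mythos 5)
Your proposal is correct and follows essentially the same argument as the paper: group averaging, reindexing over the group, and cancellation of the unitary conjugation under the trace via the covariance $\Lambda_g = \cU_h \circ \Lambda_{h^{-1}g}$. The only cosmetic difference is the order of the two equalities --- the paper first shows $\Tr[\Pi^\sym_{g'} \cdot \Lambda_{g'}(\sigma)] = \PS(\sigma,\Pi;\{p_g,\Lambda_g\})$ for every $g'$ and then averages, whereas you first establish $\PS(\sigma,\Pi^\sym;\{p_g,\Lambda_g\}) = \PS(\sigma,\Pi;\{p_g,\Lambda_g\})$ by a double sum and then deduce the $g'$-independence of the single term from the covariance of $\Pi^\sym$ and $\{\Lambda_g\}$; both orderings are valid and rest on the same mechanism.
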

\begin{proof}
 It is clear that $\Pi^\sym$ is a measurement since
 \begin{alignat}{1}
  \sum_{g \in G} \Pi^\sym_g
  &= \frac{1}{|G|} \sum_{g \in G} \sum_{h \in G} \cU_h(\Pi_{h^{-1}g})
  = \frac{1}{|G|} \sum_{h \in G} \cU_h \left( \sum_{g \in G} \Pi_{h^{-1}g} \right)
  \nonumber \\
  &= \frac{1}{|G|} \sum_{h \in G} \cU_h(I_{N_B})
  = I_{N_B}
 \end{alignat}
 and $\Pi^\sym_g \ge \zero$ $~(\forall g \in G)$ hold.
 Also, we have, for each $g,h \in G$,
 \begin{alignat}{1}
  \Pi^\sym_{gh} &= \frac{1}{|G|} \sum_{k \in G} \cU_k(\Pi_{k^{-1}gh})
  = \frac{1}{|G|} \sum_{k' \in G} \cU_{gk'}(\Pi_{k'^{-1}h})
  \nonumber \\
  &= \cU_g \left[ \frac{1}{|G|} \sum_{k' \in G} \cU_{k'}(\Pi_{k'^{-1}h}) \right]
  = \cU_g(\Pi^\sym_h),
 \end{alignat}
 where $k' \coloneqq g^{-1}k$.
 Thus, $\Pi^\sym$ is $\cG$-covariant.
 We have, for any $\sigma \in \Den_A$ and $g' \in G$,
 \begin{alignat}{1}
  \Tr \left[ \Pi^\sym_{g'} \cdot \Lambda_{g'}(\sigma) \right]
  &= \frac{1}{|G|} \sum_{h \in G} \Tr[\cU_h(\Pi_{h^{-1}g'}) \cdot \Lambda_{g'}(\sigma)] \nonumber \\
  &= \frac{1}{|G|} \sum_{h \in G} \Tr[\Pi_{h^{-1}g'} \cdot \cU_{h^{-1}}[\Lambda_{g'}(\sigma)]] \nonumber \\
  &= \frac{1}{|G|} \sum_{h \in G} \Tr[\Pi_{h^{-1}g'} \cdot \Lambda_{h^{-1}g'}(\sigma)] \nonumber \\
  &= \PS(\sigma,\Pi;\{ p_g,\Lambda_g\}),
  \label{eq:Pisym_proof1}
 \end{alignat}
 where the second equality follows from $\Tr[\cU_h(x) \cdot y] = \Tr[x \cdot \cU_{h^{-1}}(y)]$
 for any $x,y \in \Pos_B$.
 Therefore, we obtain
 \begin{alignat}{1}
  \PS(\sigma,\Pi^\sym;\{ p_g,\Lambda_g\})
  &= \frac{1}{|G|} \sum_{g' \in G} \Tr \left[ \Pi^\sym_{g'} \cdot \Lambda_{g'}(\sigma) \right]
  \nonumber \\
  &= \frac{1}{|G|} \sum_{g' \in G} \PS(\sigma,\Pi;\{ p_g,\Lambda_g\}) \nonumber \\
  &= \PS(\sigma,\Pi;\{ p_g,\Lambda_g\}).
  \label{eq:Pisym_proof2}
 \end{alignat}
 Equations~\eqref{eq:Pisym_proof1} and \eqref{eq:Pisym_proof2} yield Eq.~\eqref{eq:PS_Pi_Pisym}.
\end{proof}

\section{Proof of Theorem~\ref{thm:ExLocal}} \label{sec:ExLocal}

After giving two lemmas, we will prove Theorem~\ref{thm:ExLocal}.
Let us consider two systems $A$ and $C$ and a subset $S$ of $\Den_{A \ot C}$.
If for any $\w \in S$ and any TNI-CP map $\cE$ on $C$,
the non-normalized state $\upsilon \coloneqq (\id_A \ot \cE)\w$, once normalized,
is in $S$ or $\upsilon = \zero$,
then we say that $S$ \termdef{preserves probabilistic transformations} on $C$.

\begin{lemma} \label{lemma:ExLocalGain}
 Arbitrarily choose two systems $A$ and $C$ and a positive real number $\varepsilon$.
 If the set of free states $\Free$ of the composite system $A \ot C$ is convex and
 preserves operations on $C$, then there exist a natural number $N$ and
 a collection of subchannels $\{ \tLambda_n \}_{n=1}^N$ from $A$ to $C$ such that
 \begin{alignat}{1}
  \frac{\Tr \left[ \Phi_C \cdot (\tLambda_1 \ot \id_C)(\rho) \right]}{
  \sup_{\w \in \Free} \sum_{n=1}^N \Tr \left[ \Phi_C \cdot (\tLambda_n \ot \tGamma_n)(\w) \right]}
  &\ge \frac{1 + R_\Free(\rho)}{1 + \varepsilon}
  \label{eq:ExLocalGain}
 \end{alignat}
 holds for any subchannels $\{ \tGamma_n \}_{n=1}^N$ on $C$.
 In particular, if $\Free$ preserves probabilistic transformations on $C$, then
 the same holds for $\varepsilon = 0$.
\end{lemma}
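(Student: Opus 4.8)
The plan is to translate the inequality into a feasibility problem for positive operators on $A \ot C$ and then solve it by a ``signal-plus-filler'' construction, handling the two cases only at the very end. For a CP map $\tLambda_n$ from $A$ to $C$, the functional $\sigma \mapsto \Tr[\Phi_C \cdot (\tLambda_n \ot \id_C)(\sigma)]$ is positive and linear, hence equals $\Tr(e_n \sigma)$ for $e_n \coloneqq (\tLambda_n^\dagger \ot \id_C)(\Phi_C) \in \Pos_{A \ot C}$; conversely every element of $\Pos_{A \ot C}$ arises this way (the standard Choi correspondence), and $\{ \tLambda_n \}$ are subchannels, i.e.\ $\sum_n \tLambda_n \in \Chn_{A \to C}$, exactly when $\sum_n \Tr_C e_n = I_{N_A}$. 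Using $(\tLambda_n \ot \tGamma_n) = (\tLambda_n \ot \id_C) \circ (\id_A \ot \tGamma_n)$ and the adjoint, the $n$-th summand in the denominator of Eq.~\eqref{eq:ExLocalGain} becomes $\Tr[e_n \cdot (\id_A \ot \tGamma_n)(\w)]$. So, writing $f_n \coloneqq (\id_A \ot \tGamma_n)(\w)$ and $\phi \coloneqq \sum_n f_n = (\id_A \ot \sum_n \tGamma_n)(\w)$, the task reduces to choosing $c > 0$ and $\{ e_n \} \subset \Pos_{A \ot C}$ with $e_1 = c x^\opt$ (so the numerator is $c \Tr(x^\opt \rho) = c[1 + R_\Free(\rho)]$) and $\sum_n \Tr_C e_n = I_{N_A}$, such that $\sum_n \Tr(e_n f_n) \le c(1 + \varepsilon)$ for all $\w \in \Free$ and all subchannels $\{ \tGamma_n \}$; division then yields Eq.~\eqref{eq:ExLocalGain}.

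Next I would record the two estimates that drive everything. Since $\sum_n \tGamma_n$ is a channel, $\phi \in \Free$ (preservation of operations) and $\sum_n \Tr f_n = \Tr \w = 1$; and since $f_1 \le \phi$, the signal term obeys $\Tr(e_1 f_1) = c \Tr(x^\opt f_1) \le c \Tr(x^\opt \phi) \le c$. The only remaining freedom is the fillers $e_2, \dots, e_N$, constrained by $\sum_{n \ge 2} \Tr_C e_n = D \coloneqq I_{N_A} - c \Tr_C x^\opt$, which forces $c \Tr_C x^\opt \le I_{N_A}$ so that $\zero \le D \le I_{N_A}$. The main obstacle lives here: the trace condition $\sum_n \Tr_C e_n = I_{N_A}$ generally prevents taking all $e_n$ proportional to $x^\opt$ (that would require $\Tr_C x^\opt \propto I_{N_A}$), so the fillers necessarily contribute an error that must somehow be driven below $c \varepsilon$.

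For the general case I would make the fillers maximally flat in $C$ and spread them over many slots: set $e_n = \frac{1}{(N-1)N_C} D \ot I_{N_C}$ for $n = 2, \dots, N$, so $\sum_{n \ge 2} \Tr_C e_n = D$ and hence $\sum_n \Tr_C e_n = I_{N_A}$. Because $\Tr_C \phi = \Tr_C \w$ is a normalized state of $A$ and $\|D\|_\infty \le 1$, the filler contribution is governed by the adversary's \emph{total} trace budget rather than by each slot: $\sum_{n \ge 2} \Tr(e_n f_n) = \frac{1}{(N-1)N_C} \Tr_A[D \sum_{n \ge 2} \Tr_C f_n] \le \frac{\Tr_A[D \Tr_C \w]}{(N-1)N_C} \le \frac{1}{(N-1)N_C}$. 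This is the crux: for a single filler the error is only $O(1/N_C)$ with $N_C$ fixed, which need not be $\le c\varepsilon$; the resolution is that distributing the filler over $N - 1$ slots shrinks each $e_n$ in operator norm by $1/(N-1)$ while $\sum_n \Tr f_n \le 1$ is unchanged, so taking $N \ge 1 + (c \varepsilon N_C)^{-1}$ gives denominator $\le c + \frac{1}{(N-1)N_C} \le c(1 + \varepsilon)$.

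Finally, for the sharper statement I would exploit that preservation of probabilistic transformations lets me bound each slot individually. Then $\hat f_n \coloneqq f_n / \Tr f_n \in \Free$ whenever $\Tr f_n > 0$, so $\Tr(e_n f_n) = \Tr f_n \cdot \Tr(e_n \hat f_n)$, and it suffices that every filler be dual-feasible, $\Tr(e_n \w) \le c$ for all $\w \in \Free$ (which $e_1 = c x^\opt$ already satisfies). I would therefore replace the flat fillers by small positive operators drawn from the convex set $c \cdot \{ y \ge \zero : \Tr(y \w) \le 1 ~ \forall \w \in \Free \}$; since this set contains arbitrarily small multiples of every rank-one operator, one can realize $\sum_{n \ge 2} \Tr_C e_n = D$ exactly at the cost of enlarging $N$, whence $\sum_n \Tr(e_n f_n) \le c \sum_n \Tr f_n = c$ and Eq.~\eqref{eq:ExLocalGain} holds with $\varepsilon = 0$.
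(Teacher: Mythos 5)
Your proposal is correct and takes essentially the same route as the paper's proof of Lemma~\ref{lemma:ExLocalGain}: your Choi-operator dictionary $e_n \leftrightarrow \tLambda_n$ is precisely the paper's zigzag construction of $\te$ from $e = c\,x^\opt$ (Eq.~\eqref{eq:ExLocalGain_e_sigma}), your signal bound via $f_1 \le \phi = (\id_A \ot \sum_n \tGamma_n)(\w) \in \Free$ matches the step where $\tGamma_1 + \tGamma_0$ is absorbed into a channel in Eq.~\eqref{eq:ExLocalGain2}, and your flat fillers $\frac{1}{(N-1)N_C} D \ot I_{N_C}$ spread over $N-1$ slots play exactly the role of the paper's $\frac{1}{N-1}(\tau \circ a)$. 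The only deviations are cosmetic and sound: in the $\varepsilon = 0$ case you bound each slot by per-slot dual feasibility of the fillers together with $\sum_n \Tr f_n = 1$, instead of the paper's convex split $\alpha \w' + (1-\alpha)\w''$ against $\Tr(e\,\w^\opt)$, which lets you choose $N \ge 1 + (c N_C)^{-1}$ without invoking the maximizer $\w^\opt$.
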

\begin{proof}
 First, let us consider the case in which $\Free$ preserves operations on $C$.
 Let $x^\opt$ be the optimal solution of Problem~\ref{prob:R_dual}.
 Let $\w^\opt$ be $\w \in \cl \Free$ that maximizes $\Tr(x^\opt \w)$.
 Also, let
 \begin{alignat}{1}
  e &\coloneqq \frac{x^\opt}{\|\Tr_C x^\opt\|_\infty}, \nonumber \\
  \te &\coloneqq [\Tr(e \cdot \Endash) \ot \id_C] \c (\id_A \ot \Phi_C), \nonumber \\
  a &\coloneqq (\Tr \Endash) - \Tr[\te(\Endash)]
 \end{alignat}
 and let $\tau$ be any state of $C$.
 Note that since $\te$ is a TNI-CP map, so is $a$.
 Since $\Tr(x^\opt \w) \le \Tr(x^\opt \w^\opt)$ holds for any $\w \in \Free$,
 $\Tr(e \w) \le \Tr(e \w^\opt)$ holds for any $\w \in \Free$.
 Let $\varepsilon' \coloneqq \varepsilon / \|\Tr_C x^\opt\|_\infty$,
 and choose a natural number $N$ such that
 \begin{alignat}{1}
  \frac{1}{N-1} \Tr[\Phi_C \cdot (\tau \c a \ot \id_C)(\sigma)] &\le \Tr(e \sigma) + \varepsilon',
  \nonumber \\
  &\quad \forall \sigma \in \Den_{A \ot C}
  \label{eq:ExLocalN}
 \end{alignat}
 holds.
 Since the right-hand side is positive, such $N$ exists.
 Note that since $\Tr[\Phi_C \cdot (\tau \circ a \ot \id_C)(\sigma)] \le 1$
 and $\Tr(e \sigma) \ge 0$ hold, a stricter condition $\frac{1}{N-1} \le \varepsilon'$ can be
 used to choose $N$.
 Define $\{ \tLambda_n \}_{n=1}^N$ as
 \begin{alignat}{1}
  \tLambda_n &\coloneqq
  \begin{dcases}
   \te, & n = 1, \\
   \frac{1}{N-1} (\tau \circ a), & n \ge 2; \\
  \end{dcases}
 \end{alignat}
 then, since $\sum_{n=1}^N \Tr[\tLambda_n(\Endash)] = \Tr[\te(\Endash)] + a = \Tr \Endash$ holds,
 $\sum_{n=1}^N \tLambda_n$ is a channel, and thus $\{ \tLambda_n \}_{n=1}^N$ are subchannels.

 We have, for any $\sigma \in \Den_{A \ot C}$ and any CP map $\cE$ on $C$,
 \begin{alignat}{1}
  &\hspace{-1em} \Tr \left[ \Phi_C \cdot (\tLambda_1 \ot \cE)(\sigma) \right] \nonumber \\
  &= \Tr \left[ \Phi_C \cdot (\te \ot \cE)(\sigma) \right] \nonumber \\
  &= \Tr \left[ \Phi_C \cdot \left[ [\Tr(e \cdot \Endash) \ot \id_C] \c (\id_A \ot \Phi_C)
  \ot \id_C \right] \right. \nonumber \\
  &\quad \left. [(\id_A \ot \cE)(\sigma)] \right] \nonumber \\
  &= \Tr[e \cdot (\id_A \ot \cE)(\sigma)],
  \label{eq:ExLocalGain_e_sigma}
 \end{alignat}
 where the last equality follows from Eq.~\eqref{eq:Phi_zigzag}.
 Substituting $\sigma = \rho$ and $\cE = \id_C$ into this equation,
 we find that the numerator on the left-hand side of Eq.~\eqref{eq:ExLocalGain} satisfies
 \begin{alignat}{1}
  \Tr \left[ \Phi_C \cdot (\tLambda_1 \ot \id_C)(\rho) \right]
  &= \Tr(e \rho).
  \label{eq:ExLocalGain1}
 \end{alignat}
 Also, the denominator of the left-hand side of Eq.~\eqref{eq:ExLocalGain}
 is at most $\Tr(e \w^\opt) + \varepsilon'$.
 Indeed, we have, for any $\w \in \Free$,
 \begin{alignat}{1}
  &\hspace{-1em} \sum_{n=1}^N \Tr \left[ \Phi_C \cdot (\tLambda_n \ot \tGamma_n)(\w) \right] \nonumber \\
  &= \Tr \left[ \Phi_C \cdot (\tLambda_1 \ot \tGamma_1)(\w) \right] \nonumber \\
  &\quad + \frac{1}{N-1} \Tr \left[ \Phi_C \cdot (\tau \c a \ot \tGamma_0)(\w) \right]
  \nonumber \\
  &= \Tr \left[ e \cdot (\id_A \ot \tGamma_1)(\w) \right] \nonumber \\
  &\quad + \frac{1}{N-1} \Tr \left[ \Phi_C \cdot (\tau \c a \ot \id_C)
  \left[ (\id_A \ot \tGamma_0)(\w) \right] \right] \nonumber \\
  &\le \Tr \left[ e \cdot (\id_A \ot \tGamma_1)(\w) \right]
  + \Tr \left[ e \cdot (\id_A \ot \tGamma_0)(\w) \right] + \varepsilon' \nonumber \\
  &= \Tr \left[ e \cdot \left[ \id_A \ot (\tGamma_1 + \tGamma_0) \right](\w) \right] + \varepsilon'
  \nonumber \\
  &\le \Tr(e \w^\opt) + \varepsilon',
  \label{eq:ExLocalGain2}
 \end{alignat}
 where $\tGamma_0 \coloneqq \sum_{n=2}^N \tGamma_n$.
 The second equality follows from Eq.~\eqref{eq:ExLocalGain_e_sigma}
 with $\sigma$ substituted by $\w$ and $\cE$ substituted by $\tGamma_1$.
 The first inequality follows from Eq.~\eqref{eq:ExLocalN}
 with $\sigma$ substituted by $(\id_A \ot \tGamma_0)(\w)$, and
 the second inequality follows from $\left[ \id_A \ot (\tGamma_1 + \tGamma_0) \right](\w) \in \Free$,
 derived from the fact that $\Free$ preserves operations on $C$
 and $\tGamma_1 + \tGamma_0$ is a channel.
 Therefore, from Eqs.~\eqref{eq:ExLocalGain1} and \eqref{eq:ExLocalGain2}, we obtain
 \begin{alignat}{1}
  \frac{\Tr \left[ \Phi_C \cdot (\tLambda_1 \ot \id_C)(\rho) \right]}{
  \sup_{\w \in \Free} \sum_{n=1}^N \Tr \left[ \Phi_C \cdot (\tLambda_n \ot \tGamma_n)(\w) \right]}
  &\ge \frac{\Tr(e \rho)}{\Tr(e \w^\opt) + \varepsilon'} \nonumber \\
  &= \frac{\Tr(x^\opt \rho)}{\Tr(x^\opt \w^\opt) + \varepsilon} \nonumber \\
  &\ge \frac{1 + R_\Free(\rho)}{1 + \varepsilon},
  \label{eq:ExLocalGainResult}
 \end{alignat}
 where the last inequality follows from $\Tr(x^\opt \rho) = 1 + R_\Free(\rho)$
 and $\Tr(x^\opt \w^\opt) \le 1$.

 Next, let us consider the case in which $\Free$ preserves probabilistic transformations on $C$.
 This is similar to the case in which $\Free$ preserves operations on $C$,
 but instead of Eq.~\eqref{eq:ExLocalN}, we choose $N$ such that
 \begin{alignat}{1}
  \frac{1}{N-1} \Tr[\Phi_C \cdot (\tau \c a \ot \id_C)(\w)] \le \Tr(e \w^\opt),
  \quad \forall \w \in \Free \nonumber \\
  \label{eq:ExLocalNCP}
 \end{alignat}
 holds.
 Note that since $\Tr[\Phi_C \cdot (\tau \circ a \ot \id_C)(\w)] \le 1$ holds,
 a stricter condition $\frac{1}{N-1} \le \Tr(e \w^\opt)$ can be used to choose $N$.
 We have
 \begin{alignat}{1}
  \hspace{1em} &\hspace{-1em} \sum_{n=1}^N \Tr \left[ \Phi_C \cdot (\tLambda_n \ot \tGamma_n)(\w) \right] \nonumber \\
  &= \Tr \left[ \Phi_C \cdot (\tLambda_1 \ot \tGamma_1)(\w) \right]
  + \frac{1}{N-1} \Tr \left[ \Phi_C \cdot (\tau \c a \ot \tGamma_0)(\w) \right]
  \nonumber \\
  &= \Tr \left[ e \cdot (\id_A \ot \tGamma_1)(\w) \right] \nonumber \\
  &\quad + \frac{1}{N-1} \Tr \left[ \Phi_C \cdot (\tau \c a \ot \id_C)
  \left[ (\id_A \ot \tGamma_0)(\w) \right] \right] \nonumber \\
  &= \alpha \Tr(e \w')
  + \frac{1 - \alpha}{N-1} \Tr \left[ \Phi_C \cdot (\tau \c a \ot \id_C)(\w'') \right]
  \nonumber \\
  &\le \alpha \Tr(e \w^\opt) + (1 - \alpha) \Tr(e \w^\opt) \nonumber \\
  &= \Tr(e \w^\opt),
  \label{eq:ExLocalGain2CP}
 \end{alignat}
 where the first two equalities are the same as Eq.~\eqref{eq:ExLocalGain2},
 and let $\alpha \coloneqq \Tr[(\id_A \ot \tGamma_1)(\w)]$.
 In the third equality, we chose $\w' \in \Free$ such that $(\id_A \ot \tGamma_1)(\w) = \alpha \w'$ holds,
 and similarly, $\w'' \in \Free$ such that $(\id_A \ot \tGamma_0)(\w) = (1 - \alpha) \w''$ holds.
 Since $\Free$ preserves probabilistic transformations on $C$, such $\w'$ and $\w''$ exist.
 Note that $\alpha + \Tr \left[ (\id_A \ot \tGamma_0)(\w) \right] = 1$ holds
 since $\tGamma_1 + \tGamma_0$ is a channel.
 The inequality follows from $\Tr(e \w') \le \Tr(e \w^\opt)$ and Eq.~\eqref{eq:ExLocalNCP}.
 Therefore, from Eqs.~\eqref{eq:ExLocalGain1} and \eqref{eq:ExLocalGain2CP},
 Eq.~\eqref{eq:ExLocalGainResult} with $\varepsilon' = \varepsilon = 0$ holds.
\end{proof}

\begin{lemma} \label{lemma:ExLocalSym}
 Arbitrarily choose two systems $A$ and $C$, a natural number $N$, and
 a collection of subchannels $\{ \tLambda_n \}_{n=1}^N$ from $A$ to $C$.
 Define the collection of channels
 $\{ \Lambda_{i,q,r} \in \Chn_{A \to B \ot C} \}_{(i,q,r)=(1,1,1)}^{(N,N_C,N_C)}$ by
 \begin{alignat}{1}
  \Lambda_{i,q,r}(\sigma) &= (\tX^{i-1} \ot X^{q-1}Z^{r-1})
  \left[ \sum_{n=1}^N \ket{n}\bra{n} \ot \tLambda_n(\sigma) \right] \nonumber \\
  &\quad \cdot (\tX^{1-i} \ot Z^{1-r}X^{1-q}), \quad \forall \sigma \in \Den_A,
  \label{eq:ExLocalSym_Lambda}
 \end{alignat}
 where $B$ is an $N$-level system, $\{ \ket{n} \}_{n=1}^N$ is the computational basis of $B$,
 $\tX$ is the generalized Pauli-$X$ matrix of order $N$,
 and $X$ (resp. $Z$) is the generalized Pauli-$X$ (resp. Pauli-$Z$) matrix of order $N_C$.
 Also, let us consider the equal prior probabilities
 $\{ p_{i,q,r} \coloneqq 1 / (N N_C^2) \}_{(i,q,r)=(1,1,1)}^{(N,N_C,N_C)}$.
 Then, the following two properties hold.
 \begin{enumerate}[leftmargin=2em]
  \item For any measurement $\Pi \coloneqq \{ \Pi_{i,q,r} \}_{(i,q,r)=(1,1,1)}^{(N,N_C,N_C)}$
        on $B \ot C \ot C$, there exists a measurement
        $\Pi^\sym \coloneqq \{ \Pi^\sym_{i,q,r} \}_{(i,q,r)=(1,1,1)}^{(N,N_C,N_C)}$
        on $B \ot C \ot C$ satisfying
        \begin{alignat}{1}
         \hspace{1em}
         \PS(\rho,\Pi;\SetPLIQR) &= \PS(\rho,\Pi^\sym;\SetPLIQR) \nonumber \\
         &= \Tr \left[ \Pi^\sym_{1,1,1} \cdot (\Lambda_{1,1,1} \ot \id_C) (\rho) \right], \nonumber \\
         &\quad \forall \rho \in \Den_{A \ot C}
         \label{eq:ExLocalSymPS}
        \end{alignat}
        and
        \begin{alignat}{1}
         \Pi^\sym_{i,q,r} &= (\tX^{i-1} \ot X^{q-1}Z^{r-1} \ot I_{N_C}) \Pi^\sym_{1,1,1}
         \nonumber \\
         &\quad \cdot (\tX^{1-i} \ot Z^{1-r}X^{1-q} \ot I_{N_C}).
         \label{eq:ExLocalSymPi}
        \end{alignat}
  \item For such $\Pi^\sym$, let
        \begin{alignat}{1}
         \tGamma_n &\coloneqq N_C \left[ \id_C \ot \Tr \left[ \Pi^\sym_{1,1,1}
         \cdot (\ket{n}\bra{n} \ot \id_{C \ot C}) \right] \right] \nonumber \\
         &\quad \c (\Phi_C \ot \id_C) \in \CP_{C \to C}, \quad n \in \{ 1,2,\dots,N \};
         \label{eq:ExLocalSymGamma}
        \end{alignat}
        then, $\{ \tGamma_n \}_{n=1}^N$ are subchannels on $C$ that satisfy
        \begin{alignat}{1}
         &\hspace{-1em} \sum_{n=1}^N \Tr \left[ \Phi_C \cdot (\tLambda_n \ot \tGamma_n)(\rho) \right]
         \nonumber \\
         &= N_C \PS(\rho,\Pi^\sym;\SetPLIQR),
         \quad \forall \rho \in \Den_{A \ot C}.
         \label{eq:ExLocalSymGammaPS}
        \end{alignat}
 \end{enumerate}
\end{lemma}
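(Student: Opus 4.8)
The plan is to establish the two parts separately: Part~1 follows from the group-covariance machinery of Lemma~\ref{lemma:Sym}, while Part~2 is an explicit computation with the maximally entangled state. Throughout I label the two copies of $C$ carried by the output system $B\ot C\ot C$ by $C_1$ (the copy produced together with $B$) and $C_2$ (the auxiliary copy), so that the measurement $\Pi^\sym$ and the operators $M_n$ introduced below live on $B\ot C_1\ot C_2$ and $C_1\ot C_2$, respectively.

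For Part~1, I would first recognize $\{\Lambda_{i,q,r}\ot\id_C\}$ as a covariant collection. Writing $U_{i,q,r}\coloneqq\tX^{i-1}\ot X^{q-1}Z^{r-1}$ and $\cU_{i,q,r}\coloneqq U_{i,q,r}(\Endash)U_{i,q,r}^{-1}$, Eq.~\eqref{eq:ExLocalSym_Lambda} reads $\Lambda_{i,q,r}=\cU_{i,q,r}\circ\Lambda_{1,1,1}$ with $\Lambda_{1,1,1}(\sigma)=\sum_n\ket{n}\bra{n}\ot\tLambda_n(\sigma)$. Even though $ZX=\omega XZ$ with $\omega\coloneqq\exp(2\pi\sqrt{-1}/N_C)$, the phase cancels under conjugation, so $\{\cU_{i,q,r}\}$ is a genuine representation of the abelian group $G\coloneqq\{1,\dots,N\}\times\{1,\dots,N_C\}^2$ with componentwise cyclic addition as in Example~\ref{ex:cG_cycle}, whose identity index is $(1,1,1)$ with $\cU_{1,1,1}=\id$. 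Tensoring trivially on the auxiliary copy, $\{\Lambda_{i,q,r}\ot\id_C\}$ from $A\ot C$ to $B\ot C\ot C$ is $\cG$-covariant for $\{\cU_{i,q,r}\ot\id_C\}$, and the priors $p_{i,q,r}=1/(N N_C^2)=1/|G|$ are uniform. Applying Lemma~\ref{lemma:Sym} with $A\ot C$ and $B\ot C\ot C$ in place of $A$ and $B$ then produces $\Pi^\sym$; Eq.~\eqref{eq:ExLocalSymPi} is its covariance specialized to the identity index, and Eq.~\eqref{eq:ExLocalSymPS} is Eq.~\eqref{eq:PS_Pi_Pisym} evaluated at the index $(1,1,1)$.

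For Part~2, I would record $M_n\coloneqq\bra{n}_B\Pi^\sym_{1,1,1}\ket{n}_B\ge\zero$ on $C_1\ot C_2$, so that expanding $\Phi_C$ in the computational basis turns Eq.~\eqref{eq:ExLocalSymGamma} into $\bra{j}\tGamma_n(x)\ket{k}=N_C\Tr[M_n(\ket{j}\bra{k}\ot x)]$, with $\ket{j}\bra{k}$ on the $C_1$ slot and $x$ on the $C_2$ slot. Since $M_n\ge\zero$ is proportional to the Choi matrix of $\tGamma_n$, each $\tGamma_n$ is CP. For trace preservation of $\sum_n\tGamma_n$ I would compute $\sum_n\Tr[\tGamma_n(x)]=N_C\Tr[\Tr_{B,C_1}(\Pi^\sym_{1,1,1})\,x]$, using $\sum_n M_n=\Tr_B(\Pi^\sym_{1,1,1})$, and then evaluate $\Tr_{B,C_1}(\Pi^\sym_{1,1,1})$ from the covariance Eq.~\eqref{eq:ExLocalSymPi} and completeness $\sum_{i,q,r}\Pi^\sym_{i,q,r}=I$: since every covariance unitary acts trivially on $C_2$, tracing over $B$ and $C_1$ makes all $N N_C^2$ terms equal, so $N N_C^2\,\Tr_{B,C_1}(\Pi^\sym_{1,1,1})=\Tr_{B,C_1}(I)=N N_C\,I_{N_C}$, giving $N_C\,\Tr_{B,C_1}(\Pi^\sym_{1,1,1})=I_{N_C}$. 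Hence $\{\tGamma_n\}$ are subchannels.

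Finally, for Eq.~\eqref{eq:ExLocalSymGammaPS} I would chain the computations: Eq.~\eqref{eq:ExLocalSymPS} together with the block form of $\Lambda_{1,1,1}$ gives $\PS(\rho,\Pi^\sym;\SetPLIQR)=\sum_n\Tr[M_n(\tLambda_n\ot\id_C)(\rho)]$, so it remains to check $N_C\Tr[M_n(\tLambda_n\ot\id_C)(\rho)]=\Tr[\Phi_C(\tLambda_n\ot\tGamma_n)(\rho)]$ for each $n$. Writing $\rho=\sum_\alpha P_\alpha\ot Q_\alpha$, expanding $\Tr[\Phi_C\cdot(\Endash)]=\bra{\Phi}(\Endash)\ket{\Phi}$, and inserting the matrix elements of $\tGamma_n$ recorded above, the sum $\sum_{j,k}\bra{j}\tLambda_n(P_\alpha)\ket{k}\,\ket{j}\bra{k}=\tLambda_n(P_\alpha)$ reassembles $\tLambda_n(P_\alpha)$ into the $C_1$ slot of $M_n$, which is exactly the transpose-trick identity of Eq.~\eqref{eq:Phi_zigzag}; summing over $n$ yields Eq.~\eqref{eq:ExLocalSymGammaPS}. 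I expect the main obstacle to be the consistent bookkeeping of the three copies of $C$ (the output copy $C_1$, the auxiliary copy $C_2$, and the pair carried by $\Phi_C$) in this last identity and in the trace-preservation step; once the copies are tracked, both reduce to routine applications of the entangled-state identities of the Preliminaries.
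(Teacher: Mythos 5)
Your proposal is correct. Part~1 and the verification of Eq.~\eqref{eq:ExLocalSymGammaPS} follow essentially the paper's own route: you identify $\{\Lambda_{i,q,r}\ot\id_C\}$ as covariant under the group of conjugations by $\tX^{i-1}\ot X^{q-1}Z^{r-1}\ot I_{N_C}$ (and you rightly note that the scalar phase in $ZX = e^{2\pi\sqrt{-1}/N_C}XZ$ cancels under conjugation, so these maps form a genuine representation with uniform priors $1/(NN_C^2)$), then invoke Lemma~\ref{lemma:Sym}; and your coordinate computation with $M_n \coloneqq \bra{n}\Pi^\sym_{1,1,1}\ket{n}$, which gives $\bra{j}\tGamma_n(x)\ket{k} = N_C \Tr[M_n(\ket{j}\bra{k}\ot x)]$ and reassembles $\tLambda_n(P_\alpha)$ inside the trace, is exactly the paper's chain of equalities based on Eq.~\eqref{eq:Phi_zigzag} written out in matrix elements. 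The one genuine divergence is the proof that $\sum_n \tGamma_n$ is trace preserving. The paper first observes that $(\cD\ot\id_C)\c\Lambda_{i,q,r} = \Lambda_{i,q,r}$ for the dephasing channel $\cD$ on $B$, assumes without loss of generality that $\Pi$ (hence $\Pi^\sym$) is $B$-diagonal, and only then applies the twirl identity $\sum_{i,q,r} g_{i-1,q-1,r-1} = I_N\ot I_{N_C}\ot(N_C\Tr_{B\ot C}\Endash)$ --- an identity that is valid only on $B$-dephased operators, since the cyclic twirl $\sum_i \tX^i(\Endash)\tX^{-i}$ does not annihilate off-diagonal blocks on $B$ (only the Weyl twirl on the first $C$ factor fully depolarizes). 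You instead take $\Tr_{B\ot C}$ of the completeness relation $\sum_{i,q,r}\Pi^\sym_{i,q,r} = I$ and use the fact that this partial trace is invariant under conjugation by unitaries supported on the traced factors, so all $NN_C^2$ terms contribute equally and $NN_C^2\,\Tr_{B\ot C}\Pi^\sym_{1,1,1} = NN_C\,I_{N_C}$. This is shorter, dispenses with the dephasing normalization entirely, and establishes the subchannel property for \emph{every} covariant $\Pi^\sym$, which is precisely how the lemma is stated; the paper's reduction costs an extra WLOG step in exchange for a standardized $B$-diagonal measurement. One cosmetic remark: $M_n$ is not literally proportional to the Choi matrix of $\tGamma_n$ --- from $\bra{j}\tGamma_n(x)\ket{k} = N_C\Tr[M_n(\ket{j}\bra{k}\ot x)]$ the Choi matrix equals $N_C$ times $M_n$ after a transpose and an exchange of the two tensor factors --- but both operations preserve positive semidefiniteness, so your conclusion that $\tGamma_n$ is completely positive stands (it also follows directly from Eq.~\eqref{eq:ExLocalSymGamma} being a composition of CP maps).
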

\begin{proof}
 (1): Let us consider the following group:
 \begin{alignat}{1}
  \cG &\coloneqq \left\{ g_{i,q,r} \right\}_{(i,q,r)=(1,1,1)}^{(N,N_C,N_C)}, \nonumber \\
  g_{i,q,r} &\coloneqq (\tX^i \ot X^q Z^r \ot I_{N_C}) \cdot \Endash
  \cdot (\tX^{-i} \ot Z^{-r}X^{-q} \ot I_{N_C}) \nonumber \\
  &\quad \in \Chn_{B \ot C \ot C \to B \ot C \ot C}.
  \label{eq:ExLocalSymG}
 \end{alignat}
 Since $\Lambda_{\mod{i+i'}{N},\mod{q+q'}{N_C},\mod{r+r'}{N_C}} \ot \id_C
 = g_{i,q,r} \c (\Lambda_{i',q',r'} \ot \id_C)$ holds for any $i,i',q,q',r,r'$
 (where $\mod{n}{N}$ is defined in Eq.~\eqref{eq:mod}),
 the collection $\{ \Lambda_{i,q,r} \ot \id_C \}_{(i,q,r)=(1,1,1)}^{(N,N_C,N_C)}$
 is $\cG$-covariant.
 Note that the corresponding group $G$ is
 \begin{alignat}{1}
  G &\coloneqq \{ (i,q,r) \mid i \in \{1,2,\dots,N\}, ~q,r \in \{1,2,\dots,N_C\} \}
 \end{alignat}
 with the product $(i,q,r) \cdot (i',q',r')$ being
 $(\mod{i+i'}{N},\mod{q+q'}{N_C},\mod{r+r'}{N_C})$ and the identity element $(N,N_C,N_C)$.
 Therefore, from Lemma~\ref{lemma:Sym}, there exists a $\cG$-covariant measurement $\Pi^\sym$
 (i.e., satisfying Eq.~\eqref{eq:ExLocalSymPi}) that satisfies Eq.~\eqref{eq:ExLocalSymPS}.
 
 (2): We have
 \begin{alignat}{1}
  \Tr \c \left( \sum_{n=1}^N \tGamma_n \right)
  &= N_C \Tr \c \sum_{n=1}^N \left[ \id_C
  \ot \Tr \left[ \Pi^\sym_{1,1,1} \cdot (\ket{n}\bra{n} \ot \id_{C \ot C}) \right]
  \right] \nonumber \\
  &\quad \c (\Phi_C \ot \id_C) \nonumber \\
  &= N_C \Tr \c \left[ \id_C \ot \Tr \left[ \Pi^\sym_{1,1,1} \cdot (I_N \ot \id_{C \ot C}) \right] \right]
  \nonumber \\
  &\quad \c (\Phi_C \ot \id_C) \nonumber \\
  &= N_C \Tr \left[ \Pi^\sym_{1,1,1} \cdot (I_N \ot I_{N_C} \ot \id_C) \right] \nonumber \\
  &= N_C \Tr \left[ \left( \Tr_{B \ot C} \Pi^\sym_{1,1,1} \right) \cdot \Endash \right],
  \label{eq:ExLocalSymGammaSum}
 \end{alignat}
 where $\Tr_{B \ot C}$ is the partial trace over the first two systems $B$ and $C$ of
 the composite system $B \ot C \ot C$.
 For the channel $\cD$ on $B$ represented by the map
 $\cD \colon \Pos_B \ni \sigma \mapsto \sum_{n=1}^N \ket{n}\braket{n|\sigma|n}\bra{n} \in \Pos_B$,
 each $\Lambda_{i,q,r}$ satisfies $(\cD \ot \id_C) \c \Lambda_{i,q,r} = \Lambda_{i,q,r}$.
 Therefore, we assume, without loss of generality, that $\Pi$ satisfies
 $(\cD \ot \id_{C \ot C})(\Pi_{i,q,r}) = \Pi_{i,q,r}$ $~(\forall i,q,r)$,
 and thus $\Pi^\sym$ also satisfies
 $(\cD \ot \id_{C \ot C})(\Pi^\sym_{i,q,r}) = \Pi^\sym_{i,q,r}$ $~(\forall i,q,r)$.
 Since $\Pi^\sym$ is a measurement, we have
 \begin{alignat}{1}
  I_N \ot I_{N_C} \ot I_{N_C}
  &= \sum_{i=1}^N \sum_{q=1}^{N_C} \sum_{r=1}^{N_C} \Pi^\sym_{i,q,r} \nonumber \\
  &= \sum_{i=1}^N \sum_{q=1}^{N_C} \sum_{r=1}^{N_C} g_{i-1,q-1,r-1}(\Pi^\sym_{1,1,1}) \nonumber \\
  &= I_N \ot I_{N_C} \ot N_C \Tr_{B \ot C} \Pi^\sym_{1,1,1},
 \end{alignat}
 where the last equality follows from $\sum_{i=1}^N \sum_{q=1}^{N_C} \sum_{r=1}^{N_C} g_{i-1,q-1,r-1}
 = I_N \ot I_{N_C} \ot (N_C \Tr_{B \ot C} \Endash)$.
 This gives $N_C \Tr_{B \ot C} \Pi^\sym_{1,1,1} = I_{N_C}$, and thus
 $\Tr \c \left( \sum_{n=1}^N \tGamma_n \right) = \Tr \Endash$ holds
 from Eq.~\eqref{eq:ExLocalSymGammaSum}.
 Therefore, $\sum_{n=1}^N \tGamma_n$ is a channel, i.e., $\{ \tGamma_n \}_{n=1}^N$ are subchannels.
 We obtain
 \begin{alignat}{1}
  \hspace{1em} &\hspace{-1em}
  \sum_{n=1}^N \Tr \left[ \Phi_C \cdot (\tLambda_n \ot \tGamma_n)(\rho) \right] \nonumber \\
  &= N_C \sum_{n=1}^N \Tr \left[ \Phi_C \cdot \left[ \tLambda_n \ot
  \left[ \id_C \ot \Tr \left[ \Pi^\sym_{1,1,1} \cdot (\ket{n}\bra{n} \ot \id_{C \ot C}) \right]
  \right] \right. \right. \nonumber \\
  &\quad \left. \left. \vphantom{\tLambda_n} \c (\Phi_C \ot \id_C) \right] (\rho) \right] \nonumber \\
  &= N_C \sum_{n=1}^N \Tr \left[ \Phi_C \cdot
  \left[ \left[ \id_C \ot \Tr \left[ \Pi^\sym_{1,1,1} \cdot (\ket{n}\bra{n} \ot \id_{C \ot C}) \right]
  \right] \right. \right. \nonumber \\
  &\quad \left. \left. \vphantom{\tLambda_n} \c (\tLambda_n \ot \Phi_C \ot \id_C) \right] (\rho)
  \right] \nonumber \\
  &= N_C \sum_{n=1}^N \Tr \left[ \Pi^\sym_{1,1,1} \cdot \left[ (\ket{n}\bra{n} \ot \id_{C \ot C})
  \c (\tLambda_n \ot \id_C) \right] (\rho) \right] \nonumber \\
  &= N_C \sum_{n=1}^N \Tr \left[ \Pi^\sym_{1,1,1} \cdot (\ket{n}\bra{n} \ot \tLambda_n \ot \id_C)
  (\rho) \right] \nonumber \\
  &= N_C \Tr \left[ \Pi^\sym_{1,1,1} \cdot (\Lambda_{1,1,1} \ot \id_C)
  (\rho) \right] \nonumber \\
  &= N_C \PS(\rho,\Pi^\sym;\SetPLIQR),
 \end{alignat}
 and thus Eq.~\eqref{eq:ExLocalSymGammaPS} holds.
 The last equality follows from the right-hand equality of Eq.~\eqref{eq:ExLocalSymPS}.
\end{proof}

\begin{proof}[Theorem~\ref{thm:ExLocal}]
 Since $G_C(\rho,\Free) \le 1 + R_\Free(\rho)$ holds as stated in the main text,
 it suffices to show $G_C(\rho,\Free) \ge 1 + R_\Free(\rho)$.
 Choose an arbitrary positive real number $\varepsilon$, and consider a collection of subchannels
 $\{ \tLambda_n \}_{n=1}^N$ from $A$ to $C$ obtained by Lemma~\ref{lemma:ExLocalGain}.
 Define the collection of channels
 $\{ \Lambda_{i,q,r} \in \Chn_{A \to B \ot C} \}_{(i,q,r)=(1,1,1)}^{(N,N_C,N_C)}$
 by Eq.~\eqref{eq:ExLocalSym_Lambda}, and consider the equal prior probabilities
 $\{ p_{i,q,r} \coloneqq 1 / (N N_C^2) \}_{(i,q,r)=(1,1,1)}^{(N,N_C,N_C)}$.
 Furthermore, define the measurement
 $\Psi \coloneqq \{ \Psi_{i,q,r} \}_{(i,q,r)=(1,1,1)}^{(N,N_C,N_C)}$ on $B \ot C \ot C$ by
 \begin{alignat}{1}
  \Psi_{i,q,r} &\coloneqq 
  (\tX^{i-1} \ot X^{q-1}Z^{r-1} \ot I_{N_C}) (\ket{1}\bra{1} \ot N_C^{-1} \Phi_C) \nonumber \\
  &\quad \cdot (\tX^{1-i} \ot Z^{1-r}X^{1-q} \ot I_{N_C}).
  \label{eq:ExLocalPsi}
 \end{alignat}
 Define $\cG$ by Eq.~\eqref{eq:ExLocalSymG}; then, both
 $\{ p_{i,q,r}, \Lambda_{i,q,r} \ot \id_C \}$ and $\Psi$ are $\cG$-covariant.
 Let us consider $\tGamma_n$ obtained by substituting
 $\Psi_{1,1,1} = \ket{1}\bra{1} \ot N_C^{-1} \Phi_C$ into $\Pi^\sym_{1,1,1}$
 in Eq.~\eqref{eq:ExLocalSymGamma}.
 Then, from
 \begin{alignat}{1}
  \tGamma_n &= \left[ \id_C \ot \Tr[(\ket{1}\bra{1} \ot \Phi_C) \cdot (\ket{n}\bra{n} \ot \id_{C \ot C})]
  \right] \c (\Phi_C \ot \id_C) \nonumber \\
  &= |{\braket{1|n}}|^2 \cdot \id_C,
 \end{alignat}
 $\tGamma_1 = \id_C$ and $\tGamma_2 = \tGamma_3 = \dots = \tGamma_N = \zero$ hold,
 and thus we obtain
 $\Tr \left[ \Phi_C \cdot (\tLambda_1 \ot \id_C)(\rho) \right] = N_C \PS(\rho,\Psi;\SetPLIQR)$
 from Eq.~\eqref{eq:ExLocalSymGammaPS} with $\Pi^\sym$ substituted by $\Psi$.
 Therefore, we have
 \begin{alignat}{1}
  \PS(\rho;\SetPLIQR) &\ge \PS(\rho,\Psi;\SetPLIQR) \nonumber \\
  &= N_C^{-1} \Tr \left[ \Phi_C \cdot (\tLambda_1 \ot \id_C)(\rho) \right].
  \label{eq:ExLocalMain1}
 \end{alignat}

 Let $\w^\opt$ be $\w \in \cl \Free$ that maximizes $\PS(\w;\SetPLIQR)$.
 From Lemma~\ref{lemma:ExLocalSym}, there exists a $\cG$-covariant measurement $\Pi^\sym$
 that satisfies $\PS(\w^\opt;\SetPLIQR) = \PS(\w^\opt,\Pi^\sym;\SetPLIQR)$.
 Define the collection of subchannels $\{ \tGamma_n \}_{n=1}^N$ by Eq.~\eqref{eq:ExLocalSymGamma};
 then, we have
 \begin{alignat}{1}
  \sup_{\w \in \Free} \PS(\w;\SetPLIQR) &= \PS(\w^\opt,\Pi^\sym;\SetPLIQR) \nonumber \\
  &= N_C^{-1} \sum_{n=1}^N \Tr \left[ \Phi_C \cdot (\tLambda_n \ot \tGamma_n)(\w^\opt) \right],
  \nonumber \\
  \label{eq:ExLocalMain2}
 \end{alignat}
 where the right-hand equality follows from Eq.~\eqref{eq:ExLocalSymGammaPS}.
 Thus, from Eqs.~\eqref{eq:ExLocalMain1} and \eqref{eq:ExLocalMain2}, we obtain
 \begin{alignat}{1}
  &\hspace{-1em} \frac{\PS(\rho;\SetPLIQR)}{\sup_{\w \in \Free} \PS(\w;\SetPLIQR)} \nonumber \\
  &\ge \frac{\Tr \left[ \Phi_C \cdot (\tLambda_1 \ot \id_C)(\rho) \right]}{
  \sum_{n=1}^N \Tr \left[ \Phi_C \cdot (\tLambda_n \ot \tGamma_n)(\w^\opt) \right]}
  \nonumber \\
  &\ge \frac{\Tr \left[ \Phi_C \cdot (\tLambda_1 \ot \id_C)(\rho) \right]}{
  \sup_{\w \in \Free}\sum_{n=1}^N \Tr \left[ \Phi_C \cdot (\tLambda_n \ot \tGamma_n)(\w) \right]}
  \nonumber \\
  &\ge \frac{1 + R_\Free(\rho)}{1 + \varepsilon},
  \label{eq:ExLocalMain}
 \end{alignat}
 where the second inequality follows from
 \begin{alignat}{1}
  \sum_{n=1}^N \Tr \left[ \Phi_C \cdot (\tLambda_n \ot \tGamma_n)(\w^\opt) \right]
  &\le \max_{\w \in \cl \Free}\sum_{n=1}^N \Tr \left[ \Phi_C \cdot (\tLambda_n \ot \tGamma_n)(\w) \right]
  \nonumber \\
  &= \sup_{\w \in \Free}\sum_{n=1}^N \Tr \left[ \Phi_C \cdot (\tLambda_n \ot \tGamma_n)(\w) \right]
 \end{alignat}
 and the last inequality follows from Lemma~\ref{lemma:ExLocalGain}.
 Since this equation holds for any positive real number $\varepsilon$,
 we have $G_C(\rho,\Free) \ge 1 + R_\Free(\rho)$.
 Therefore, $G_C(\rho,\Free) = 1 + R_\Free(\rho)$ holds.
\end{proof}

In particular, if $\Free$ preserves probabilistic transformations on $C$, then
Eq.~\eqref{eq:ExLocalMain} with $\varepsilon = 0$ holds,
and thus the supremum $\sup_{\SetPLIQR}$ of $G_C(\rho,\Free)$ can be replaced by
the maximum $\max_{\SetPLIQR}$.
We provide two typical examples of subsets of $\Den_{A \ot C}$ that preserve
probabilistic transformations on $C$.
The first example is the set of states with Schmidt number at most $k$ (where $k$ is any natural number)
\cite{Ter-Hor-2000}.
The second example is the set of all unsteerable states from $A$ to $C$ \cite{Gal-Aol-2015}.
Note that the set of all unsteerable states from $C$ to $A$ does not necessarily preserve
probabilistic transformations on $C$ \cite{Pra-Cho-Han-Lee-2019}.

\section{The case in which positive definite free states do not exist} \label{sec:EmptyInterior}

\subsection{Without considering auxiliary systems}

Let us consider the set of free states $\Free$ of an arbitrary system $A$.
Assume that $\Free$ is not empty so that $G(\rho,\Free)$ and $R_\Free(\rho)$ are well-defined
for at least one state $\rho \in \Den_A$.
For any subset $T$ of $\Den_A$, let $S_T$ be the set of all $\rho \in \Den_A$ such that
there exists $\w \in T$ and a positive real number $c$ satisfying $c \rho \le \w$.
$S_T = \Den_A$ is equivalent to $T$ containing a positive definite matrix.

The following lemma holds.
\begin{lemma} \label{lemma:SFree}
 For any $\rho \in \Den_A$, (1) $\rho \in S_{\co \Free}$,
 (2) $R_{\co \Free}(\rho) < \infty$, and (3) $G(\rho,\Free) < \infty$ are all equivalent.
\end{lemma}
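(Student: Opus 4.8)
The plan is to prove the three-way equivalence through the cycle $(1)\Leftrightarrow(2)$, $(2)\Rightarrow(3)$, $(3)\Rightarrow(1)$. The step $(1)\Leftrightarrow(2)$ is pure unwinding of definitions. Writing out the feasibility set of Eq.~\eqref{eq:R} for $\co\Free$, the value $R_{\co\Free}(\rho)$ is finite iff there exist $\w\in\co\Free$, $\lambda\ge0$, and $\tau\in\Den_A$ with $\tfrac{\rho+\lambda\tau}{1+\lambda}=\w$. Eliminating $\tau$, this is equivalent to $\rho\le(1+\lambda)\w$ for some $\w\in\co\Free$ and $\lambda\ge0$, since for such data the matrix $\tau\coloneqq[(1+\lambda)\w-\rho]/\lambda$ is automatically a density matrix. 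Rescaling with $c\coloneqq1/(1+\lambda)$ shows this is the same as the existence of $\w\in\co\Free$ and $c>0$ with $c\rho\le\w$, i.e. $\rho\in S_{\co\Free}$. I would present this as a short chain of equivalences.

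For $(2)\Rightarrow(3)$ I would use a domination argument. If $R_{\co\Free}(\rho)<\infty$, then by the above $\rho\le(1+\lambda)\w$ for some $\w\in\co\Free$, $\lambda\ge0$. For any $\SetPL$ and any measurement $\Pi$, positivity of each $\Lambda_n$ and each $\Pi_n$ gives $\PS(\rho,\Pi;\SetPL)\le(1+\lambda)\PS(\w,\Pi;\SetPL)$ as in the first line of Eq.~\eqref{eq:main_convex_le}, hence $\PS(\rho;\SetPL)\le(1+\lambda)\sup_{v\in\co\Free}\PS(v;\SetPL)$. The convexity identity $\sup_{\w\in\Free}\PS(\w;\SetPL)=\sup_{v\in\co\Free}\PS(v;\SetPL)$ from the proof of Corollary~\ref{cor:Main} (which uses only convexity, not the positive-definiteness assumption) then gives $\PS(\rho;\SetPL)\le(1+\lambda)\sup_{\w\in\Free}\PS(\w;\SetPL)$. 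Since $\Free\ne\emptyset$, the denominator $\sup_{\w\in\Free}\PS(\w;\SetPL)$ is always positive (guessing the most likely hypothesis already attains $\max_n p_n$), so $G(\rho,\Free)\le1+\lambda<\infty$.

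For $(3)\Rightarrow(1)$ I would argue by contraposition, reusing the channel family from the proof of Theorem~\ref{thm:MainConvex}. Assuming $\rho\notin S_{\co\Free}$, the first step is to produce an effect $0\le e\le I_{N_A}$ with $\Tr(e\w)=0$ for every $\w\in\co\Free$ but $\Tr(e\rho)>0$. To do this I would exhibit a single $\bar{\w}\in\co\Free$ whose range contains the range of every $\w\in\co\Free$: starting from any free state and repeatedly averaging (convex combination) with a free state whose range is not yet contained strictly enlarges the range, which must terminate in finite dimension. Letting $P$ be the projection onto the range of $\bar{\w}$ and $e\coloneqq I_{N_A}-P$, one has $P\w=\w$ and thus $\Tr(e\w)=0$ for all $\w\in\co\Free$; moreover $\rho\notin S_{\co\Free}$ means exactly that the range of $\rho$ is not contained in that of $\bar{\w}$ (otherwise $c\rho\le\bar{\w}$ for small $c>0$), so $\Tr(e\rho)>0$ and $P\ne I_{N_A}$, making $e$ a nonzero projection. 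With this $e$ I would invoke the construction of Theorem~\ref{thm:MainConvex}: take $p_n=1/N$ and the measure-and-prepare channels $\Lambda_n(\sigma)=\Tr(e\sigma)\ket{n}\bra{n}+\tfrac{1-\Tr(e\sigma)}{N-1}(I_N-\ket{n}\bra{n})$ on an $N$-level output. The computational-basis measurement gives $\PS(\rho;\SetPL)\ge\Tr(e\rho)$, while $\Tr(e\w)=0$ forces $\Lambda_n(\w)=\tfrac1{N-1}(I_N-\ket{n}\bra{n})$ and a one-line bound yields $\sup_{\w\in\Free}\PS(\w;\SetPL)\le\tfrac1{N-1}$. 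Hence the ratio is at least $(N-1)\Tr(e\rho)$, which tends to $\infty$ as $N\to\infty$, so $G(\rho,\Free)=\infty$.

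The \emph{main obstacle} is precisely the construction of $e$ in the last step. The natural route would be a separating-hyperplane or Lagrange-duality argument applied to the convex cone $\{M: M\le t\w,\ t\ge0,\ \w\in\co\Free\}$, but without a positive-definite free state one cannot assume this cone is closed, so strict separation of $\rho$ from it is not immediate and the dual of Problem~\eqref{prob:R_dual} may fail to be attained. The fix is to bypass duality entirely and build $e$ explicitly from the maximal-range free state $\bar{\w}$, whose existence is guaranteed by finite-dimensionality. Everything else—the definitional equivalence $(1)\Leftrightarrow(2)$, the domination bound, and the success-probability computation for the fixed channel family—is routine reuse of earlier results.
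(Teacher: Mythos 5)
Your proof is correct and follows essentially the same route as the paper: the same explicit decomposition argument for $(1)\Leftrightarrow(2)$, the same domination-plus-convex-hull bound for $(2)\Rightarrow(3)$, and the same contrapositive for $(3)\Rightarrow(1)$ built on the projection onto the orthogonal complement of the joint support of $\co \Free$, the covariant channel family of Theorem~\ref{thm:MainConvex}, and the limit $N \to \infty$. The only cosmetic differences are that you construct the support projection from a maximal-range free state $\bar{\w}$ (the paper defines the span $\V$ of all supports directly, yielding the same projection) and that you bound $\sup_{\w \in \Free} \PS(\w;\SetPL) \le 1/(N-1)$ by a direct one-line estimate instead of invoking the symmetrization Lemma~\ref{lemma:Sym}.
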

\begin{proof}
 $(1) \Rightarrow (2)$:
 Assume $\rho \in S_{\co \Free}$; then, there exists $\w \in \co \Free$ and
 a positive real number $c$ such that $c \rho \le \w$ holds.
 $c \le 1$ clearly holds, and if $c = 1$ holds, then $\rho = \w$ implies $R_{\co \Free}(\rho) = 0$.
 If $c < 1$ holds, then $\lambda \coloneqq \frac{1}{c} - 1$ and
 $\tau \coloneqq \frac{\w - c \rho}{1 - c}$ satisfy $\tau \in \Den_A$ and
 \begin{alignat}{1}
  \frac{\rho + \lambda \tau}{1 + \lambda} &= c \rho + (1 - c) \tau = c \rho + (\w - c \rho) = \w.
 \end{alignat}
 Thus, we have $R_{\co \Free}(\rho) \le \lambda < \infty$.

 $(2) \Rightarrow (3)$:
 $G(\rho,\Free) = G(\rho,\co \Free) \le 1 + R_{\co \Free}(\rho) < \infty$ obviously holds,
 where the equality follows from the fact
 the proof of $G(\rho,\Free) = G(\rho,\co \Free)$ in Corollary~\ref{cor:Main} applies directly,
 and the first inequality follows from the fact the proof of $G(\rho,\Free) \le 1 + R_\Free(\rho)$
 in Theorem~\ref{thm:MainConvex} with $\Free$ replaced by $\co \Free$ applies directly.

 $(3) \Rightarrow (1)$:
 We show the contrapositive, i.e., $G(\rho,\Free) = \infty$ for any $\rho \not\in S_{\co \Free}$.
 Let $\V$ be the smallest complex vector space containing the support spaces of
 all $\w \in \co \Free$; then,
 $S_{\co \Free} = \{ \sigma \in \Den_A \mid \supp \sigma \subseteq \V \}$ holds.
 Let $P \in \Pos_A$ be the projection matrix onto the orthogonal complement of $\V$.
 $\Tr(P \w) = 0$ holds for any $\w \in \Free$ since $\supp \w \subseteq \V$ holds.
 Also, $\Tr(P \rho) > 0$ holds from $\supp \rho \not\subseteq \V$.
 Choose any natural number $N$ and an $N$-level system $B$.
 Let $p_n \coloneqq 1/N$ $~(\forall n \in \{1,2,\dots,N\})$ and
 define channels $\Lambda_1,\dots,\Lambda_N$ from $A$ to $B$ by
 \begin{alignat}{1}
  \Lambda_n(\sigma) &= \Tr(P \sigma) \ket{n}\bra{n}
  + \frac{1 - \Tr(P \sigma)}{N-1} (I_N - \ket{n}\bra{n}), \nonumber \\
  &\quad \sigma \in \Den_A.
 \end{alignat}
 Then, we can use a similar method as in the proof of Theorem~\ref{thm:MainConvex}.
 Specifically, let $\cG \coloneqq \{ X^k (\Endash) X^{-k} \}_{k=1}^N$,
 where $X$ is the generalized Pauli-$X$ matrix of order $N$; then, $\SetL$ is $\cG$-covariant.
 Thus, from Lemma~\ref{lemma:Sym}, for each measurement $\Pi$, there exists
 a $\cG$-covariant measurement $\Pi^\sym$ satisfying
 \begin{alignat}{1}
  \PS(\sigma,\Pi;\SetPL) &= \Tr \left[ \Pi^\sym_1 \cdot \Lambda_1(\sigma) \right]
  \nonumber \\
  &= c \Tr(P \sigma) + (1 - c) \frac{1 - \Tr(P \sigma)}{N-1}, \nonumber \\
  &\quad \forall \sigma \in \Den_A,
  \label{eq:EmptyInterior_PS_qc}
 \end{alignat}
 where $c \coloneqq \braket{1|\Pi^\sym_1|1}$.
 The right-hand equality follows from $\Tr \Pi^\sym_1 = 1$, which is derived from
 $\sum_{n=1}^N \Pi^\sym_n = I_N$.
 Therefore, we have, for each $\w \in \Free$,
 \begin{alignat}{1}
  \PS(\w,\Pi;\SetPL) &= c \Tr(P \w) + (1 - c) \frac{1 - \Tr(P \w)}{N-1} \nonumber \\
  &= \frac{1-c}{N-1} \le \frac{1}{N-1}.
 \end{alignat}
 For a measurement $\Pi$ satisfying $c = \braket{1|\Pi^\sym_1|1} = 1$,
 since $\PS(\rho,\Pi;\SetPL) = \Tr(P \rho)$ holds from Eq.~\eqref{eq:EmptyInterior_PS_qc},
 we obtain
 \begin{alignat}{1}
  G(\rho,\Free) &\ge \frac{\PS(\rho;\SetPL)}{\sup_{\w \in \Free} \PS(\w;\SetPL)}
  \ge \frac{\Tr(P \rho)}{\frac{1}{N-1}} = (N-1) \Tr(P \rho).
 \end{alignat}
 Thus, taking the limit as $N \to \infty$ yields $G(\rho,\Free) = \infty$.
\end{proof}

It is easy to see that Theorem~\ref{thm:MainConvex} and Corollary~\ref{cor:Main} hold
when $R_{\co \Free}(\rho) < \infty$, or equivalently $G(\rho,\Free) < \infty$, holds.
That is, even if $\co \Free$ does not contain a positive definite matrix,
Theorem~\ref{thm:MainConvex} and Corollary~\ref{cor:Main} hold for any $\rho \in S_{\co \Free}$,
and, from Lemma~\ref{lemma:SFree}, $R_{\co \Free}(\rho) = \infty$
and $G(\rho,\Free) = \infty$ hold for any $\rho \not\in S_{\co \Free}$.

\subsection{Considering auxiliary systems}

For any two systems $A$ and $C$, let us consider the set of free states $\Free$
of the composite system $A \ot C$.
Assume that $\Free$ is not empty; then, the following corollary holds.
\begin{cor} \label{cor:SFreeEx}
 For any $\rho \in \Den_{A \ot C}$, (1) $\rho \in S_{\co \Free_C}$,
 (2) $R_{\co \Free_C}(\rho) < \infty$, and (3) $G_C(\rho,\Free_C) < \infty$ are all equivalent.
\end{cor}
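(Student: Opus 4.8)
The plan is to add condition $(3)$ to the equivalence $(1) \Leftrightarrow (2)$, which is supplied directly by Lemma~\ref{lemma:SFree} applied to the single system $A \ot C$ with its free set taken to be $\Free_C$: the quantities $S_{\co \Free_C}$ and $R_{\co \Free_C}(\rho)$ appearing in $(1)$ and $(2)$ are exactly those governed by that lemma, and its $(1) \Rightarrow (2)$ argument uses only the definition of the robustness and so carries over verbatim. It then suffices to prove the cycle $(2) \Rightarrow (3) \Rightarrow (1)$.

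For $(2) \Rightarrow (3)$ I would argue as follows. Since $G_C(\rho, \Free_C)$ optimizes only over the restricted channels $\{ \Lambda_n \ot \id_C \}$, one has the unconditional inequality $G_C(\rho, \Free_C) \le G(\rho, \Free_C)$ (the same observation used at the start of the proof of Theorem~\ref{thm:ExLocal}). By Lemma~\ref{lemma:SFree}, condition $(2)$ already gives $G(\rho, \Free_C) < \infty$, whence $G_C(\rho, \Free_C) < \infty$, which is $(3)$.

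The substance of the proof is $(3) \Rightarrow (1)$, which I would establish in contrapositive form: $\rho \not\in S_{\co \Free_C} \Rightarrow G_C(\rho, \Free_C) = \infty$. As in Lemma~\ref{lemma:SFree}, let $\V$ be the smallest subspace containing the supports of all $\w \in \co \Free_C$, so that $S_{\co \Free_C} = \{ \sigma \in \Den_{A \ot C} \mid \supp \sigma \subseteq \V \}$. The key new ingredient is that $\co \Free_C$ preserves operations on $C$ (as shown in the proof of Corollary~\ref{cor:ExLocalGeneral}) and is in particular invariant under every local unitary $I_{N_A} \ot U$ on $C$; therefore $\V$ is invariant under all such $I_{N_A} \ot U$, which forces the product form $\V = \V_A \ot \mathbb{C}^{N_C}$ for some subspace $\V_A \subseteq \mathbb{C}^{N_A}$. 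Writing $Q$ for the projection onto $\V_A^\perp$ in $A$, the projection onto $\V^\perp$ is then $P = Q \ot I_{N_C}$. I expect this product structure to be the main obstacle: it is precisely what lets the non-free part of $\rho$ be exposed using channels that act only on $A$, which is the restriction imposed in the $G_C$ setting; without it the available maps $\Lambda_n \ot \id_C$ could not reach all of $\V^\perp$.

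Given $P = Q \ot I_{N_C}$, the divergence construction of Lemma~\ref{lemma:SFree} adapts cleanly. Since $\rho \not\in S_{\co \Free_C}$, we have $\Tr(P \rho) = \Tr[(Q \ot I_{N_C}) \rho] > 0$. For an $N$-level system $B$ with equal priors, define channels $\Lambda_n$ from $A$ to $B$ by $\Lambda_n(Y) = \Tr(Q Y) \ket{n}\bra{n} + \frac{\Tr[(I_{N_A} - Q) Y]}{N-1}(I_N - \ket{n}\bra{n})$ and use them as $\{ \Lambda_n \ot \id_C \}$ in the $G_C$ problem. Measuring $\{ \ket{n}\bra{n} \ot I_{N_C} \}$ gives $\PS(\rho; \{ p_n, \Lambda_n \ot \id_C \}) \ge \Tr(P \rho)$. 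For any $\w \in \Free_C$ one has $(Q \ot I_{N_C}) \w = 0$, so $(\Lambda_n \ot \id_C)(\w) = \frac{1}{N-1}(I_N - \ket{n}\bra{n}) \ot \w_C$ with $\w_C \coloneqq \Tr_A \w$; bounding $(I_N - \ket{n}\bra{n}) \ot \w_C \le I_N \ot \w_C$ and using $\sum_n \Pi_n = I_N \ot I_{N_C}$ yields $\PS(\w; \{ p_n, \Lambda_n \ot \id_C \}) \le \frac{1}{N-1}$ for every measurement. Hence $G_C(\rho, \Free_C) \ge (N-1)\Tr(P \rho)$, and letting $N \to \infty$ gives $G_C(\rho, \Free_C) = \infty$. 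Notably, unlike the corresponding step in Lemma~\ref{lemma:SFree}, this bound on the free-state success probability is elementary and needs no group-covariance argument.
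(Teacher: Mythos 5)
Your proposal is correct, and while your handling of $(1)\Leftrightarrow(2)$ and $(2)\Rightarrow(3)$ coincides with the paper's (substitute $\Free_C$ into Lemma~\ref{lemma:SFree}, then use $G_C(\rho,\Free_C)\le G(\rho,\Free_C)$), your proof of the substantive implication $(3)\Rightarrow(1)$ takes a genuinely different route. The paper never examines the support structure of $\co\Free_C$: it observes that $R_{\co\Free_C}(\rho)=\infty$ yields, for every $R>0$, a feasible solution $x$ of Problem~\eqref{prob:R_dual} (with $\co\Free_C$ in place of $\Free$) satisfying $\Tr(x\rho)\ge 1+R$, and then reruns the quantitative machinery built for Theorem~\ref{thm:ExLocal} --- the maximally entangled construction of Lemma~\ref{lemma:ExLocalGain}, the channels $\Lambda_{i,q,r}$ of Eq.~\eqref{eq:ExLocalSym_Lambda}, and the group-covariance reduction of Lemma~\ref{lemma:ExLocalSym} --- with $x$ in place of $x^\opt$, obtaining $G_C(\rho,\co\Free_C)\ge(1+R)/(1+\varepsilon)$ and letting $R\to\infty$. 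You instead prove a structural fact: because $\co\Free_C$ preserves operations on $C$ (as shown in Corollary~\ref{cor:ExLocalGeneral}) it is invariant under every $I_{N_A}\ot U$, so the span $\V$ of the free supports is invariant under all such unitaries; since $\V^\perp$ is then also invariant, the projection onto $\V$ commutes with $I_{N_A}\ot U$ for all $U$ and hence lies in the commutant $M_{N_A}\ot I_{N_C}$, forcing $\V=\V_A\ot\mathbb{C}^{N_C}$ and $P=Q\ot I_{N_C}$. That factorization is precisely what lets the divergence construction of Lemma~\ref{lemma:SFree} be realized by channels of the restricted form $\Lambda_n\ot\id_C$, and your computations all check out: $\PS(\rho,\Pi)=\Tr(P\rho)$ for the measurement $\{\ket{n}\bra{n}\ot I_{N_C}\}$; for $\w\in\Free_C$, $\Tr[(Q\ot I_{N_C})\w]=0$ together with positivity of $\Tr_A[(Q\ot I_{N_C})\,\Endash\,]$ gives $(\Lambda_n\ot\id_C)(\w)=\frac{1}{N-1}(I_N-\ket{n}\bra{n})\ot\Tr_A\w$, whence $\PS(\w;\cdot)\le\frac{1}{N-1}$ for every measurement, with no covariance symmetrization needed. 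What each approach buys: yours is shorter and essentially self-contained for the qualitative conclusion $G_C(\rho,\Free_C)=\infty$, giving an explicit divergence rate $(N-1)\Tr(P\rho)$ with measure-and-prepare channels acting on $A$ alone; the paper's route recycles the apparatus of Theorem~\ref{thm:ExLocal} and therefore treats the finite- and infinite-robustness cases by one uniform family of bounds, whereas your product-subspace trick is specific to the divergent case and would not by itself recover the quantitative statement of Theorem~\ref{thm:ExLocal}.
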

\begin{proof}
 Substituting $\Free_C$ for $\Free$ in Lemma~\ref{lemma:SFree} gives $(1) \Leftrightarrow (2)$.
 If $R_{\co \Free_C}(\rho) < \infty$ holds, then
 we have $G_C(\rho,\Free_C) \le G(\rho,\Free_C) < \infty$,
 where the right-hand inequality follows from $(2) \Rightarrow (3)$
 from Lemma~\ref{lemma:SFree} with $\Free$ replaced by $\Free_C$.
 Thus, $(2) \Rightarrow (3)$ holds.
 It remains to show $(3) \Rightarrow (1)$.
 
 We show the contrapositive of $(3) \Rightarrow (1)$, i.e., $G_C(\rho,\Free_C) = \infty$,
 or equivalently $G_C(\rho,\co \Free_C) = \infty$, for any $\rho \not\in S_{\co \Free_C}$.
 From the contrapositive of $(1) \Leftarrow (2)$, we have $R_{\co \Free_C}(\rho) = \infty$.
 Choose an arbitrary positive real number $R$, and a feasible solution $x$
 of the problem with $\co \Free_C$ substituted for $\Free$ in Problem~\ref{prob:R_dual}
 that satisfies $\Tr(x \rho) - 1 \ge R$, i.e., $x \in \Pos_{A \ot C}$ satisfying
 \begin{alignat}{1}
  \Tr(x \rho) \ge 1 + R, \quad \Tr(x \w) \le 1 ~(\forall \w \in \co \Free_C).
 \end{alignat}
 Since the optimal value of this problem is $R_{\co \Free_C}(\rho) = \infty$,
 such $x$ exists.
 Fix a positive real number $\varepsilon$; then, similarly to Lemma~\ref{lemma:ExLocalGain},
 there exist a natural number $N$ and
 a collection of subchannels $\{ \tLambda_n \}_{n=1}^N$ from $A$ to $C$ such that
 \begin{alignat}{1}
  \frac{\Tr \left[ \Phi_C \cdot (\tLambda_1 \ot \id_C)(\rho) \right]}{
  \sup_{\w \in \co \Free_C} \sum_{n=1}^N \Tr \left[ \Phi_C \cdot (\tLambda_n \ot \tGamma_n)(\w) \right]}
  &\ge \frac{1 + R}{1 + \varepsilon}
  \label{eq:ExLocalGainSFree}
 \end{alignat}
 holds for any subchannels $\{ \tGamma_n \}_{n=1}^N$ on $C$.
 Indeed, we can use $x$ instead of $x^\opt$ in the proof of Lemma~\ref{lemma:ExLocalGain}
 (note that $\co \Free_C$ preserves operations on $C$).
 The rest can be proved similarly to the proof of Theorem~\ref{thm:ExLocal}.
 Define the collection of channels
 $\{ \Lambda_{i,q,r} \in \Chn_{A \to B \ot C} \}_{(i,q,r)=(1,1,1)}^{(N,N_C,N_C)}$
 by Eq.~\eqref{eq:ExLocalSym_Lambda}, and consider the equal prior probabilities
 $\{ p_{i,q,r} \coloneqq 1 / (N N_C^2) \}_{(i,q,r)=(1,1,1)}^{(N,N_C,N_C)}$.
 Also, let us consider the measurement
 $\Psi \coloneqq \{ \Psi_{i,q,r} \}_{(i,q,r)=(1,1,1)}^{(N,N_C,N_C)}$ on $B \ot C \ot C$
 given by Eq.~\eqref{eq:ExLocalPsi}.
 Then, similarly to Eq.~\eqref{eq:ExLocalMain1}, we have
 \begin{alignat}{1}
  \PS(\rho;\SetPLIQR) &\ge N_C^{-1} \Tr \left[ \Phi_C \cdot (\tLambda_1 \ot \id_C)(\rho) \right].
  \label{eq:ExLocalMain1_SFree}
 \end{alignat}
 Let $\w^\opt$ be $\w \in \cl (\co \Free_C)$ that maximizes $\PS(\w;\SetPLIQR)$.
 For the group $\cG$ given by Eq.~\eqref{eq:ExLocalSymG}, from Lemma~\ref{lemma:ExLocalSym},
 there exists a $\cG$-covariant measurement $\Pi^\sym$ that satisfies
 $\PS(\w^\opt;\SetPLIQR) = \PS(\w^\opt,\Pi^\sym;\SetPLIQR)$.
 Define the collection of subchannels $\{ \tGamma_n \}_{n=1}^N$ by Eq.~\eqref{eq:ExLocalSymGamma};
 then, similarly to Eq.~\eqref{eq:ExLocalMain2}, we have
 \begin{alignat}{1}
  \sup_{\w \in \co \Free_C} \PS(\w;\SetPLIQR)
  &= N_C^{-1} \sum_{n=1}^N \Tr \left[ \Phi_C \cdot (\tLambda_n \ot \tGamma_n)(\w^\opt) \right].
  \label{eq:ExLocalMain2_SFree}
 \end{alignat}
 Therefore, from Eqs.~\eqref{eq:ExLocalMain1_SFree} and \eqref{eq:ExLocalMain2_SFree}, we obtain
 \begin{alignat}{1}
  &\hspace{-1em} \frac{\PS(\rho;\SetPLIQR)}{\sup_{\w \in \co \Free_C} \PS(\w;\SetPLIQR)} \nonumber \\
  &\ge \frac{\Tr \left[ \Phi_C \cdot (\tLambda_1 \ot \id_C)(\rho) \right]}{
  \sum_{n=1}^N \Tr \left[ \Phi_C \cdot (\tLambda_n \ot \tGamma_n)(\w^\opt) \right]}
  \nonumber \\
  &\ge \frac{\Tr \left[ \Phi_C \cdot (\tLambda_1 \ot \id_C)(\rho) \right]}{
  \sup_{\w \in \co \Free_C}\sum_{n=1}^N \Tr \left[ \Phi_C \cdot (\tLambda_n \ot \tGamma_n)(\w) \right]}
  \nonumber \\
  &\ge \frac{1 + R}{1 + \varepsilon},
 \end{alignat}
 where the last inequality follows from Eq.~\eqref{eq:ExLocalGainSFree}.
 Taking the limit as $R \to \infty$, we obtain $G_C(\rho,\co \Free_C) = \infty$.
\end{proof}

Similarly to Theorem~\ref{thm:MainConvex} and Corollary~\ref{cor:Main},
it is easy to see that Theorem~\ref{thm:ExLocal} and Corollary~\ref{cor:ExLocalGeneral} hold
when $R_{\co \Free_C}(\rho) < \infty$, or equivalently $G_C(\rho,\Free_C) < \infty$, holds.
That is, even if $\co \Free_C$ does not contain a positive definite matrix,
Theorem~\ref{thm:ExLocal} and Corollary~\ref{cor:ExLocalGeneral} hold
for any $\rho \in S_{\co \Free_C}$,
and, from Corollary~\ref{cor:SFreeEx}, $R_{\co \Free_C}(\rho) = \infty$
and $G_C(\rho,\Free_C) = \infty$ hold for any $\rho \not\in S_{\co \Free_C}$.

\section{The measure defined by the left-hand side of Eq.~(\ref{eq:H0})} \label{sec:GvsH}

Instead of the discrimination power, i.e.,
\begin{alignat}{1}
 G(\rho,\Free) &= \sup_{\SetPL}
 \frac{\PS(\rho;\SetPL)}{\sup_{\w \in \Free} \PS(\w;\SetPL)},
 \label{eq:GH_G}
\end{alignat}
the measure given by the left-hand side of Eq.~\eqref{eq:H0}, i.e.,
\begin{alignat}{1}
 H(\rho,\Free) &\coloneqq \sup_{\SetPL} \max_\Pi
 \frac{\PS(\rho,\Pi;\SetPL)}{\sup_{\w \in \Free} \PS(\w,\Pi;\SetPL)},
 \label{eq:H}
\end{alignat}
has been used in the literature \cite{Tak-Reg-Bu-Liu-2019,Uol-Kra-Sha-Yu-2019,Reg-Lam-Fer-Tak-2021,
Kur-Tak-Ade-Yam-2024,Tur-Gut-Ade-2024,Lam-Reg-Tak-Fer-2021}.
While $G(\rho,\Free)$ and $H(\rho,\Free)$ might appear similar in form,
they differ significantly in two aspects:
(a) intuitive operational meaning and (b) quantification.
We will discuss each aspect in detail below.
Note that the same arguments hold if we consider $G_C(\rho,\Free)$ instead of $G(\rho,\Free)$
and extend $H(\rho,\Free)$ to use an auxiliary system $C$.

As a preliminary step, we present the following lemma.
\begin{lemma} \label{lemma:H}
 Let $E$ be the set of all $e \in \Pos_A$ satisfying $\|e\|_\infty \le 1$
 (often referred to as effects on system $A$).
 Then, we have
 \begin{alignat}{1}
  H(\rho,\Free) &= \sup_{e \in E} \frac{\Tr(e \rho)}{\sup_{\w \in \Free} \Tr(e \w)}.
  \label{eq:H_eq_R}
 \end{alignat}
\end{lemma}
\begin{proof}
 We have, for any $\sigma \in \Den_A$,
 \begin{alignat}{1}
  \PS(\sigma,\Pi;\SetPL) &= \sum_{n=1}^N p_n \Tr[\Pi_n \cdot \Lambda_n(\sigma)]
  = \Tr(e \sigma),
 \end{alignat}
 where
 \begin{alignat}{1}
  e &\coloneqq \sum_{n=1}^N p_n \cdot \Lambda_n^\dagger(\Pi_n).
 \end{alignat}
 Also, $e$ is in $E$ since $0 \le \Tr(e \sigma) \le 1 ~(\forall \sigma \in \Den_A)$ holds.
 Thus, from the definition of $H(\rho,\Free)$ (i.e., Eq.~\eqref{eq:H}),
 the left-hand side of Eq.~\eqref{eq:H_eq_R} is greater than or equal to the right-hand side.
 Conversely, for any $e \in E$, considering the case in which $p_1 = 1$, $\Lambda_1 = \id_A$,
 and $\Pi_1 = e$, we obtain $\PS(\sigma,\Pi;\SetPL) = \Tr(e \sigma) ~(\forall \sigma \in \Den_A)$
 from Eq.~\eqref{eq:PS_rho_Pi},
 so the left-hand side of Eq.~\eqref{eq:H_eq_R} is less than or equal to the right-hand side.
 Therefore, Eq.~\eqref{eq:H_eq_R} holds.
\end{proof}

First, we discuss (a).
$G(\rho,\Free)$ is the upper bound on the ratio of the maximum average success probabilities
in channel discrimination problems when using the state $\rho$ compared to using the best free state.
In this way, $G(\rho,\Free)$ can be straightforwardly explained in operational terms.
Generally, a channel discrimination problem involves finding the maximum average success probability
given a set of prior probabilities and channels $\SetPL$.
Since the discrimination power focuses on the ratio of maximum average success probabilities,
it is intuitively understandable.
On the other hand, $H(\rho,\Free)$ considers the ratio of average success probabilities
when the same measurement $\Pi$ is used for discrimination,
comparing the given state to the best free state, and takes the supremum over $\SetPL$ and $\Pi$.
Understanding the meaning of this ratio in the context of channel discrimination problems
may not be straightforward.
Since Eq.~\eqref{eq:H_eq_R} can be easily derived, it might be more straightforward to
interpret $H(\rho,\Free)$ using the probabilities obtained when applying effects.

Next, we discuss (b).
It is not immediately obvious that Theorems~\ref{thm:MainConvex} and \ref{thm:ExLocal} hold.
In fact, the results corresponding to these theorems in specific resource theories have been derived
in Refs.~\cite{Bae-Chr-Pia-2019,Nap-Bro-Cia-Pia-2016,Bu-Sin-Fei-Pat-2017,Pia-Cia-Bro-Nap-2016,
Tak-Reg-Bu-Liu-2019}, but they have not been shown in a form applicable to
arbitrary resource theories.
On the other hand, it is easy to prove $H(\rho, \Free) = 1 + R_\Free(\rho)$
(which gives $H(\rho, \Free) = G(\rho, \Free)$ from Theorem~\ref{thm:MainConvex})
in convex resource theories.
Indeed, letting $x \coloneqq e / \sup_{\w \in \Free} \Tr(e \w)$
for each $e$ on the right-hand side of Eq.~\eqref{eq:H_eq_R} yields
\begin{alignat}{1}
 H(\rho,\Free) &= \sup_{x \in E'} \Tr(x \rho),
 \label{eq:H_eq_xrho}
\end{alignat}
where $E' \coloneqq \{ x \in \Pos_A \mid \Tr(x \w) \le 1 ~(\forall \w \in \Free)\}$.
The right-hand side is 1 plus the optimal value of Problem~\ref{prob:R_dual},
i.e., $1 + R_\Free(\rho)$, so $H(\rho, \Free) = 1 + R_\Free(\rho)$ holds.
Note that a similar proof can be found in Theorem~1 of Ref.~\cite{Tak-Reg-2019}.

To summarize, $H(\rho, \Free)$ is a measure that can be easily understood through
Lemma~\ref{lemma:H} as $H(\rho, \Free) = 1 + R_\Free(\rho)$ (if $\Free$ is convex).
However, when attempting to operationally interpret $H(\rho, \Free)$ based on
the channel discrimination problem, it may not be straightforward to understand.
Compared to $H(\rho, \Free)$, $G(\rho, \Free)$ (or $G_C(\rho, \Free)$) is easier to
operationally interpret based on the channel discrimination problem,
but proving Theorem~\ref{thm:MainConvex} (or Theorem~\ref{thm:ExLocal}) for them
is not as straightforward.

%\bibliographystyle{apsrev4-2}
%apsrev4-2.bst 2019-01-14 (MD) hand-edited version of apsrev4-1.bst
%Control: key (0)
%Control: author (8) initials jnrlst
%Control: editor formatted (1) identically to author
%Control: production of article title (0) allowed
%Control: page (0) single
%Control: year (1) truncated
%Control: production of eprint (0) enabled
%

%\bibliography{quant,quant_SM}

\begin{thebibliography}{62}%
\makeatletter
\providecommand \@ifxundefined [1]{%
 \@ifx{#1\undefined}
}%
\providecommand \@ifnum [1]{%
 \ifnum #1\expandafter \@firstoftwo
 \else \expandafter \@secondoftwo
 \fi
}%
\providecommand \@ifx [1]{%
 \ifx #1\expandafter \@firstoftwo
 \else \expandafter \@secondoftwo
 \fi
}%
\providecommand \natexlab [1]{#1}%
\providecommand \enquote  [1]{``#1''}%
\providecommand \bibnamefont  [1]{#1}%
\providecommand \bibfnamefont [1]{#1}%
\providecommand \citenamefont [1]{#1}%
\providecommand \href@noop [0]{\@secondoftwo}%
\providecommand \href [0]{\begingroup \@sanitize@url \@href}%
\providecommand \@href[1]{\@@startlink{#1}\@@href}%
\providecommand \@@href[1]{\endgroup#1\@@endlink}%
\providecommand \@sanitize@url [0]{\catcode `\\12\catcode `\$12\catcode
  `\&12\catcode `\#12\catcode `\^12\catcode `\_12\catcode `\%12\relax}%
\providecommand \@@startlink[1]{}%
\providecommand \@@endlink[0]{}%
\providecommand \url  [0]{\begingroup\@sanitize@url \@url }%
\providecommand \@url [1]{\endgroup\@href {#1}{\urlprefix }}%
\providecommand \urlprefix  [0]{URL }%
\providecommand \Eprint [0]{\href }%
\providecommand \doibase [0]{https://doi.org/}%
\providecommand \selectlanguage [0]{\@gobble}%
\providecommand \bibinfo  [0]{\@secondoftwo}%
\providecommand \bibfield  [0]{\@secondoftwo}%
\providecommand \translation [1]{[#1]}%
\providecommand \BibitemOpen [0]{}%
\providecommand \bibitemStop [0]{}%
\providecommand \bibitemNoStop [0]{.\EOS\space}%
\providecommand \EOS [0]{\spacefactor3000\relax}%
\providecommand \BibitemShut  [1]{\csname bibitem#1\endcsname}%
\let\auto@bib@innerbib\@empty
%</preamble>
\bibitem [{\citenamefont {Gisin}\ and\ \citenamefont
  {Thew}(2007)}]{Gis-The-2007}%
  \BibitemOpen
  \bibfield  {author} {\bibinfo {author} {\bibfnamefont {N.}~\bibnamefont
  {Gisin}}\ and\ \bibinfo {author} {\bibfnamefont {R.}~\bibnamefont {Thew}},\
  }\bibfield  {title} {\bibinfo {title} {Quantum communication},\ }\href
  {https://doi.org/10.1038/nphoton.2007.22} {\bibfield  {journal} {\bibinfo
  {journal} {Nat. Photonics}\ }\textbf {\bibinfo {volume} {1}},\ \bibinfo
  {pages} {165} (\bibinfo {year} {2007})}\BibitemShut {NoStop}%
\bibitem [{\citenamefont {Degen}\ \emph {et~al.}(2017)\citenamefont {Degen},
  \citenamefont {Reinhard},\ and\ \citenamefont
  {Cappellaro}}]{Deg-Rei-Cap-2017}%
  \BibitemOpen
  \bibfield  {author} {\bibinfo {author} {\bibfnamefont {C.~L.}\ \bibnamefont
  {Degen}}, \bibinfo {author} {\bibfnamefont {F.}~\bibnamefont {Reinhard}},\
  and\ \bibinfo {author} {\bibfnamefont {P.}~\bibnamefont {Cappellaro}},\
  }\bibfield  {title} {\bibinfo {title} {Quantum sensing},\ }\href
  {https://doi.org/10.1103/RevModPhys.89.035002} {\bibfield  {journal}
  {\bibinfo  {journal} {Rev. Mod. Phys.}\ }\textbf {\bibinfo {volume} {89}},\
  \bibinfo {pages} {035002} (\bibinfo {year} {2017})}\BibitemShut {NoStop}%
\bibitem [{\citenamefont {Gisin}\ \emph {et~al.}(2002)\citenamefont {Gisin},
  \citenamefont {Ribordy}, \citenamefont {Tittel},\ and\ \citenamefont
  {Zbinden}}]{Gis-Rib-Tit-Zbi-2002}%
  \BibitemOpen
  \bibfield  {author} {\bibinfo {author} {\bibfnamefont {N.}~\bibnamefont
  {Gisin}}, \bibinfo {author} {\bibfnamefont {G.}~\bibnamefont {Ribordy}},
  \bibinfo {author} {\bibfnamefont {W.}~\bibnamefont {Tittel}},\ and\ \bibinfo
  {author} {\bibfnamefont {H.}~\bibnamefont {Zbinden}},\ }\bibfield  {title}
  {\bibinfo {title} {Quantum cryptography},\ }\href
  {https://doi.org/10.1103/RevModPhys.74.145} {\bibfield  {journal} {\bibinfo
  {journal} {Rev. Mod. Phys.}\ }\textbf {\bibinfo {volume} {74}},\ \bibinfo
  {pages} {145} (\bibinfo {year} {2002})}\BibitemShut {NoStop}%
\bibitem [{\citenamefont {Acin}(2001)}]{Aci-2001}%
  \BibitemOpen
  \bibfield  {author} {\bibinfo {author} {\bibfnamefont {A.}~\bibnamefont
  {Acin}},\ }\bibfield  {title} {\bibinfo {title} {Statistical
  distinguishability between unitary operations},\ }\href
  {https://doi.org/10.1103/PhysRevLett.87.177901} {\bibfield  {journal}
  {\bibinfo  {journal} {Phys. Rev. Lett.}\ }\textbf {\bibinfo {volume} {87}},\
  \bibinfo {pages} {177901} (\bibinfo {year} {2001})}\BibitemShut {NoStop}%
\bibitem [{\citenamefont {Sacchi}(2005{\natexlab{a}})}]{Sac-2005}%
  \BibitemOpen
  \bibfield  {author} {\bibinfo {author} {\bibfnamefont {M.~F.}\ \bibnamefont
  {Sacchi}},\ }\bibfield  {title} {\bibinfo {title} {Optimal discrimination of
  quantum operations},\ }\href {https://doi.org/10.1103/PhysRevA.71.062340}
  {\bibfield  {journal} {\bibinfo  {journal} {Phys. Rev. A}\ }\textbf {\bibinfo
  {volume} {71}},\ \bibinfo {pages} {062340} (\bibinfo {year}
  {2005}{\natexlab{a}})}\BibitemShut {NoStop}%
\bibitem [{\citenamefont {Sacchi}(2005{\natexlab{b}})}]{Sac-2005-EB}%
  \BibitemOpen
  \bibfield  {author} {\bibinfo {author} {\bibfnamefont {M.~F.}\ \bibnamefont
  {Sacchi}},\ }\bibfield  {title} {\bibinfo {title} {Entanglement can enhance
  the distinguishability of entanglement-breaking channels},\ }\href
  {https://doi.org/10.1103/PhysRevA.72.014305} {\bibfield  {journal} {\bibinfo
  {journal} {Phys. Rev. A}\ }\textbf {\bibinfo {volume} {72}},\ \bibinfo
  {pages} {014305} (\bibinfo {year} {2005}{\natexlab{b}})}\BibitemShut
  {NoStop}%
\bibitem [{\citenamefont {Li}\ and\ \citenamefont {Qiu}(2008)}]{Li-Qiu-2008}%
  \BibitemOpen
  \bibfield  {author} {\bibinfo {author} {\bibfnamefont {L.}~\bibnamefont
  {Li}}\ and\ \bibinfo {author} {\bibfnamefont {D.}~\bibnamefont {Qiu}},\
  }\bibfield  {title} {\bibinfo {title} {Optimal discrimination between quantum
  operations},\ }\href {https://doi.org/10.1088/1751-8113/41/33/335302}
  {\bibfield  {journal} {\bibinfo  {journal} {J. Phys. A: Math. Theor.}\
  }\textbf {\bibinfo {volume} {41}},\ \bibinfo {pages} {335302} (\bibinfo
  {year} {2008})}\BibitemShut {NoStop}%
\bibitem [{\citenamefont {Matthews}\ \emph {et~al.}(2010)\citenamefont
  {Matthews}, \citenamefont {Piani},\ and\ \citenamefont
  {Watrous}}]{Mat-Pia-Wat-2010}%
  \BibitemOpen
  \bibfield  {author} {\bibinfo {author} {\bibfnamefont {W.}~\bibnamefont
  {Matthews}}, \bibinfo {author} {\bibfnamefont {M.}~\bibnamefont {Piani}},\
  and\ \bibinfo {author} {\bibfnamefont {J.}~\bibnamefont {Watrous}},\
  }\bibfield  {title} {\bibinfo {title} {Entanglement in channel discrimination
  with restricted measurements},\ }\href
  {https://doi.org/10.1103/PhysRevA.82.032302} {\bibfield  {journal} {\bibinfo
  {journal} {Phys. Rev. A}\ }\textbf {\bibinfo {volume} {82}},\ \bibinfo
  {pages} {032302} (\bibinfo {year} {2010})}\BibitemShut {NoStop}%
\bibitem [{\citenamefont {Chiribella}(2012)}]{Chi-2012}%
  \BibitemOpen
  \bibfield  {author} {\bibinfo {author} {\bibfnamefont {G.}~\bibnamefont
  {Chiribella}},\ }\bibfield  {title} {\bibinfo {title} {Optimal networks for
  quantum metrology: semidefinite programs and product rules},\ }\href
  {https://doi.org/10.1088/1367-2630/14/12/125008} {\bibfield  {journal}
  {\bibinfo  {journal} {New J. Phys.}\ }\textbf {\bibinfo {volume} {14}},\
  \bibinfo {pages} {125008} (\bibinfo {year} {2012})}\BibitemShut {NoStop}%
\bibitem [{\citenamefont {Sedl{\'a}k}\ and\ \citenamefont
  {Ziman}(2014)}]{Sed-Zim-2014}%
  \BibitemOpen
  \bibfield  {author} {\bibinfo {author} {\bibfnamefont {M.}~\bibnamefont
  {Sedl{\'a}k}}\ and\ \bibinfo {author} {\bibfnamefont {M.}~\bibnamefont
  {Ziman}},\ }\bibfield  {title} {\bibinfo {title} {Optimal single-shot
  strategies for discrimination of quantum measurements},\ }\href
  {https://doi.org/10.1103/PhysRevA.90.052312} {\bibfield  {journal} {\bibinfo
  {journal} {Phys. Rev. A}\ }\textbf {\bibinfo {volume} {90}},\ \bibinfo
  {pages} {052312} (\bibinfo {year} {2014})}\BibitemShut {NoStop}%
\bibitem [{\citenamefont {Pirandola}\ and\ \citenamefont
  {Lupo}(2017)}]{Pir-Lup-2017}%
  \BibitemOpen
  \bibfield  {author} {\bibinfo {author} {\bibfnamefont {S.}~\bibnamefont
  {Pirandola}}\ and\ \bibinfo {author} {\bibfnamefont {C.}~\bibnamefont
  {Lupo}},\ }\bibfield  {title} {\bibinfo {title} {Ultimate precision of
  adaptive noise estimation},\ }\href
  {https://doi.org/10.1103/PhysRevLett.118.100502} {\bibfield  {journal}
  {\bibinfo  {journal} {Phys. Rev. Lett.}\ }\textbf {\bibinfo {volume} {118}},\
  \bibinfo {pages} {100502} (\bibinfo {year} {2017})}\BibitemShut {NoStop}%
\bibitem [{\citenamefont {Pucha{\l}a}\ \emph {et~al.}(2018)\citenamefont
  {Pucha{\l}a}, \citenamefont {Pawela}, \citenamefont {Krawiec},\ and\
  \citenamefont {Kukulski}}]{Puc-Paw-Kra-Kuk-2018}%
  \BibitemOpen
  \bibfield  {author} {\bibinfo {author} {\bibfnamefont {Z.}~\bibnamefont
  {Pucha{\l}a}}, \bibinfo {author} {\bibfnamefont {{\L}.}~\bibnamefont
  {Pawela}}, \bibinfo {author} {\bibfnamefont {A.}~\bibnamefont {Krawiec}},\
  and\ \bibinfo {author} {\bibfnamefont {R.}~\bibnamefont {Kukulski}},\
  }\bibfield  {title} {\bibinfo {title} {Strategies for optimal single-shot
  discrimination of quantum measurements},\ }\href
  {https://doi.org/10.1103/PhysRevA.98.042103} {\bibfield  {journal} {\bibinfo
  {journal} {Phys. Rev. A}\ }\textbf {\bibinfo {volume} {98}},\ \bibinfo
  {pages} {042103} (\bibinfo {year} {2018})}\BibitemShut {NoStop}%
\bibitem [{\citenamefont {Pirandola}\ \emph {et~al.}(2019)\citenamefont
  {Pirandola}, \citenamefont {Laurenza}, \citenamefont {Lupo},\ and\
  \citenamefont {Pereira}}]{Pir-Lau-Lup-Per-2019}%
  \BibitemOpen
  \bibfield  {author} {\bibinfo {author} {\bibfnamefont {S.}~\bibnamefont
  {Pirandola}}, \bibinfo {author} {\bibfnamefont {R.}~\bibnamefont {Laurenza}},
  \bibinfo {author} {\bibfnamefont {C.}~\bibnamefont {Lupo}},\ and\ \bibinfo
  {author} {\bibfnamefont {J.~L.}\ \bibnamefont {Pereira}},\ }\bibfield
  {title} {\bibinfo {title} {Fundamental limits to quantum channel
  discrimination},\ }\href {https://doi.org/10.1038/s41534-019-0162-y}
  {\bibfield  {journal} {\bibinfo  {journal} {npj Quantum Inf.}\ }\textbf
  {\bibinfo {volume} {5}},\ \bibinfo {pages} {50} (\bibinfo {year}
  {2019})}\BibitemShut {NoStop}%
\bibitem [{\citenamefont {Nakahira}\ and\ \citenamefont
  {Kato}(2021)}]{Nak-Kat-2021}%
  \BibitemOpen
  \bibfield  {author} {\bibinfo {author} {\bibfnamefont {K.}~\bibnamefont
  {Nakahira}}\ and\ \bibinfo {author} {\bibfnamefont {K.}~\bibnamefont
  {Kato}},\ }\bibfield  {title} {\bibinfo {title} {Simple upper and lower
  bounds on the ultimate success probability for discriminating arbitrary
  finite-dimensional quantum processes},\ }\href
  {https://doi.org/10.1103/PhysRevLett.126.200502} {\bibfield  {journal}
  {\bibinfo  {journal} {Phys. Rev. Lett.}\ }\textbf {\bibinfo {volume} {126}},\
  \bibinfo {pages} {200502} (\bibinfo {year} {2021})}\BibitemShut {NoStop}%
\bibitem [{\citenamefont {Piani}\ and\ \citenamefont
  {Watrous}(2009)}]{Pia-Wat-2009}%
  \BibitemOpen
  \bibfield  {author} {\bibinfo {author} {\bibfnamefont {M.}~\bibnamefont
  {Piani}}\ and\ \bibinfo {author} {\bibfnamefont {J.}~\bibnamefont
  {Watrous}},\ }\bibfield  {title} {\bibinfo {title} {All entangled states are
  useful for channel discrimination},\ }\href
  {https://doi.org/10.1103/PhysRevLett.102.250501} {\bibfield  {journal}
  {\bibinfo  {journal} {Phys. Rev. Lett.}\ }\textbf {\bibinfo {volume} {102}},\
  \bibinfo {pages} {250501} (\bibinfo {year} {2009})}\BibitemShut {NoStop}%
\bibitem [{\citenamefont {Baumgratz}\ \emph {et~al.}(2014)\citenamefont
  {Baumgratz}, \citenamefont {Cramer},\ and\ \citenamefont
  {Plenio}}]{Bau-Cra-Ple-2014}%
  \BibitemOpen
  \bibfield  {author} {\bibinfo {author} {\bibfnamefont {T.}~\bibnamefont
  {Baumgratz}}, \bibinfo {author} {\bibfnamefont {M.}~\bibnamefont {Cramer}},\
  and\ \bibinfo {author} {\bibfnamefont {M.~B.}\ \bibnamefont {Plenio}},\
  }\bibfield  {title} {\bibinfo {title} {Quantifying coherence},\ }\href
  {https://doi.org/10.1103/PhysRevLett.113.140401} {\bibfield  {journal}
  {\bibinfo  {journal} {Phys. Rev. Lett.}\ }\textbf {\bibinfo {volume} {113}},\
  \bibinfo {pages} {140401} (\bibinfo {year} {2014})}\BibitemShut {NoStop}%
\bibitem [{\citenamefont {Winter}\ and\ \citenamefont
  {Yang}(2016)}]{Win-Yan-2016}%
  \BibitemOpen
  \bibfield  {author} {\bibinfo {author} {\bibfnamefont {A.}~\bibnamefont
  {Winter}}\ and\ \bibinfo {author} {\bibfnamefont {D.}~\bibnamefont {Yang}},\
  }\bibfield  {title} {\bibinfo {title} {Operational resource theory of
  coherence},\ }\href {https://doi.org/10.1103/PhysRevLett.116.120404}
  {\bibfield  {journal} {\bibinfo  {journal} {Phys. Rev. Lett.}\ }\textbf
  {\bibinfo {volume} {116}},\ \bibinfo {pages} {120404} (\bibinfo {year}
  {2016})}\BibitemShut {NoStop}%
\bibitem [{\citenamefont {Streltsov}\ \emph {et~al.}(2017)\citenamefont
  {Streltsov}, \citenamefont {Adesso},\ and\ \citenamefont
  {Plenio}}]{Str-Ade-Ple-2017}%
  \BibitemOpen
  \bibfield  {author} {\bibinfo {author} {\bibfnamefont {A.}~\bibnamefont
  {Streltsov}}, \bibinfo {author} {\bibfnamefont {G.}~\bibnamefont {Adesso}},\
  and\ \bibinfo {author} {\bibfnamefont {M.~B.}\ \bibnamefont {Plenio}},\
  }\bibfield  {title} {\bibinfo {title} {Colloquium: Quantum coherence as a
  resource},\ }\href {https://doi.org/10.1103/RevModPhys.89.041003} {\bibfield
  {journal} {\bibinfo  {journal} {Rev. Mod. Phys.}\ }\textbf {\bibinfo {volume}
  {89}},\ \bibinfo {pages} {041003} (\bibinfo {year} {2017})}\BibitemShut
  {NoStop}%
\bibitem [{\citenamefont {Theurer}\ \emph {et~al.}(2017)\citenamefont
  {Theurer}, \citenamefont {Killoran}, \citenamefont {Egloff},\ and\
  \citenamefont {Plenio}}]{The-Kil-Egl-Ple-2017}%
  \BibitemOpen
  \bibfield  {author} {\bibinfo {author} {\bibfnamefont {T.}~\bibnamefont
  {Theurer}}, \bibinfo {author} {\bibfnamefont {N.}~\bibnamefont {Killoran}},
  \bibinfo {author} {\bibfnamefont {D.}~\bibnamefont {Egloff}},\ and\ \bibinfo
  {author} {\bibfnamefont {M.~B.}\ \bibnamefont {Plenio}},\ }\bibfield  {title}
  {\bibinfo {title} {Resource theory of superposition},\ }\href
  {https://doi.org/10.1103/PhysRevLett.119.230401} {\bibfield  {journal}
  {\bibinfo  {journal} {Phys. Rev. Lett.}\ }\textbf {\bibinfo {volume} {119}},\
  \bibinfo {pages} {230401} (\bibinfo {year} {2017})}\BibitemShut {NoStop}%
\bibitem [{\citenamefont {Gour}\ and\ \citenamefont
  {Spekkens}(2008)}]{Gou-Spe-2008}%
  \BibitemOpen
  \bibfield  {author} {\bibinfo {author} {\bibfnamefont {G.}~\bibnamefont
  {Gour}}\ and\ \bibinfo {author} {\bibfnamefont {R.~W.}\ \bibnamefont
  {Spekkens}},\ }\bibfield  {title} {\bibinfo {title} {The resource theory of
  quantum reference frames: manipulations and monotones},\ }\href
  {https://doi.org/10.1088/1367-2630/10/3/033023} {\bibfield  {journal}
  {\bibinfo  {journal} {New J. Phys.}\ }\textbf {\bibinfo {volume} {10}},\
  \bibinfo {pages} {033023} (\bibinfo {year} {2008})}\BibitemShut {NoStop}%
\bibitem [{\citenamefont {Marvian}\ and\ \citenamefont
  {Spekkens}(2016)}]{Mar-Spe-2016}%
  \BibitemOpen
  \bibfield  {author} {\bibinfo {author} {\bibfnamefont {I.}~\bibnamefont
  {Marvian}}\ and\ \bibinfo {author} {\bibfnamefont {R.~W.}\ \bibnamefont
  {Spekkens}},\ }\bibfield  {title} {\bibinfo {title} {How to quantify
  coherence: Distinguishing speakable and unspeakable notions},\ }\href
  {https://doi.org/10.1103/PhysRevA.94.052324} {\bibfield  {journal} {\bibinfo
  {journal} {Phys. Rev. A}\ }\textbf {\bibinfo {volume} {94}},\ \bibinfo
  {pages} {052324} (\bibinfo {year} {2016})}\BibitemShut {NoStop}%
\bibitem [{\citenamefont {Veitch}\ \emph {et~al.}(2014)\citenamefont {Veitch},
  \citenamefont {Mousavian}, \citenamefont {Gottesman},\ and\ \citenamefont
  {Emerson}}]{Vei-Mou-Got-Eme-2014}%
  \BibitemOpen
  \bibfield  {author} {\bibinfo {author} {\bibfnamefont {V.}~\bibnamefont
  {Veitch}}, \bibinfo {author} {\bibfnamefont {S.~H.}\ \bibnamefont
  {Mousavian}}, \bibinfo {author} {\bibfnamefont {D.}~\bibnamefont
  {Gottesman}},\ and\ \bibinfo {author} {\bibfnamefont {J.}~\bibnamefont
  {Emerson}},\ }\bibfield  {title} {\bibinfo {title} {The resource theory of
  stabilizer quantum computation},\ }\href
  {https://doi.org/10.1088/1367-2630/16/1/013009} {\bibfield  {journal}
  {\bibinfo  {journal} {New J. Phys.}\ }\textbf {\bibinfo {volume} {16}},\
  \bibinfo {pages} {013009} (\bibinfo {year} {2014})}\BibitemShut {NoStop}%
\bibitem [{\citenamefont {Howard}\ and\ \citenamefont
  {Campbell}(2017)}]{How-Cam-2017}%
  \BibitemOpen
  \bibfield  {author} {\bibinfo {author} {\bibfnamefont {M.}~\bibnamefont
  {Howard}}\ and\ \bibinfo {author} {\bibfnamefont {E.}~\bibnamefont
  {Campbell}},\ }\bibfield  {title} {\bibinfo {title} {Application of a
  resource theory for magic states to fault-tolerant quantum computing},\
  }\href {https://doi.org/10.1103/PhysRevLett.118.090501} {\bibfield  {journal}
  {\bibinfo  {journal} {Phys. Rev. Lett.}\ }\textbf {\bibinfo {volume} {118}},\
  \bibinfo {pages} {090501} (\bibinfo {year} {2017})}\BibitemShut {NoStop}%
\bibitem [{\citenamefont {Genoni}\ \emph {et~al.}(2008)\citenamefont {Genoni},
  \citenamefont {Paris},\ and\ \citenamefont {Banaszek}}]{Gen-Par-Ban-2008}%
  \BibitemOpen
  \bibfield  {author} {\bibinfo {author} {\bibfnamefont {M.~G.}\ \bibnamefont
  {Genoni}}, \bibinfo {author} {\bibfnamefont {M.~G.}\ \bibnamefont {Paris}},\
  and\ \bibinfo {author} {\bibfnamefont {K.}~\bibnamefont {Banaszek}},\
  }\bibfield  {title} {\bibinfo {title} {Quantifying the non-gaussian character
  of a quantum state by quantum relative entropy},\ }\href
  {https://doi.org/10.1103/PhysRevA.78.060303} {\bibfield  {journal} {\bibinfo
  {journal} {Phys. Rev. A}\ }\textbf {\bibinfo {volume} {78}},\ \bibinfo
  {pages} {060303} (\bibinfo {year} {2008})}\BibitemShut {NoStop}%
\bibitem [{\citenamefont {Takagi}\ and\ \citenamefont
  {Zhuang}(2018)}]{Tak-Zhu-2018}%
  \BibitemOpen
  \bibfield  {author} {\bibinfo {author} {\bibfnamefont {R.}~\bibnamefont
  {Takagi}}\ and\ \bibinfo {author} {\bibfnamefont {Q.}~\bibnamefont
  {Zhuang}},\ }\bibfield  {title} {\bibinfo {title} {Convex resource theory of
  non-gaussianity},\ }\href {https://doi.org/10.1103/PhysRevA.97.062337}
  {\bibfield  {journal} {\bibinfo  {journal} {Phys. Rev. A}\ }\textbf {\bibinfo
  {volume} {97}},\ \bibinfo {pages} {062337} (\bibinfo {year}
  {2018})}\BibitemShut {NoStop}%
\bibitem [{\citenamefont {Albarelli}\ \emph {et~al.}(2018)\citenamefont
  {Albarelli}, \citenamefont {Genoni}, \citenamefont {Paris},\ and\
  \citenamefont {Ferraro}}]{Alb-Gen-Par-Fer-2018}%
  \BibitemOpen
  \bibfield  {author} {\bibinfo {author} {\bibfnamefont {F.}~\bibnamefont
  {Albarelli}}, \bibinfo {author} {\bibfnamefont {M.~G.}\ \bibnamefont
  {Genoni}}, \bibinfo {author} {\bibfnamefont {M.~G.}\ \bibnamefont {Paris}},\
  and\ \bibinfo {author} {\bibfnamefont {A.}~\bibnamefont {Ferraro}},\
  }\bibfield  {title} {\bibinfo {title} {Resource theory of quantum
  non-gaussianity and wigner negativity},\ }\href
  {https://doi.org/10.1103/PhysRevA.98.052350} {\bibfield  {journal} {\bibinfo
  {journal} {Phys. Rev. A}\ }\textbf {\bibinfo {volume} {98}},\ \bibinfo
  {pages} {052350} (\bibinfo {year} {2018})}\BibitemShut {NoStop}%
\bibitem [{\citenamefont {Wakakuwa}(2017)}]{Wak-2017}%
  \BibitemOpen
  \bibfield  {author} {\bibinfo {author} {\bibfnamefont {E.}~\bibnamefont
  {Wakakuwa}},\ }\bibfield  {title} {\bibinfo {title} {Operational resource
  theory of non-markovianity},\ }\href {https://arxiv.org/abs/1709.07248}
  {\bibfield  {journal} {\bibinfo  {journal} {arXiv preprint arXiv:1709.07248}\
  } (\bibinfo {year} {2017})}\BibitemShut {NoStop}%
\bibitem [{\citenamefont {Bhattacharya}\ \emph {et~al.}(2020)\citenamefont
  {Bhattacharya}, \citenamefont {Bhattacharya},\ and\ \citenamefont
  {Majumdar}}]{Bha-Bha-Maj-2020}%
  \BibitemOpen
  \bibfield  {author} {\bibinfo {author} {\bibfnamefont {S.}~\bibnamefont
  {Bhattacharya}}, \bibinfo {author} {\bibfnamefont {B.}~\bibnamefont
  {Bhattacharya}},\ and\ \bibinfo {author} {\bibfnamefont {A.~S.}\ \bibnamefont
  {Majumdar}},\ }\bibfield  {title} {\bibinfo {title} {Convex resource theory
  of non-markovianity},\ }\href {https://doi.org/10.1088/1751-8121/abd191}
  {\bibfield  {journal} {\bibinfo  {journal} {J. Phys. A: Math. Theor.}\
  }\textbf {\bibinfo {volume} {54}},\ \bibinfo {pages} {035302} (\bibinfo
  {year} {2020})}\BibitemShut {NoStop}%
\bibitem [{\citenamefont {Brandao}\ \emph {et~al.}(2013)\citenamefont
  {Brandao}, \citenamefont {Horodecki}, \citenamefont {Oppenheim},
  \citenamefont {Renes},\ and\ \citenamefont
  {Spekkens}}]{Bra-Hor-Opp-Ren-2013}%
  \BibitemOpen
  \bibfield  {author} {\bibinfo {author} {\bibfnamefont {F.~G.}\ \bibnamefont
  {Brandao}}, \bibinfo {author} {\bibfnamefont {M.}~\bibnamefont {Horodecki}},
  \bibinfo {author} {\bibfnamefont {J.}~\bibnamefont {Oppenheim}}, \bibinfo
  {author} {\bibfnamefont {J.~M.}\ \bibnamefont {Renes}},\ and\ \bibinfo
  {author} {\bibfnamefont {R.~W.}\ \bibnamefont {Spekkens}},\ }\bibfield
  {title} {\bibinfo {title} {Resource theory of quantum states out of thermal
  equilibrium},\ }\href {https://doi.org/10.1103/PhysRevLett.111.250404}
  {\bibfield  {journal} {\bibinfo  {journal} {Phys. Rev. Lett.}\ }\textbf
  {\bibinfo {volume} {111}},\ \bibinfo {pages} {250404} (\bibinfo {year}
  {2013})}\BibitemShut {NoStop}%
\bibitem [{\citenamefont {Gour}\ \emph {et~al.}(2015)\citenamefont {Gour},
  \citenamefont {M{\"u}ller}, \citenamefont {Narasimhachar}, \citenamefont
  {Spekkens},\ and\ \citenamefont {Halpern}}]{Gou-Mul-Nar-Spe-2015}%
  \BibitemOpen
  \bibfield  {author} {\bibinfo {author} {\bibfnamefont {G.}~\bibnamefont
  {Gour}}, \bibinfo {author} {\bibfnamefont {M.~P.}\ \bibnamefont
  {M{\"u}ller}}, \bibinfo {author} {\bibfnamefont {V.}~\bibnamefont
  {Narasimhachar}}, \bibinfo {author} {\bibfnamefont {R.~W.}\ \bibnamefont
  {Spekkens}},\ and\ \bibinfo {author} {\bibfnamefont {N.~Y.}\ \bibnamefont
  {Halpern}},\ }\bibfield  {title} {\bibinfo {title} {The resource theory of
  informational nonequilibrium in thermodynamics},\ }\href
  {https://doi.org/10.1016/j.physrep.2015.04.003} {\bibfield  {journal}
  {\bibinfo  {journal} {Phys. Rep.}\ }\textbf {\bibinfo {volume} {583}},\
  \bibinfo {pages} {1} (\bibinfo {year} {2015})}\BibitemShut {NoStop}%
\bibitem [{\citenamefont {Wu}\ \emph {et~al.}(2021)\citenamefont {Wu},
  \citenamefont {Kondra}, \citenamefont {Rana}, \citenamefont {Scandolo},
  \citenamefont {Xiang}, \citenamefont {Li}, \citenamefont {Guo},\ and\
  \citenamefont {Streltsov}}]{Wu-Kon-Ran-Sca-2021}%
  \BibitemOpen
  \bibfield  {author} {\bibinfo {author} {\bibfnamefont {K.-D.}\ \bibnamefont
  {Wu}}, \bibinfo {author} {\bibfnamefont {T.~V.}\ \bibnamefont {Kondra}},
  \bibinfo {author} {\bibfnamefont {S.}~\bibnamefont {Rana}}, \bibinfo {author}
  {\bibfnamefont {C.~M.}\ \bibnamefont {Scandolo}}, \bibinfo {author}
  {\bibfnamefont {G.-Y.}\ \bibnamefont {Xiang}}, \bibinfo {author}
  {\bibfnamefont {C.-F.}\ \bibnamefont {Li}}, \bibinfo {author} {\bibfnamefont
  {G.-C.}\ \bibnamefont {Guo}},\ and\ \bibinfo {author} {\bibfnamefont
  {A.}~\bibnamefont {Streltsov}},\ }\bibfield  {title} {\bibinfo {title}
  {Operational resource theory of imaginarity},\ }\href
  {https://doi.org/10.1103/PhysRevLett.126.090401} {\bibfield  {journal}
  {\bibinfo  {journal} {Phys. Rev. Lett.}\ }\textbf {\bibinfo {volume} {126}},\
  \bibinfo {pages} {090401} (\bibinfo {year} {2021})}\BibitemShut {NoStop}%
\bibitem [{\citenamefont {Xue}\ \emph {et~al.}(2021)\citenamefont {Xue},
  \citenamefont {Guo}, \citenamefont {Li}, \citenamefont {Ye},\ and\
  \citenamefont {Li}}]{Xue-Guo-Li-Ye-2021}%
  \BibitemOpen
  \bibfield  {author} {\bibinfo {author} {\bibfnamefont {S.}~\bibnamefont
  {Xue}}, \bibinfo {author} {\bibfnamefont {J.}~\bibnamefont {Guo}}, \bibinfo
  {author} {\bibfnamefont {P.}~\bibnamefont {Li}}, \bibinfo {author}
  {\bibfnamefont {M.}~\bibnamefont {Ye}},\ and\ \bibinfo {author}
  {\bibfnamefont {Y.}~\bibnamefont {Li}},\ }\bibfield  {title} {\bibinfo
  {title} {Quantification of resource theory of imaginarity},\ }\href
  {https://doi.org/10.1007/s11128-021-03324-5} {\bibfield  {journal} {\bibinfo
  {journal} {Quant. Inf. Proc.}\ }\textbf {\bibinfo {volume} {20}},\ \bibinfo
  {pages} {1} (\bibinfo {year} {2021})}\BibitemShut {NoStop}%
\bibitem [{\citenamefont {Carmeli}\ \emph {et~al.}(2018)\citenamefont
  {Carmeli}, \citenamefont {Heinosaari},\ and\ \citenamefont
  {Toigo}}]{Car-Hei-Toi-2018}%
  \BibitemOpen
  \bibfield  {author} {\bibinfo {author} {\bibfnamefont {C.}~\bibnamefont
  {Carmeli}}, \bibinfo {author} {\bibfnamefont {T.}~\bibnamefont
  {Heinosaari}},\ and\ \bibinfo {author} {\bibfnamefont {A.}~\bibnamefont
  {Toigo}},\ }\bibfield  {title} {\bibinfo {title} {State discrimination with
  postmeasurement information and incompatibility of quantum measurements},\
  }\href {https://doi.org/10.1103/PhysRevA.98.012126} {\bibfield  {journal}
  {\bibinfo  {journal} {Phys. Rev. A}\ }\textbf {\bibinfo {volume} {98}},\
  \bibinfo {pages} {012126} (\bibinfo {year} {2018})}\BibitemShut {NoStop}%
\bibitem [{\citenamefont {Carmeli}\ \emph {et~al.}(2019)\citenamefont
  {Carmeli}, \citenamefont {Heinosaari},\ and\ \citenamefont
  {Toigo}}]{Car-Hei-Toi-2019}%
  \BibitemOpen
  \bibfield  {author} {\bibinfo {author} {\bibfnamefont {C.}~\bibnamefont
  {Carmeli}}, \bibinfo {author} {\bibfnamefont {T.}~\bibnamefont
  {Heinosaari}},\ and\ \bibinfo {author} {\bibfnamefont {A.}~\bibnamefont
  {Toigo}},\ }\bibfield  {title} {\bibinfo {title} {Quantum incompatibility
  witnesses},\ }\href {https://doi.org/10.1103/PhysRevLett.122.130402}
  {\bibfield  {journal} {\bibinfo  {journal} {Phys. Rev. Lett.}\ }\textbf
  {\bibinfo {volume} {122}},\ \bibinfo {pages} {130402} (\bibinfo {year}
  {2019})}\BibitemShut {NoStop}%
\bibitem [{\citenamefont {Skrzypczyk}\ \emph {et~al.}(2019)\citenamefont
  {Skrzypczyk}, \citenamefont {{\v{S}}upi{\'c}},\ and\ \citenamefont
  {Cavalcanti}}]{Skr-Sup-Cav-2019}%
  \BibitemOpen
  \bibfield  {author} {\bibinfo {author} {\bibfnamefont {P.}~\bibnamefont
  {Skrzypczyk}}, \bibinfo {author} {\bibfnamefont {I.}~\bibnamefont
  {{\v{S}}upi{\'c}}},\ and\ \bibinfo {author} {\bibfnamefont {D.}~\bibnamefont
  {Cavalcanti}},\ }\bibfield  {title} {\bibinfo {title} {All sets of
  incompatible measurements give an advantage in quantum state
  discrimination},\ }\href {https://doi.org/10.1103/PhysRevLett.122.130403}
  {\bibfield  {journal} {\bibinfo  {journal} {Phys. Rev. Lett.}\ }\textbf
  {\bibinfo {volume} {122}},\ \bibinfo {pages} {130403} (\bibinfo {year}
  {2019})}\BibitemShut {NoStop}%
\bibitem [{\citenamefont {Uola}\ \emph {et~al.}(2019)\citenamefont {Uola},
  \citenamefont {Kraft}, \citenamefont {Shang}, \citenamefont {Yu},\ and\
  \citenamefont {G{\"u}hne}}]{Uol-Kra-Sha-Yu-2019}%
  \BibitemOpen
  \bibfield  {author} {\bibinfo {author} {\bibfnamefont {R.}~\bibnamefont
  {Uola}}, \bibinfo {author} {\bibfnamefont {T.}~\bibnamefont {Kraft}},
  \bibinfo {author} {\bibfnamefont {J.}~\bibnamefont {Shang}}, \bibinfo
  {author} {\bibfnamefont {X.-D.}\ \bibnamefont {Yu}},\ and\ \bibinfo {author}
  {\bibfnamefont {O.}~\bibnamefont {G{\"u}hne}},\ }\bibfield  {title} {\bibinfo
  {title} {Quantifying quantum resources with conic programming},\ }\href
  {https://doi.org/10.1103/PhysRevLett.122.130404} {\bibfield  {journal}
  {\bibinfo  {journal} {Phys. Rev. Lett.}\ }\textbf {\bibinfo {volume} {122}},\
  \bibinfo {pages} {130404} (\bibinfo {year} {2019})}\BibitemShut {NoStop}%
\bibitem [{\citenamefont {Skrzypczyk}\ and\ \citenamefont
  {Linden}(2019)}]{Skr-Lin-2019}%
  \BibitemOpen
  \bibfield  {author} {\bibinfo {author} {\bibfnamefont {P.}~\bibnamefont
  {Skrzypczyk}}\ and\ \bibinfo {author} {\bibfnamefont {N.}~\bibnamefont
  {Linden}},\ }\bibfield  {title} {\bibinfo {title} {Robustness of measurement,
  discrimination games, and accessible information},\ }\href
  {https://doi.org/10.1103/PhysRevLett.122.140403} {\bibfield  {journal}
  {\bibinfo  {journal} {Phys. Rev. Lett.}\ }\textbf {\bibinfo {volume} {122}},\
  \bibinfo {pages} {140403} (\bibinfo {year} {2019})}\BibitemShut {NoStop}%
\bibitem [{\citenamefont {Li}\ \emph {et~al.}(2020)\citenamefont {Li},
  \citenamefont {Bu},\ and\ \citenamefont {Liu}}]{Li-Bu-Liu-2020}%
  \BibitemOpen
  \bibfield  {author} {\bibinfo {author} {\bibfnamefont {L.}~\bibnamefont
  {Li}}, \bibinfo {author} {\bibfnamefont {K.}~\bibnamefont {Bu}},\ and\
  \bibinfo {author} {\bibfnamefont {Z.-W.}\ \bibnamefont {Liu}},\ }\bibfield
  {title} {\bibinfo {title} {Quantifying the resource content of quantum
  channels: An operational approach},\ }\href
  {https://doi.org/10.1103/PhysRevA.101.022335} {\bibfield  {journal} {\bibinfo
   {journal} {Phys. Rev. A}\ }\textbf {\bibinfo {volume} {101}},\ \bibinfo
  {pages} {022335} (\bibinfo {year} {2020})}\BibitemShut {NoStop}%
\bibitem [{\citenamefont {Takagi}\ \emph {et~al.}(2020)\citenamefont {Takagi},
  \citenamefont {Wang},\ and\ \citenamefont {Hayashi}}]{Tak-Wan-Hay-2020}%
  \BibitemOpen
  \bibfield  {author} {\bibinfo {author} {\bibfnamefont {R.}~\bibnamefont
  {Takagi}}, \bibinfo {author} {\bibfnamefont {K.}~\bibnamefont {Wang}},\ and\
  \bibinfo {author} {\bibfnamefont {M.}~\bibnamefont {Hayashi}},\ }\bibfield
  {title} {\bibinfo {title} {Application of the resource theory of channels to
  communication scenarios},\ }\href
  {https://doi.org/10.1103/PhysRevLett.124.120502} {\bibfield  {journal}
  {\bibinfo  {journal} {Phys. Rev. Lett.}\ }\textbf {\bibinfo {volume} {124}},\
  \bibinfo {pages} {120502} (\bibinfo {year} {2020})}\BibitemShut {NoStop}%
\bibitem [{\citenamefont {Hsieh}\ \emph {et~al.}(2025)\citenamefont {Hsieh},
  \citenamefont {Stratton}, \citenamefont {Wu},\ and\ \citenamefont
  {Ku}}]{Hsi-Str-Wu-Ku-2025}%
  \BibitemOpen
  \bibfield  {author} {\bibinfo {author} {\bibfnamefont {C.-Y.}\ \bibnamefont
  {Hsieh}}, \bibinfo {author} {\bibfnamefont {B.}~\bibnamefont {Stratton}},
  \bibinfo {author} {\bibfnamefont {C.-H.}\ \bibnamefont {Wu}},\ and\ \bibinfo
  {author} {\bibfnamefont {H.-Y.}\ \bibnamefont {Ku}},\ }\bibfield  {title}
  {\bibinfo {title} {Dynamical resource theory of incompatibility
  preservability},\ }\href {https://doi.org/10.1103/PhysRevA.111.022422}
  {\bibfield  {journal} {\bibinfo  {journal} {Phys. Rev. A}\ }\textbf {\bibinfo
  {volume} {111}},\ \bibinfo {pages} {022422} (\bibinfo {year}
  {2025})}\BibitemShut {NoStop}%
\bibitem [{\citenamefont {Takagi}\ \emph {et~al.}(2019)\citenamefont {Takagi},
  \citenamefont {Regula}, \citenamefont {Bu}, \citenamefont {Liu},\ and\
  \citenamefont {Adesso}}]{Tak-Reg-Bu-Liu-2019}%
  \BibitemOpen
  \bibfield  {author} {\bibinfo {author} {\bibfnamefont {R.}~\bibnamefont
  {Takagi}}, \bibinfo {author} {\bibfnamefont {B.}~\bibnamefont {Regula}},
  \bibinfo {author} {\bibfnamefont {K.}~\bibnamefont {Bu}}, \bibinfo {author}
  {\bibfnamefont {Z.-W.}\ \bibnamefont {Liu}},\ and\ \bibinfo {author}
  {\bibfnamefont {G.}~\bibnamefont {Adesso}},\ }\bibfield  {title} {\bibinfo
  {title} {Operational advantage of quantum resources in subchannel
  discrimination},\ }\href {https://doi.org/10.1103/PhysRevLett.122.140402}
  {\bibfield  {journal} {\bibinfo  {journal} {Phys. Rev. Lett.}\ }\textbf
  {\bibinfo {volume} {122}},\ \bibinfo {pages} {140402} (\bibinfo {year}
  {2019})}\BibitemShut {NoStop}%
\bibitem [{\citenamefont {Bae}\ \emph {et~al.}(2019)\citenamefont {Bae},
  \citenamefont {Chru{\'s}ci{\'n}ski},\ and\ \citenamefont
  {Piani}}]{Bae-Chr-Pia-2019}%
  \BibitemOpen
  \bibfield  {author} {\bibinfo {author} {\bibfnamefont {J.}~\bibnamefont
  {Bae}}, \bibinfo {author} {\bibfnamefont {D.}~\bibnamefont
  {Chru{\'s}ci{\'n}ski}},\ and\ \bibinfo {author} {\bibfnamefont
  {M.}~\bibnamefont {Piani}},\ }\bibfield  {title} {\bibinfo {title} {More
  entanglement implies higher performance in channel discrimination tasks},\
  }\href {https://doi.org/10.1103/PhysRevLett.122.140404} {\bibfield  {journal}
  {\bibinfo  {journal} {Phys. Rev. Lett.}\ }\textbf {\bibinfo {volume} {122}},\
  \bibinfo {pages} {140404} (\bibinfo {year} {2019})}\BibitemShut {NoStop}%
\bibitem [{\citenamefont {Piani}\ and\ \citenamefont
  {Watrous}(2015)}]{Pia-Wat-2015}%
  \BibitemOpen
  \bibfield  {author} {\bibinfo {author} {\bibfnamefont {M.}~\bibnamefont
  {Piani}}\ and\ \bibinfo {author} {\bibfnamefont {J.}~\bibnamefont
  {Watrous}},\ }\bibfield  {title} {\bibinfo {title} {Necessary and sufficient
  quantum information characterization of einstein-podolsky-rosen steering},\
  }\href {https://doi.org/10.1103/PhysRevLett.114.060404} {\bibfield  {journal}
  {\bibinfo  {journal} {Phys. Rev. Lett.}\ }\textbf {\bibinfo {volume} {114}},\
  \bibinfo {pages} {060404} (\bibinfo {year} {2015})}\BibitemShut {NoStop}%
\bibitem [{\citenamefont {Napoli}\ \emph {et~al.}(2016)\citenamefont {Napoli},
  \citenamefont {Bromley}, \citenamefont {Cianciaruso}, \citenamefont {Piani},
  \citenamefont {Johnston},\ and\ \citenamefont
  {Adesso}}]{Nap-Bro-Cia-Pia-2016}%
  \BibitemOpen
  \bibfield  {author} {\bibinfo {author} {\bibfnamefont {C.}~\bibnamefont
  {Napoli}}, \bibinfo {author} {\bibfnamefont {T.~R.}\ \bibnamefont {Bromley}},
  \bibinfo {author} {\bibfnamefont {M.}~\bibnamefont {Cianciaruso}}, \bibinfo
  {author} {\bibfnamefont {M.}~\bibnamefont {Piani}}, \bibinfo {author}
  {\bibfnamefont {N.}~\bibnamefont {Johnston}},\ and\ \bibinfo {author}
  {\bibfnamefont {G.}~\bibnamefont {Adesso}},\ }\bibfield  {title} {\bibinfo
  {title} {Robustness of coherence: an operational and observable measure of
  quantum coherence},\ }\href {https://doi.org/10.1103/PhysRevLett.116.150502}
  {\bibfield  {journal} {\bibinfo  {journal} {Phys. Rev. Lett.}\ }\textbf
  {\bibinfo {volume} {116}},\ \bibinfo {pages} {150502} (\bibinfo {year}
  {2016})}\BibitemShut {NoStop}%
\bibitem [{\citenamefont {Bu}\ \emph {et~al.}(2017)\citenamefont {Bu},
  \citenamefont {Singh}, \citenamefont {Fei}, \citenamefont {Pati},\ and\
  \citenamefont {Wu}}]{Bu-Sin-Fei-Pat-2017}%
  \BibitemOpen
  \bibfield  {author} {\bibinfo {author} {\bibfnamefont {K.}~\bibnamefont
  {Bu}}, \bibinfo {author} {\bibfnamefont {U.}~\bibnamefont {Singh}}, \bibinfo
  {author} {\bibfnamefont {S.-M.}\ \bibnamefont {Fei}}, \bibinfo {author}
  {\bibfnamefont {A.~K.}\ \bibnamefont {Pati}},\ and\ \bibinfo {author}
  {\bibfnamefont {J.}~\bibnamefont {Wu}},\ }\bibfield  {title} {\bibinfo
  {title} {Maximum relative entropy of coherence: an operational coherence
  measure},\ }\href {https://doi.org/10.1103/PhysRevLett.119.150405} {\bibfield
   {journal} {\bibinfo  {journal} {Phys. Rev. Lett.}\ }\textbf {\bibinfo
  {volume} {119}},\ \bibinfo {pages} {150405} (\bibinfo {year}
  {2017})}\BibitemShut {NoStop}%
\bibitem [{\citenamefont {Piani}\ \emph {et~al.}(2016)\citenamefont {Piani},
  \citenamefont {Cianciaruso}, \citenamefont {Bromley}, \citenamefont {Napoli},
  \citenamefont {Johnston},\ and\ \citenamefont
  {Adesso}}]{Pia-Cia-Bro-Nap-2016}%
  \BibitemOpen
  \bibfield  {author} {\bibinfo {author} {\bibfnamefont {M.}~\bibnamefont
  {Piani}}, \bibinfo {author} {\bibfnamefont {M.}~\bibnamefont {Cianciaruso}},
  \bibinfo {author} {\bibfnamefont {T.~R.}\ \bibnamefont {Bromley}}, \bibinfo
  {author} {\bibfnamefont {C.}~\bibnamefont {Napoli}}, \bibinfo {author}
  {\bibfnamefont {N.}~\bibnamefont {Johnston}},\ and\ \bibinfo {author}
  {\bibfnamefont {G.}~\bibnamefont {Adesso}},\ }\bibfield  {title} {\bibinfo
  {title} {Robustness of asymmetry and coherence of quantum states},\ }\href
  {https://doi.org/10.1103/PhysRevA.93.042107} {\bibfield  {journal} {\bibinfo
  {journal} {Phys. Rev. A}\ }\textbf {\bibinfo {volume} {93}},\ \bibinfo
  {pages} {042107} (\bibinfo {year} {2016})}\BibitemShut {NoStop}%
\bibitem [{\citenamefont {Takagi}\ and\ \citenamefont
  {Regula}(2019)}]{Tak-Reg-2019}%
  \BibitemOpen
  \bibfield  {author} {\bibinfo {author} {\bibfnamefont {R.}~\bibnamefont
  {Takagi}}\ and\ \bibinfo {author} {\bibfnamefont {B.}~\bibnamefont
  {Regula}},\ }\bibfield  {title} {\bibinfo {title} {General resource theories
  in quantum mechanics and beyond: operational characterization via
  discrimination tasks},\ }\href {https://doi.org/10.1103/PhysRevX.9.031053}
  {\bibfield  {journal} {\bibinfo  {journal} {Phys. Rev. X}\ }\textbf {\bibinfo
  {volume} {9}},\ \bibinfo {pages} {031053} (\bibinfo {year}
  {2019})}\BibitemShut {NoStop}%
\bibitem [{\citenamefont {Watrous}(2018)}]{Wat-2018}%
  \BibitemOpen
  \bibfield  {author} {\bibinfo {author} {\bibfnamefont {J.}~\bibnamefont
  {Watrous}},\ }\href {https://doi.org/10.1017/9781316848142} {\emph {\bibinfo
  {title} {The Theory of Quantum Information}}}\ (\bibinfo  {publisher}
  {Cambridge University Press},\ \bibinfo {address} {Cambridge},\ \bibinfo
  {year} {2018})\BibitemShut {NoStop}%
\bibitem [{\citenamefont {Regula}(2017)}]{Reg-2017}%
  \BibitemOpen
  \bibfield  {author} {\bibinfo {author} {\bibfnamefont {B.}~\bibnamefont
  {Regula}},\ }\bibfield  {title} {\bibinfo {title} {Convex geometry of quantum
  resource quantification},\ }\href {https://doi.org/10.1088/1751-8121/aa9100}
  {\bibfield  {journal} {\bibinfo  {journal} {J. Phys. A: Math. Theor.}\
  }\textbf {\bibinfo {volume} {51}},\ \bibinfo {pages} {045303} (\bibinfo
  {year} {2017})}\BibitemShut {NoStop}%
\bibitem [{\citenamefont {Regula}\ \emph {et~al.}(2021)\citenamefont {Regula},
  \citenamefont {Lami}, \citenamefont {Ferrari},\ and\ \citenamefont
  {Takagi}}]{Reg-Lam-Fer-Tak-2021}%
  \BibitemOpen
  \bibfield  {author} {\bibinfo {author} {\bibfnamefont {B.}~\bibnamefont
  {Regula}}, \bibinfo {author} {\bibfnamefont {L.}~\bibnamefont {Lami}},
  \bibinfo {author} {\bibfnamefont {G.}~\bibnamefont {Ferrari}},\ and\ \bibinfo
  {author} {\bibfnamefont {R.}~\bibnamefont {Takagi}},\ }\bibfield  {title}
  {\bibinfo {title} {Operational quantification of continuous-variable quantum
  resources},\ }\href {https://doi.org/10.1103/PhysRevLett.126.110403}
  {\bibfield  {journal} {\bibinfo  {journal} {Phys. Rev. Lett.}\ }\textbf
  {\bibinfo {volume} {126}},\ \bibinfo {pages} {110403} (\bibinfo {year}
  {2021})}\BibitemShut {NoStop}%
\bibitem [{\citenamefont {Kuroiwa}\ \emph {et~al.}(2024)\citenamefont
  {Kuroiwa}, \citenamefont {Takagi}, \citenamefont {Adesso},\ and\
  \citenamefont {Yamasaki}}]{Kur-Tak-Ade-Yam-2024}%
  \BibitemOpen
  \bibfield  {author} {\bibinfo {author} {\bibfnamefont {K.}~\bibnamefont
  {Kuroiwa}}, \bibinfo {author} {\bibfnamefont {R.}~\bibnamefont {Takagi}},
  \bibinfo {author} {\bibfnamefont {G.}~\bibnamefont {Adesso}},\ and\ \bibinfo
  {author} {\bibfnamefont {H.}~\bibnamefont {Yamasaki}},\ }\bibfield  {title}
  {\bibinfo {title} {Every quantum helps: Operational advantage of quantum
  resources beyond convexity},\ }\href
  {https://doi.org/10.1103/PhysRevLett.132.150201} {\bibfield  {journal}
  {\bibinfo  {journal} {Phys. Rev. Lett.}\ }\textbf {\bibinfo {volume} {132}},\
  \bibinfo {pages} {150201} (\bibinfo {year} {2024})}\BibitemShut {NoStop}%
\bibitem [{\citenamefont {Turner}\ \emph {et~al.}(2024)\citenamefont {Turner},
  \citenamefont {Guta},\ and\ \citenamefont {Adesso}}]{Tur-Gut-Ade-2024}%
  \BibitemOpen
  \bibfield  {author} {\bibinfo {author} {\bibfnamefont {L.}~\bibnamefont
  {Turner}}, \bibinfo {author} {\bibfnamefont {M.}~\bibnamefont {Guta}},\ and\
  \bibinfo {author} {\bibfnamefont {G.}~\bibnamefont {Adesso}},\ }\bibfield
  {title} {\bibinfo {title} {All non-gaussian states are advantageous for
  channel discrimination: Robustness of non-convex continuous variable quantum
  resources},\ }\href {https://doi.org/10.48550/arXiv.2412.13011} {\bibfield
  {journal} {\bibinfo  {journal} {arXiv preprint arXiv:2412.13011}\ } (\bibinfo
  {year} {2024})}\BibitemShut {NoStop}%
\bibitem [{\citenamefont {Lami}\ \emph {et~al.}(2021)\citenamefont {Lami},
  \citenamefont {Regula}, \citenamefont {Takagi},\ and\ \citenamefont
  {Ferrari}}]{Lam-Reg-Tak-Fer-2021}%
  \BibitemOpen
  \bibfield  {author} {\bibinfo {author} {\bibfnamefont {L.}~\bibnamefont
  {Lami}}, \bibinfo {author} {\bibfnamefont {B.}~\bibnamefont {Regula}},
  \bibinfo {author} {\bibfnamefont {R.}~\bibnamefont {Takagi}},\ and\ \bibinfo
  {author} {\bibfnamefont {G.}~\bibnamefont {Ferrari}},\ }\bibfield  {title}
  {\bibinfo {title} {Framework for resource quantification in
  infinite-dimensional general probabilistic theories},\ }\href
  {https://doi.org/10.1103/PhysRevA.103.032424} {\bibfield  {journal} {\bibinfo
   {journal} {Phys. Rev. A}\ }\textbf {\bibinfo {volume} {103}},\ \bibinfo
  {pages} {032424} (\bibinfo {year} {2021})}\BibitemShut {NoStop}%
\bibitem [{\citenamefont {Terhal}\ and\ \citenamefont
  {Horodecki}(2000)}]{Ter-Hor-2000}%
  \BibitemOpen
  \bibfield  {author} {\bibinfo {author} {\bibfnamefont {B.~M.}\ \bibnamefont
  {Terhal}}\ and\ \bibinfo {author} {\bibfnamefont {P.}~\bibnamefont
  {Horodecki}},\ }\bibfield  {title} {\bibinfo {title} {Schmidt number for
  density matrices},\ }\href {https://doi.org/10.1103/PhysRevA.61.040301}
  {\bibfield  {journal} {\bibinfo  {journal} {Phys. Rev. A}\ }\textbf {\bibinfo
  {volume} {61}},\ \bibinfo {pages} {040301} (\bibinfo {year}
  {2000})}\BibitemShut {NoStop}%
\bibitem [{\citenamefont {Bennett}\ \emph {et~al.}(1996)\citenamefont
  {Bennett}, \citenamefont {DiVincenzo}, \citenamefont {Smolin},\ and\
  \citenamefont {Wootters}}]{Ben-Div-Smo-Woo-1996}%
  \BibitemOpen
  \bibfield  {author} {\bibinfo {author} {\bibfnamefont {C.~H.}\ \bibnamefont
  {Bennett}}, \bibinfo {author} {\bibfnamefont {D.~P.}\ \bibnamefont
  {DiVincenzo}}, \bibinfo {author} {\bibfnamefont {J.~A.}\ \bibnamefont
  {Smolin}},\ and\ \bibinfo {author} {\bibfnamefont {W.~K.}\ \bibnamefont
  {Wootters}},\ }\bibfield  {title} {\bibinfo {title} {Mixed-state entanglement
  and quantum error correction},\ }\href
  {https://doi.org/10.1103/PhysRevA.54.3824} {\bibfield  {journal} {\bibinfo
  {journal} {Phys. Rev. A}\ }\textbf {\bibinfo {volume} {54}},\ \bibinfo
  {pages} {3824} (\bibinfo {year} {1996})}\BibitemShut {NoStop}%
\bibitem [{\citenamefont {Vidal}\ and\ \citenamefont
  {Werner}(2002)}]{Vid-Wer-2002}%
  \BibitemOpen
  \bibfield  {author} {\bibinfo {author} {\bibfnamefont {G.}~\bibnamefont
  {Vidal}}\ and\ \bibinfo {author} {\bibfnamefont {R.~F.}\ \bibnamefont
  {Werner}},\ }\bibfield  {title} {\bibinfo {title} {Computable measure of
  entanglement},\ }\href {https://doi.org/10.1103/PhysRevA.65.032314}
  {\bibfield  {journal} {\bibinfo  {journal} {Phys. Rev. A}\ }\textbf {\bibinfo
  {volume} {65}},\ \bibinfo {pages} {032314} (\bibinfo {year}
  {2002})}\BibitemShut {NoStop}%
\bibitem [{\citenamefont {Plenio}(2005)}]{Ple-2005}%
  \BibitemOpen
  \bibfield  {author} {\bibinfo {author} {\bibfnamefont {M.~B.}\ \bibnamefont
  {Plenio}},\ }\bibfield  {title} {\bibinfo {title} {Logarithmic negativity: a
  full entanglement monotone that is not convex},\ }\href
  {https://doi.org/10.1103/PhysRevLett.95.090503} {\bibfield  {journal}
  {\bibinfo  {journal} {Phys. Rev. Lett.}\ }\textbf {\bibinfo {volume} {95}},\
  \bibinfo {pages} {090503} (\bibinfo {year} {2005})}\BibitemShut {NoStop}%
\bibitem [{\citenamefont {Hill}\ and\ \citenamefont
  {Wootters}(1997)}]{Hil-Woo-1997}%
  \BibitemOpen
  \bibfield  {author} {\bibinfo {author} {\bibfnamefont {S.~A.}\ \bibnamefont
  {Hill}}\ and\ \bibinfo {author} {\bibfnamefont {W.~K.}\ \bibnamefont
  {Wootters}},\ }\bibfield  {title} {\bibinfo {title} {Entanglement of a pair
  of quantum bits},\ }\href {https://doi.org/10.1103/PhysRevLett.78.5022}
  {\bibfield  {journal} {\bibinfo  {journal} {Phys. Rev. Lett.}\ }\textbf
  {\bibinfo {volume} {78}},\ \bibinfo {pages} {5022} (\bibinfo {year}
  {1997})}\BibitemShut {NoStop}%
\bibitem [{\citenamefont {Wiseman}\ \emph {et~al.}(2007)\citenamefont
  {Wiseman}, \citenamefont {Jones},\ and\ \citenamefont
  {Doherty}}]{Wis-Jon-Doh-2007}%
  \BibitemOpen
  \bibfield  {author} {\bibinfo {author} {\bibfnamefont {H.~M.}\ \bibnamefont
  {Wiseman}}, \bibinfo {author} {\bibfnamefont {S.~J.}\ \bibnamefont {Jones}},\
  and\ \bibinfo {author} {\bibfnamefont {A.~C.}\ \bibnamefont {Doherty}},\
  }\bibfield  {title} {\bibinfo {title} {Steering, entanglement, nonlocality,
  and the einstein-podolsky-rosen paradox},\ }\href
  {https://doi.org/10.1103/PhysRevLett.98.140402} {\bibfield  {journal}
  {\bibinfo  {journal} {Phys. Rev. Lett.}\ }\textbf {\bibinfo {volume} {98}},\
  \bibinfo {pages} {140402} (\bibinfo {year} {2007})}\BibitemShut {NoStop}%
\bibitem [{\citenamefont {Bell}(1964)}]{Bell-1964}%
  \BibitemOpen
  \bibfield  {author} {\bibinfo {author} {\bibfnamefont {J.~S.}\ \bibnamefont
  {Bell}},\ }\bibfield  {title} {\bibinfo {title} {On the einstein podolsky
  rosen paradox},\ }\href {https://doi.org/10.1103/PhysicsPhysiqueFizika.1.195}
  {\bibfield  {journal} {\bibinfo  {journal} {Physics (Long Island City, NY)}\
  }\textbf {\bibinfo {volume} {1}},\ \bibinfo {pages} {195} (\bibinfo {year}
  {1964})}\BibitemShut {NoStop}%
\bibitem [{\citenamefont {Gallego}\ and\ \citenamefont
  {Aolita}(2015)}]{Gal-Aol-2015}%
  \BibitemOpen
  \bibfield  {author} {\bibinfo {author} {\bibfnamefont {R.}~\bibnamefont
  {Gallego}}\ and\ \bibinfo {author} {\bibfnamefont {L.}~\bibnamefont
  {Aolita}},\ }\bibfield  {title} {\bibinfo {title} {Resource theory of
  steering},\ }\href {https://doi.org/10.1103/PhysRevX.5.041008} {\bibfield
  {journal} {\bibinfo  {journal} {Phys. Rev. X}\ }\textbf {\bibinfo {volume}
  {5}},\ \bibinfo {pages} {041008} (\bibinfo {year} {2015})}\BibitemShut
  {NoStop}%
\bibitem [{\citenamefont {Pramanik}\ \emph {et~al.}(2019)\citenamefont
  {Pramanik}, \citenamefont {Cho}, \citenamefont {Han}, \citenamefont {Lee},
  \citenamefont {Kim},\ and\ \citenamefont {Moon}}]{Pra-Cho-Han-Lee-2019}%
  \BibitemOpen
  \bibfield  {author} {\bibinfo {author} {\bibfnamefont {T.}~\bibnamefont
  {Pramanik}}, \bibinfo {author} {\bibfnamefont {Y.-W.}\ \bibnamefont {Cho}},
  \bibinfo {author} {\bibfnamefont {S.-W.}\ \bibnamefont {Han}}, \bibinfo
  {author} {\bibfnamefont {S.-Y.}\ \bibnamefont {Lee}}, \bibinfo {author}
  {\bibfnamefont {Y.-S.}\ \bibnamefont {Kim}},\ and\ \bibinfo {author}
  {\bibfnamefont {S.}~\bibnamefont {Moon}},\ }\bibfield  {title} {\bibinfo
  {title} {Revealing hidden quantum steerability using local filtering
  operations},\ }\href {https://doi.org/10.1103/PhysRevA.99.030101} {\bibfield
  {journal} {\bibinfo  {journal} {Phys. Rev. A}\ }\textbf {\bibinfo {volume}
  {99}},\ \bibinfo {pages} {030101} (\bibinfo {year} {2019})}\BibitemShut
  {NoStop}%
\end{thebibliography}

\end{document}